\documentclass[sigconf]{acmart}
\usepackage{booktabs} 
\usepackage{graphicx,array}
\usepackage{balance}
\usepackage{color,soul,xspace}
\usepackage{times}
\usepackage{hyperref}

\usepackage{algorithmic}
\usepackage[ruled,vlined,linesnumbered]{algorithm2e}
\usepackage{comment}
\usepackage{epsfig}
\usepackage{subfig}
\usepackage{mathrsfs, amsmath,mdwlist,enumitem}
\DeclareMathOperator*{\argmax}{arg\,max}
\newtheorem{example}{\textbf{Example}}
\newtheorem{thm}{Theorem}
\newtheorem{observation}{Observation}

\newtheorem{definition}{Definition}
\newtheorem{problem}{Problem}

\setlist{nolistsep,leftmargin=*}

\setcopyright{rightsretained}

\acmDOI{10.475/123_4}

\acmISBN{123-4567-24-567/08/06}

\acmConference[WOODSTOCK'97]{ACM Woodstock conference}{July 1997}{Texas USA}
\acmYear{1997}
\copyrightyear{2016}

\acmArticle{4}
\acmPrice{15.00}

\editor{Jennifer B. Sartor}
\editor{Theo D'Hondt}
\editor{Wolfgang De Meuter}

\begin{document}
\title{Influence Minimization Under Budget and Matroid Constraints: Extended Version}
\author{ Sourav Medya}
\affiliation{%
  \institution{University of California, Santa Barbara}
}
\email{medya@cs.ucsb.edu}

\author{Arlei Silva}
\affiliation{%
  \institution{University of California, Santa Barbara}
   \state{}
}
\email{arlei@cs.ucsb.edu}

\author{Ambuj Singh}
\affiliation{%
\institution{University of California, Santa Barbara}
}
\email{ambuj@cs.ucsb.edu}

\begin{abstract}


Recently, online social networks have become major battlegrounds for political campaigns, viral marketing, and the dissemination of news. As a consequence, ''bad actors'' are increasingly exploiting these platforms, becoming a key challenge for their administrators, businesses and the society in general. The spread of fake news is a classical example of the abuse of social networks by these actors. While some have advocated for stricter policies to control the spread of misinformation in social networks, this often happens in detriment of their democratic and organic structure. In this paper we study how to limit the influence of a target set of users in a network via the removal of a few edges. The idea is to control the diffusion processes while minimizing the amount of disturbance in the network structure.

We formulate the influence limitation problem in a data-driven fashion, by taking into account past propagation traces. Moreover, we consider two types of constraints over the set of edge removals, a budget constraint and also a, more general, set of matroid constraints. These problems lead to interesting challenges in terms of algorithm design. For instance, we are able to show that influence limitation is APX-hard and propose deterministic and probabilistic approximation algorithms for the budgeted and matroid version of the problem, respectively. Our experiments show that the proposed solutions outperform the baselines by up to 40\%.

\end{abstract}

%
%

\keywords{Influence Minimization, Network Modification, Matroid}

\maketitle

\section{Introduction}


Online social networks, such as Facebook and Twitter, were popularized mostly as platforms for sharing entertaining content and maintaining friendship and family ties. However, they have been quickly transformed into major battlegrounds for political campaigns, viral marketing, and the dissemination of news. With this shift, the increase in the number of ``bad actors'', such as tyrannical governments, spammers, hackers, bots, and bullies exploiting these platforms has become a key challenge not only for their administrators but for businesses and society in general. 

A classical example of the abuse of social networks is the spread of fake news. 
As a concrete example, Starbucks\footnote{\url{http://uk.businessinsider.com/fake-news-starbucks-free-coffee-to-undocumented-immigrants-2017-8}} recently was the victim of a hoax claiming that it would give free coffee to undocumented immigrants~\cite{Tschiatschek2018}.
Earlier, Twitter had a vast number of threat reports with inaccurate locations where riots would take place across the UK. People were terrified as false reports of riots in their local neighborhoods broke on social media \cite{bogunovic2012robust}. A fundamental question is: \textit{How can one (e.g., Starbucks, governments) limit the spread of misinformation in social networks? } 

A questionable approach to control the diffusion of misinformation in social platforms is via stricter laws and regulations by governments. This control often happens in detriment of the democratic and organic structure that are central to these platforms. Instead, a more sensible approach is to limit the impact of bad actors in the network while minimizing the disruption of its structure. In this paper we formalize this general problem as \textit{the influence minimization problem}. In particular, we focus on a setting where the network is modified via the removal of a few edges. These modifications might be implemented by social network administrators or induced by other organizations or governments via advertising campaigns.


The problem of controlling influence spread via structural changes in a network has attracted recent interest from the research community \cite{Tong2012GML,kuhlman2013blocking, Khalil2014}. However, existing work assumes that diffusion follows classical models from the literature---e.g., Independent Cascade, Linear Threshold, and Susceptible Infected Recovered. These models are hard to validate at large-scale while also requiring computationally-intensive simulations in order to evaluate the effect of modifications. Instead, we propose a data-driven approach for influence minimization based on propagation traces \cite{goyal2011}. More specifically, our modifications are based on historical data, which makes our solutions less dependent on a particular diffusion model. 

Another important aspect of the influence minimization problem considered in this work is the type of constraint imposed on the amount of modification allowed in the network. 
The influence limitation (minimization) problems are often studied under budget constraints \cite{Tong2012GML,kuhlman2013blocking}, where a fixed number of edges can be blocked in the network. One of the main advantages of this type of formulation is that the associated objective function is often monotone and submodular, enabling the design of a simple greedy algorithm that achieves good approximation guarantees \cite{Khalil2014}. On the other hand, budget constraints have undesired effects in many settings. For instance, they might disconnect or disproportionately affect particular sub-networks. Besides disturbing the network structure, such effects are in conflict with important modern issues, such as algorithmic fairness \cite{aziz2018knowledge}. We address this issue by studying the influence limitation problem not only under a budget constraint but also under matroid constraints \cite{nemhauser1978,chekuri2004maximum}. Our formulations provide an interesting comparison between these constraints and showcase the expressive power of matroids for problems defined over networks.

The main goal of this paper is to show how the formalization of the influence limitation problem under budget and matroid constraints leads to interesting challenges in terms of algorithm design. Different from the budget version, for which we propose a simple greedy algorithm, the matroid version requires a more sophisticated solution via continuous relaxation and rounding. Yet, we provide a theoretical analysis of the performance of both algorithms that is supported by the fact that the objective function of the influence limitation problem is submodular. Moreover, we provide strong inapproximability results for both versions of the problem.

The major contributions of this paper are summarized as follows:
\begin{itemize}
\item We investigate a novel and relevant problem in social networks, the data-driven influence minimization by edge removal. 
\item We study our general problem under both budget and matroid constraints, discussing how these affect algorithmic design. 

\item We show that influence limitation is APX-hard and propose deterministic and probabilistic constant-factor approximation solutions for the budgeted and matroid versions of the problem, respectively.

\item We evaluate the proposed techniques using several synthetic and real datasets. The results show that our methods outperform the baseline solutions by up to $40\%$ while scaling to large graphs.

\end{itemize}

\section{Influence Limitation}
\label{sec:problem_}

We start with a description of Credit Distribution Model and formulate the influence limitation problems in Section \ref{sec:prob_def}.

\subsection{Credit Distribution Model}

The Credit Distribution Model (CDM) \cite{goyal2011} estimates user influence directly from propagation traces. Its main advantages compared to classical influence models (e.g. Independent Cascade and Linear Threshold \cite{kempe2003maximizing,leskovec2007cost}) is that it does not depend on computationally intensive simulations while also relying less on the strong assumptions made by such models. Our algorithms apply CDM to compute user influence, and thus we briefly describe the model in this section.

Let $G(V,E)$ be a directed social graph and $\mathscr{L}(User, Action, Time)$ be an action log, where a tuple $(u,a,t)$ indicates that user $u$ has performed action $a$ at time $t$. Action $a \in \mathscr{A}$ propagates from node $u$ to node $v$ iff $u$ and $v$ are linked in social graph and $u$ performed action $a$ before $v$. This process defines a \textit{propagation graph} (or an \textit{action graph}) of $a$ as a directed graph $G(a)=(V(a),E(a))$ which is a DAG. The action log $\mathscr{L}$ is thus a set of DAGs representing different actions' propagation traces.
For a particular action $a$, a potential influencer of a node or user can be any of its in-neighbours. We denote $N_{in}(u,a)=\{v|(v,u) \in E(a)\}$ as the set of potential influencers of $u$ for action $a$ and $d_{in}(u,a)=|N_{in}(u,a)|$. When a user $u$ performs action $a$, the \textit{direct influence credit}, denoted by $\gamma_{(v,u)}(a)$, is given to all $v\in N_{in}(u,a)$. Intuitively the CDM distributes the influence credit backwards in the propagation graph $G(a)$ such that not only $u$ gives credit to neighbours, but also in turn the neighbours pass on the credit to their predecessors. The total credit, $\Gamma_{v,u}(a)$ given to a user $v$ for influencing $u$ via action $a$ corresponds to multiple paths from $v$ to $u$ in the propagation graph $G(a)$:
\begin{equation}
\label{eq:inf_v_u}
    \Gamma_{v,u}(a)= \sum_{w \in N_{in}(u,a)} \Gamma_{v,w}(a).\gamma_{(w,u)}(a)
\end{equation}
Similarly, one can define the credit for a set of nodes $X$,
\begin{equation*}
\Gamma_{X,u}(a)=\begin{cases}
               1 \quad \quad \quad \quad \quad \quad \quad \quad \quad \quad \quad \quad \quad if\; u \in X\\
               \sum_{w \in N_{in}(u,a)} \Gamma_{X,w}(a).\gamma_{(w,u)}(a) \; \quad otherwise
               \end{cases}
\end{equation*}
By normalizing the total credit over all actions $\mathscr{A}_u$ by a node $u$:
\begin{equation} \label{eq:set_k}
    \kappa_{X,u}=\frac{1}{|\mathscr{A}_u|}\sum_{a\in \mathscr{A}_u}\Gamma_{X,u}(a)
\end{equation}

The total influence $\sigma_{cd}(X)$ is the credit given to $X$ by all vertices:

\begin{equation} \label{eq:cd}
    \sigma_{cd}(G,X)=\sum_{u \in V} \kappa_{X,u}
\end{equation}

\begin{example}
 Figures~\ref{fig:social_ex1} and \ref{fig:dag_ex1} show a social graph $G$ and one propagation graph $G(a)$ for action $a$, respectively. For $G(a)$, $\sigma_{cd}(G,X)=\sum_{u\in V}\Gamma_{X,u}$ given any target set $X$. Consider the following example where $X=\{w,v\}$. In the propagation graph (Fig. \ref{fig:dag_ex1}), $\sigma_{cd}(G,X)= \Gamma_{X,s}+\Gamma_{X,t}+\Gamma_{X,v}+ \Gamma_{X,w}+\Gamma_{X,x}+\Gamma_{X,u}+\Gamma_{X,y}= 0+0+1+1+.5+(.2*.5+.2+.2+.3*1)+1=4.21$. 
 \end{example} 

 Notice that we have assigned the values of $\gamma_{(u',v')}$ arbitrarily. In practice, we compute influence probabilities ($\gamma$) using well-known techniques in \cite{goyal2010learning}. Our theoretical results do not depend on the particular scheme used to compute $\gamma$.

\begin{table} [t]
\centering
\small
\begin{tabular}{| c | c |}
\hline
\textbf{Symbols} & \textbf{Definitions and Descriptions}\\
\hline
$G(V,E)$ & Given graph (vertex set $V$ and edge set $E$)\\
\hline
$X$ & Target set of source nodes\\
\hline
$C$ &The set of candidate edges\\
\hline
$k$ & Budget for BIL\\
\hline
$G(a)=(V(a),E(a))$ & Action/propagation graph (DAG) for action $a$\\
\hline
$\Gamma_{v,u}(a)$ & Credit of node $v$ for influencing $u$ in $G(a)$\\
\hline
$\Gamma_{X,u}(a)$ & Credit given to set $X$ for influencing $u$ in $G(a)$\\
\hline
$\gamma_e(a)=\gamma_{(v,u)}(a)$ & Direct credit for $v$ to influence $u$ via $e=(v,u)$\\
\hline
 $u\overrightarrow{a}v$ & It implies there is a path from $u$ to $v$ in $G(a)$ \\
 \hline
$b$ & Maximum $\#$edges removed from a node in ILM\\
\hline
 $\vec{y}$ &  The vector with edge membership probabilities\\
 \hline
\end{tabular}
\caption {Frequently used symbols}\label{tab:table_symbol}
\end{table}


 \begin{figure}[t]
\small
    \centering
    \subfloat[Social Graph, $G$]{\includegraphics[width=0.15\textwidth]{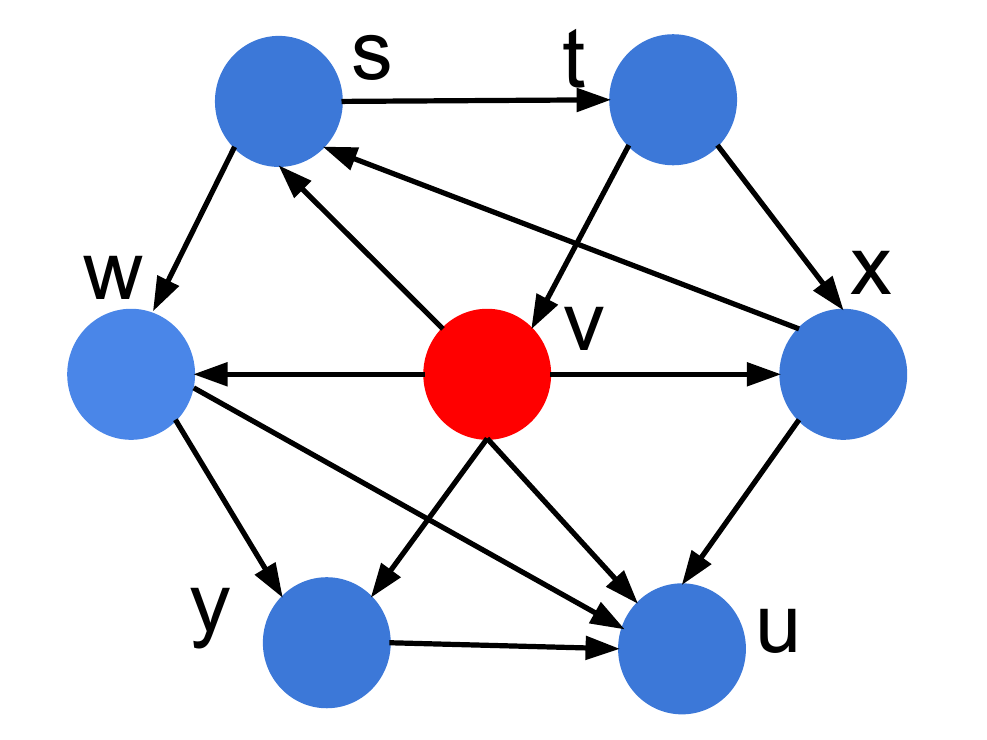}\label{fig:social_ex1}}
    \subfloat[Action Graph, $G (a)$]{\includegraphics[width=0.15\textwidth]{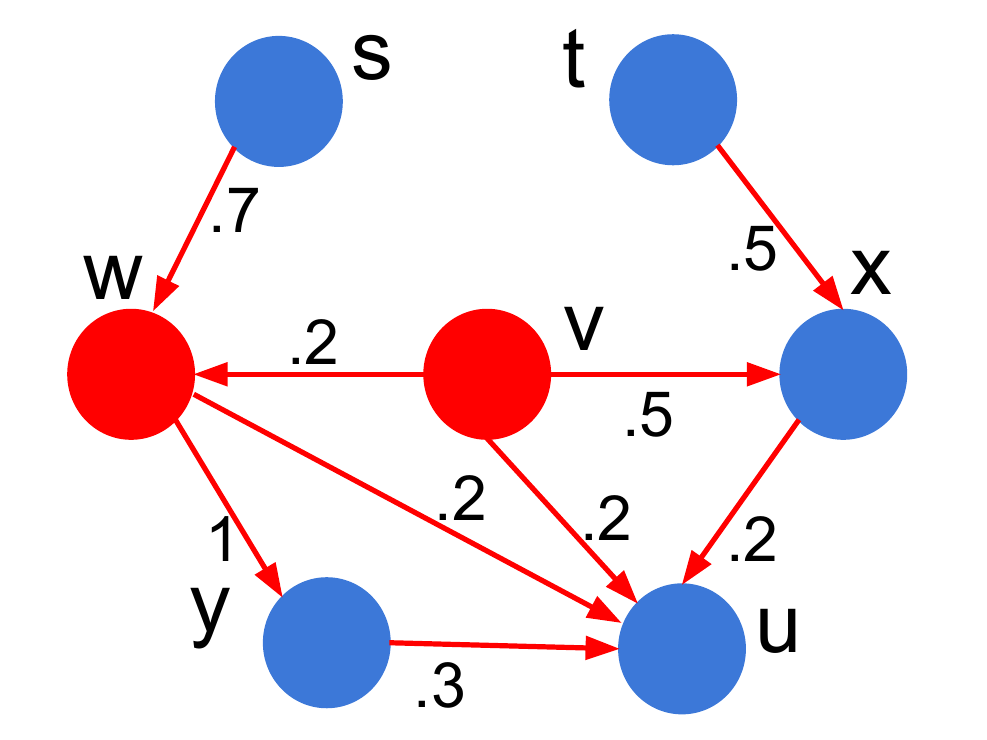}\label{fig:dag_ex1}}
    \subfloat[Modified DAG $G^m (a)$]{\includegraphics[width=0.15\textwidth]{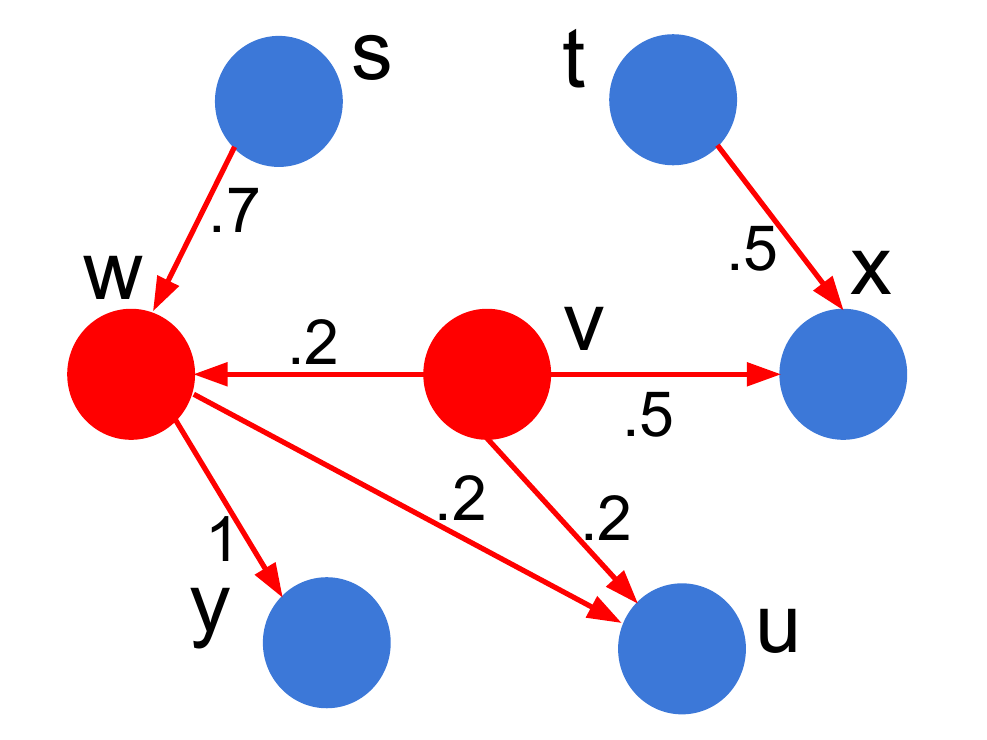}\label{fig:edited_dag_ex1}}
    \caption{Illustrative example of a social graph and CDM with the corresponding credits over the edges. \label{fig:example1}}
\end{figure}
 
 \subsection{Problem Definitions}
 \label{sec:prob_def}

We study influence minimization in two different settings. The first is budget constrained optimization, where a limit on the number of edges to be modified is set as a parameter. The second setting takes into account a more general class of constraints that can be expressed using the notion of an independent set. 


Our goal is to remove a few edges $B\subset E$ such that the influence of a target set of users $X$ is minimized according to the CDM. Given a target user $v$ and an arbitrary user $u$, the credit of $v$ for influencing $u$ in $G(a)$ is computed based on Equation \ref{eq:inf_v_u}. Consider $P(v,u)$ to be the set of paths from $v$ to $u$ where each path $p=\{e_1,e_2,...,e_t\}$ is such that $e_1=(v,v')$, $e_t=(u',u)$, and $e_i\in E(a)$ for all $i$ and $u',v' \in V(a) - \{v,u\}$. We use $\gamma_{(w',w)}(a)$ or $\gamma_{e}(a)$ to represent the credit exclusively via edge $e=(w',w)$ for influencing $w$ in $G(a)$. Therefore, Equation \ref{eq:inf_v_u} can be written as: 
\begin{equation}
    \Gamma_{v,u}(a)=\sum_{p\in P(v,u)} \prod_{e\in p}\gamma_{e}(a)
    \label{eqn::gamma_v_u}
\end{equation}

A similar expression can be defined for a target set $X$:
\begin{equation}
    \Gamma_{X,u}(a)=\sum_{p\in P(X,u)} \prod_{e\in p}\gamma_{e}(a)
    \label{eqn::gamma_x_u}
\end{equation}
where $P(X,u)$ contains only the minimal paths from $v \in X$ to $u$---i.e. $\nexists p_i, p_j \in P(X,u)$ such that $p_i \subseteq p_j$.


We apply Equation \ref{eqn::gamma_x_u} to quantify the change in credit for a target set of nodes $X$ and a particular action $a$ after the removal of edge $e$ according to the credit distribution model:

\begin{equation}
    \delta_a(\{e\})=\sum_{w\in V}\Big(\Gamma_{X,w}(a)-\sum_{\substack{p \in P(X,w)\\e \notin p}}\prod_{e'\in p}\gamma_{e'}(a)\Big)
    \label{eqn::delta_a_e}
\end{equation}

An edge deletion potentially blocks a few paths from $v$ to the remaining users, reducing its credit (or influence). We use $G^m=(V,E-B)$ and $G^m(a)$ to denote the graph and the propagation graph for action $a$ after the removal of edges in $B$, respectively. The following sections, introduce the budget and matroid constrained versions of the influence limitation problem.

\subsubsection{Budgeted Influence Limitation (BIL)} 

We formalize the \textit{Budgeted Influence Limitation (BIL)} problem as follows.

\begin{problem} \label{prob:BIL}
\textbf{Budgeted Influence Limitation (BIL):} 
Given a directed graph $G(V,E)$, an action log $\mathscr{L}$, a candidate set of edges $C$, a given seed set $X$, and an integer $k < |C|$, find a set  $B \subset C \subset E$ of $k$ edges such that $\sigma_{cd}(G^m,X)$ is minimized or, $\Delta(B)= \sigma_{cd}(G,X)-\sigma_{cd}(G^m,X) $ is maximized where $G^m=(V,E\setminus B)$.
\end{problem}

\begin{example}
Consider the example in Figure \ref{fig:example1}, assuming the candidate set $C= \{(t,x), (y,u), (x,u) \}$, $k=2$, and $X=\{w,v\}$. From Example 1, $\sigma_{cd}(G,X)= 4.21$ in the unmodified graph and the deletion of $(t,x) \in C$ will not change the influence of $X$. On the other hand, the removal of $(y,u)$ and $(x,u)$ (Fig. \ref{fig:edited_dag_ex1}) will make $\sigma_{cd}(G^m(a),X)=\Gamma_{X,v}+\Gamma_{X,w}+\Gamma_{X,x}+\Gamma_{X,u}+\Gamma_{X,y}=1+1+ 0.5+(0.2+0.2)+1=3.9$. 

\end{example} 

We show that our problem is NP-hard.

\begin{thm}

\label{thm:hardness} The BIL problem is NP-hard.
\end{thm}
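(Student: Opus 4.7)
The plan is to prove NP-hardness through a polynomial-time reduction from the classical Max-$k$-Coverage problem (equivalently Set Cover), in the style of the Kempe--Kleinberg--Tardos reduction for influence maximization. Given sets $S_1,\ldots,S_m$ over a universe $U=\{u_1,\ldots,u_n\}$ with budget $k$, I would construct a BIL instance with a single seed $s$ (so $X=\{s\}$), one ``set'' node $y_i$ per $S_i$ connected from $s$ by a selector edge $s\to y_i$ placed in the candidate set $C$, and one ``element'' node $z_j$ per $u_j$ with an edge $y_i\to z_j$ whenever $u_j\in S_i$. The BIL budget would be the same $k$, and removing a selector edge $(s,y_i)$ would correspond to ``selecting'' $S_i$ in the coverage instance.

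Because the CDM is path-additive, a naive version of this reduction degenerates into the polynomial problem of picking the $k$ largest $|S_i|$, so a key step is designing the action log to encode the 0/1 coverage semantics. I would introduce, for each pair $(u_j,S_i)$ with $u_j\in S_i$, a separate action $a_{j,i}$ whose propagation DAG consists of the single path $s\to y_i\to z_j$ with $\gamma\equiv 1$. After the normalization in Equation \ref{eq:set_k}, the credit $\kappa_{X,z_j}$ equals the fraction of $u_j$'s covering sets whose selector edge survives in $G^m$, and attaching a large polynomial number of downstream ``amplifier'' copies of $z_j$ makes $\Delta(B)$ dominated by the number of elements whose covering selectors have been touched by the removed edges. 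A threshold argument would then show that $\Delta(B)\ge L$ for an explicitly computed $L$ if and only if the removed-index set $I=\{i:(s,y_i)\in B\}$ covers at least a prescribed number of elements, which is exactly the Max-$k$-Coverage decision question.

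The main obstacle is the path-linear nature of the CDM: credit at an element falls proportionally with each removed in-edge rather than being zeroed out, so a genuinely combinatorial (``AND-gate'') structure has to be imposed at the action level rather than baked into the static graph. The amplification-plus-per-action-path trick above is therefore essential, and the bulk of the actual proof would be verifying by direct arithmetic that no non-cover of size $k$ can match a true cover in $\Delta$. A clean fallback, should the amplification balance prove fragile, is to reduce instead from Densest-$k$-Subgraph, mapping vertices of the input graph to selector edges and edges to element nodes with exactly two covering selectors; in that variant, $\Delta(B)$ faithfully tracks the number of induced edges in the chosen vertex subset, which yields NP-hardness without the need for the 0/1 translation.
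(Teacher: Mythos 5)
There is a genuine gap in your reduction, and it is precisely the obstacle you flag yourself: the path-additivity of the CDM. In your gadget every action $a_{j,i}$ contains a single selector edge $(s,y_i)$, so removing selector $i$ zeroes the credit of exactly the actions $\{a_{j,i}: u_j\in S_i\}$ and touches no action containing any other selector. Consequently $\Delta(B)=\sum_{i\in I}w_i$ with $w_i=\sum_{j:u_j\in S_i}c_j/|\mathscr{A}_{z_j}|$, where $c_j$ is your amplification factor: the objective is \emph{modular} in the selector edges, and maximizing a modular function under a cardinality constraint is solved exactly by taking the $k$ largest weights. Amplification multiplies every term by a fixed constant and cannot turn this linear decrease into the ``element newly covered'' indicator your threshold argument needs; the direct-arithmetic verification you defer would in fact fail, because a non-cover concentrated on heavy selectors can match or beat a true cover. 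The Densest-$k$-Subgraph fallback breaks for the same reason: an element node with two covering selectors loses half its (normalized) credit when \emph{either} selector is removed, so $\Delta(B)$ counts removed selector--element incidences rather than induced edges, which would require an AND over the two removals.

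The paper sidesteps this entirely by not attempting to encode coverage semantics inside the CDM. It reduces from Influence Maximization under the CDM, which is already known to be NP-hard (Goyal et al.): add a super-node $x$, let $C=\{(x,v)\mid v\in V\}$ with these edges present in every action, set $X=\{x\}$, and observe that removing the $k$ candidate edges into a set $S$ maximally reduces the influence of $x$ exactly when $S$ is an optimal IM seed set. If you want a from-scratch reduction in your style, the interaction between candidate edges has to be multiplicative along a common path (e.g., chains of candidate edges feeding one sink), not a star of selectors whose removals affect disjoint sets of paths.
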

\begin{proof}
See the Appendix.
\end{proof}

BIL assumes that any $k$ edges in the candidate set can be removed from the network.  While such budget constrained formulations are quite popular in the literature \cite{goyal2011,kempe2003maximizing,Khalil2014}, they fail to capture relevant aspects in many applications. For instance, an optimal solution for BIL might make the network disconnected or disproportionately affect particular sub-networks. Table \ref{table:mot_exp_matroid} exemplifies this issue using two real networks and different sizes of the target set $X$ chosen uniformly at random. The majority of the modifications are concentrated in the top three nodes---i.e. those with the largest number of edges removed in the solution.
In the next section, we present a different formulation for influence limitation under matroid constraints, which addresses some of these challenges.

\subsubsection{Influence Limitation under Matroid (ILM)} \label{subsec::ilm}

\textit{Matroids} are abstract objects that generalize the notion of linear independence to sets \cite{chekuri2010dependent}. We apply matroids to characterize a class of constraints for influence limitation. 
 First, we formalize the concept of a matroid:

\begin{definition} \label{def:matroid}
\textbf{Matroid \cite{nemhauser1978}:} A finite matroid  $M$ is a pair $(C,I)$, where $C$ is a finite set (called the ground set) and $I$ is a family of subsets (independent sets) of $C$ with the following properties:
\begin{enumerate}
    \item The empty set is independent, i.e., $\emptyset \in I$.
    \item Every subset of an independent set is independent.
    \item If $M$ and $N$ are two independent sets of $I$ and $|M|>|N|$, then there exists $x \in M\setminus N$ such that $N\cup \{x\} \in I$.
\end{enumerate}
\end{definition}

To illustrate the expressive power of matroids as a general class of constraints for optimization problems defined over networks, we focus on a particular setting of influence minimization. More specifically, we upper bound the number of edges that can be removed from each node in the network. 

\begin{problem}
 \label{prob:ILM}
\textbf{Influence Limitation under Matroid (ILM):} Given a directed social graph $G(V,E)$, an action log $\mathscr{L}$, a candidate set of edges $C$, a given seed set $X$, and an integer $b$, find a set $B$ $($where $B \subset C \subset E)$ such that at most $b$ edges from $B$ are incident (incoming) on any node in $V$ and $\sigma_{cd}(G^m,X)$ is minimized where $G^m=(V,E - B)$ or, $\Delta(B)= \sigma_{cd}(G,X)-\sigma_{cd}(G^m,X) $ is maximized.
\end{problem}

The effect of ILM is to enforce network modifications that are more uniformly distributed across the network. Notice that a valid solution for the budget constrained version (BIL) might not necessarily be a valid solution for ILM. Conversely, not every solution of ILM is valid for BIL. We also show that ILM is NP-hard.

\begin{thm}\label{thm:hardness_ilm} The ILM problem is NP-hard.
\end{thm}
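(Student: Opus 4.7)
The plan is to establish NP-hardness by polynomial-time reduction from a known NP-hard problem. A natural target is Set Cover; alternatively, one can reduce from BIL (Theorem~\ref{thm:hardness}), but the Set Cover route is more direct because the combinatorial structure is familiar.

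Given a Set Cover instance $(\mathcal{U} = \{u_1,\dots,u_n\}, \{S_1,\dots,S_m\}, k)$, I would construct an ILM instance $(G, X, C, b)$ as follows. Take $X=\{x\}$ as a single seed; introduce a ``set'' node $y_i$ for each $S_i$ joined to $x$ by a non-candidate edge $(x,y_i)$ of credit $\gamma=1$; and, for each element $u_j \in \mathcal{U}$, connect it from the $y_i$'s with $u_j\in S_i$ via candidate edges routed through auxiliary gadgets. The parameter $b$ and the fan-in structure at each destination are chosen so that the per-destination matroid constraint encodes ``at most $k$ sets are selected,'' and so that the influence reduction $\Delta$ (via Equation~\ref{eqn::delta_a_e}) tracks the coverage of the selected sets. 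The reduction is then completed by showing that a set cover of size $\le k$ exists iff the ILM instance admits $\Delta \ge T$ for a threshold $T$ computable from the construction.

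The main obstacle is reconciling two structural mismatches: (i)~Set Cover imposes a \emph{global} budget $k$, whereas the partition matroid of ILM is \emph{per-destination}; and (ii)~the credit $\sigma_{cd}$ is linear in paths (Equation~\ref{eqn::gamma_x_u}), whereas Set Cover's coverage is submodular with diminishing returns. For (i), I would funnel candidate edges through a small number of ``bottleneck'' destinations so that a per-destination cap of $b$ at those nodes effectively imposes a global cap of $k$. For (ii), I would introduce shared intermediate nodes (chain or diamond substructures) so that removing several candidate edges overlaps in the paths they block, producing the submodular interactions required to encode union coverage. Showing that these gadgets do not create spurious cross-paths (which would corrupt the correspondence between $\Delta$ and coverage) is the most delicate step; once verified, the construction is polynomial-size and solution-preserving, and the NP-hardness of Set Cover transfers to ILM.
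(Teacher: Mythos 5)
Your plan diverges from the paper's route: the paper proves hardness of ILM by the same reduction it uses for BIL (Theorem~\ref{thm:hardness}), namely from \emph{Influence Maximization under the CDM} itself --- attach a fresh source $x$ to every node by a candidate edge, so that picking the best edges to delete is equivalent to picking the most influential nodes. That choice is not incidental: by reducing from a problem that already lives inside the credit-distribution model, the paper never has to make $\Delta$ imitate a coverage function. Your Set Cover route has to do exactly that, and this is where the proposal has a genuine gap rather than just an unfinished detail.

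Concretely: by Equation~\ref{eqn::gamma_x_u}, the credit of an element node $u_j$ is a sum over $x\to u_j$ paths of products of edge credits, so the reduction $\Delta(B)$ achieved by deleting $B$ is the total weight of paths \emph{hit} by $B$. If $u_j$ belongs to several sets and is reachable through several set-gadgets, deleting the candidate edge of one set removes only the paths through that edge; the surviving paths still contribute, so the credit drop at $u_j$ is (roughly) additive over the selected sets containing $u_j$, not the $0$/$1$ ``covered or not'' quantity Set Cover needs. To recover coverage semantics you would need every $x\to u_j$ path to pass through \emph{every} candidate edge $e_i$ with $u_j\in S_i$, so that deleting any one of them kills all of $u_j$'s credit. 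But candidate edges lying on a common path are totally ordered along it, and imposing this simultaneously for all elements forces ordering and reachability constraints that an arbitrary set system does not satisfy; your ``chain or diamond substructures'' remark points at the problem but does not solve it, and it is not clear a polynomial-size DAG gadget exists. (The bottleneck trick for turning the per-node cap $b$ into a global budget $k$ is fine; that part of the plan works.) As written, the proposal is therefore a reduction sketch whose central gadget is missing and plausibly unrealizable in this form; the paper's reduction from CDM influence maximization avoids the issue entirely and is the safer argument.
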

\begin {proof}
The proof follows a similar construction as in Thm. \ref{thm:hardness}.
\end{proof}

 \begin{table}[t]
\centering
\begin{tabular}{| c | c | c | c | c|}
\hline
&\multicolumn{2}{c|}{\textbf{CA}} & \multicolumn{2}{c|}{\textbf{FXS}} \\
\hline
&  $|X|=20$ &  $|X|=30$ &  $|X|=20$ &  $|X|=30$ \\
\hline
Round $1$ & $78$ & $68$ & $60$ & $55$  \\
\hline
Round $2$ & $76$  & $66$ &  $55$ & $50$ \\
\hline
Round $3$ &   $78$ & $64$ & $58$ &   $49$ \\
\hline
Round $4$ &  $70$ & $70$ & $64$ &  $52$\\
\hline
Round $5$ &  $68$ & $ 68$ & $63$ &  $55$\\
\hline
\end{tabular}

\caption{Motivation for influence limitation under matroid (ILM). We compute the percentage of removed edges, from a total of 50, that are incident to the top three nodes in the solution of the budgeted version of the problem (BIL). In each round, we select a target set $X$ uniformly at random.  Results from two datasets (CA and FXS, see description in Section \ref{sec:exp}) are shown. Notice that BIL modifications are strongly biased towards a small set of nodes in the network. Using a matroid constraint (ILM), we are able to enforce modifications that are better distributed across the network. \label{table:mot_exp_matroid}}

 \end{table}

It remains to show that ILM follows a matroid constraint---i.e. any valid solution for ILM is a matroid (Definition \ref{def:matroid}). In fact, we will show that ILM follows a \textit{partition matroid}, which is a specific type of a matroid where the ground set $C$ is partitioned into non-overlapping subsets $C_1,C_2,\cdots, C_l$ with associated integers $b_1,b_2,\cdots,b_l$ such that a set $B$ is independent iff $|B\cap C_i| \leq b_i$.


\begin{observation}
\label{obs:matroid}
ILM follows a partition matroid.
\end{observation}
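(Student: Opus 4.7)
The plan is to exhibit an explicit partition of the candidate set $C$ and matching integer caps so that ILM's feasibility constraint coincides with the independence predicate of a partition matroid; then the three matroid axioms can be checked directly.

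First I would construct the partition. For each vertex $v \in V$, let $C_v = \{(u,v) \in C\}$ be the set of candidate edges whose head is $v$. Since every directed edge has a unique head, the family $\{C_v\}_{v \in V}$ is pairwise disjoint and its union is $C$, so it is a valid partition of the ground set. Next, associate the integer $b_v = b$ with each block $C_v$. The ILM requirement that ``at most $b$ edges of $B$ are incoming on any node of $V$'' translates exactly to $|B \cap C_v| \leq b_v$ for every $v$, which matches the independence predicate for a partition matroid specified in the statement preceding the observation.

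Then I would verify the three axioms of Definition \ref{def:matroid} for the family $I = \{B \subseteq C : |B \cap C_v| \leq b \text{ for all } v \in V\}$. Axiom (1) is immediate since $|\emptyset \cap C_v| = 0 \leq b$. Axiom (2) follows because $B' \subseteq B$ implies $|B' \cap C_v| \leq |B \cap C_v| \leq b$ for every $v$. For axiom (3), given $M, N \in I$ with $|M| > |N|$, a pigeonhole argument over the partition shows there must exist some $v$ with $|M \cap C_v| > |N \cap C_v|$; pick any $e \in (M \cap C_v) \setminus (N \cap C_v)$ and observe that $N \cup \{e\}$ still satisfies $|(N \cup \{e\}) \cap C_u| \leq b$ for every $u$ (the count changes only for $u = v$, where it becomes $|N \cap C_v| + 1 \leq |M \cap C_v| \leq b$). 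Hence $N \cup \{e\} \in I$.

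There is no real obstacle here; the only care needed is to confirm that the partition is indeed a partition (which relies on candidate edges being directed so each has a unique head, matching the ``incoming'' wording in Problem \ref{prob:ILM}) and that a uniform cap $b_v = b$ is what the problem demands. I would close by remarking that this characterization is what enables applying the continuous-relaxation-plus-rounding machinery for matroid-constrained submodular maximization in the subsequent algorithmic section.
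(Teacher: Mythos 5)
Your proposal is correct and follows essentially the same route as the paper: partition the candidate set $C$ by the unique head vertex of each incoming edge and observe that ILM's feasibility condition is exactly the independence predicate $|B \cap C_v| \leq b$ of the resulting partition matroid. The explicit verification of the three axioms of Definition~\ref{def:matroid} (including the pigeonhole argument for the exchange property) is a sound elaboration of what the paper leaves implicit by appealing to the known structure of partition matroids.
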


 The key insight for this observation is that, for any incoming edge, the associated node is unique to the edge. As an example, if $e=(u,v)$ (incoming to $v$) then the node $v$ is unique to the edge $e$. Thus, the ground set $C$ can be partitioned into edge sets $(C_1,C_2,..., C_{|V|})$ based on the $|V|$ unique incidence edges associated with them. Any feasible solution $B$ (edge set) is an independent set as $B\cap C_v\leq b$, where $v \in V$. Notice that the more general setting where a constant $b_v$ is defined for each node in the network is also a partition matroid.

\section{Submodularity}\label{sec:edge_contribution}

A key feature in the design of efficient algorithms for influence limitation is \textit{submodularity}. Intuitively, submodular functions are defined over sets and have the so called \textit{diminishing returns property}. These functions behave similarly to both convex and concave functions \cite{krause2014submodular}, enabling a polynomial-time search for approximate global optima for NP-hard problems. Besides its more usual application to the budgeted version of our problem, we also demonstrate the power of submodular optimization in the solution of influence limitation problems under matroid constraints. 

In order to prove that the maximization function $\Delta$ associated to both BIL and ILM is submodular, we analyze the effect of the removal of a single candidate edge $e$ over the credit of the target set $X$. Equation \ref{eqn::delta_a_e} defines the change in credit ($\delta_{a} (\{e\})$) after the removal of $e=(u,v)$ in $G(a)$. In case a given vertex $v$ does not have outgoing edges in $G(a)$, the change can be computed as:

\begin{equation*}
\delta_{a} (\{e\}) = \big ( \Gamma_{X,u}(a). \gamma_{(u,v)}(a) \big )  \Gamma_{v,v}(a)  
\end{equation*}

The next lemma describes the effect of removing an edge $e=(u,v)$ for the case where node $v$ has outgoing edges in $G(a)$. 

\begin{lemma}\label{lemma:credit_computation2}
For an action $a$, with corresponding DAG $V(a)$, the change in credit after the removal of $e=(u,v)$ is as follows:

\begin{equation}
     \delta_{a} (\{e\}) = \big(\Gamma_{X,u}(a)\cdot \gamma_{(u,v)}(a)\big)\cdot \sum_{w \in V} \Gamma_{v,w}(a)
\end{equation}

\end{lemma}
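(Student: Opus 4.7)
The plan is to reduce the computation of $\delta_a(\{e\})$ to a counting argument over minimal paths that pass through the removed edge $e=(u,v)$. Starting from Equation~\ref{eqn::delta_a_e}, the inner difference at each $w\in V$ collapses to a sum over precisely those paths in $P(X,w)$ that contain $e$, so
\begin{equation*}
\delta_a(\{e\}) \;=\; \sum_{w\in V}\sum_{\substack{p\in P(X,w)\\ e\in p}}\prod_{e'\in p}\gamma_{e'}(a).
\end{equation*}
The entire proof then amounts to showing that, for each fixed $w$, the inner sum factors as $\Gamma_{X,u}(a)\cdot\gamma_{(u,v)}(a)\cdot\Gamma_{v,w}(a)$; pulling the two $w$-independent factors out of the outer sum over $V$ immediately yields the claimed formula of the lemma.

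Concretely, I would establish a bijection for each $w$ between $\{p\in P(X,w):e\in p\}$ and the product set $P(X,u)\times P(v,w)$. The forward map sends a path $p$ through $e$ to its unique decomposition $(p_1,p_2)$, where $p_1$ is the prefix from some source in $X$ to $u$ and $p_2$ is the suffix from $v$ to $w$. Membership $p_1\in P(X,u)$ follows by contradiction: any strictly smaller valid $X$-to-$u$ path $p_1'$ would combine with $e$ and $p_2$ into a strictly smaller valid $X$-to-$w$ path, contradicting the minimality of $p$ inside $P(X,w)$. An analogous argument handles $p_2$. Applying the product-form identity of Equation~\ref{eqn::gamma_x_u} to each coordinate of the decomposition then factors the inner sum exactly as needed, completing the per-$w$ step.

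The main obstacle I anticipate is the reverse direction of the bijection: verifying that the concatenation $p_1\cdot e\cdot p_2$ of an arbitrary pair $(p_1,p_2)\in P(X,u)\times P(v,w)$ is itself minimal in $P(X,w)$. Any hypothetical strict sub-path $p'\subsetneq p_1\cdot e\cdot p_2$ from $X$ to $w$ either avoids $e$, in which case it would need to bridge the endpoint $u$ of $p_1$ to the source $v$ of $p_2$ using only edges of $p_1\cup p_2$, which the DAG structure rules out; or it uses $e$, in which case its prefix would be a strict $X$-to-$u$ sub-path of $p_1$ or its suffix would be a strict $v$-to-$w$ sub-path of $p_2$, each contradicting the minimality of its respective factor. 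A final sanity check is that when $v$ has no outgoing edges in $G(a)$, the sum $\sum_{w}\Gamma_{v,w}(a)$ collapses to $\Gamma_{v,v}(a)=1$, which recovers the special case stated immediately before the lemma.
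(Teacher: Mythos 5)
Your argument is correct, but it takes a genuinely different route from the paper. The paper proves the lemma by induction on the path length $l$ from $v$: it defines the layer sets $R^a(v,l)$, uses the recursive credit definition (Equation~\ref{eq:inf_v_u}) to push the quantity $\big(\Gamma_{X,u}(a)\cdot\gamma_{(u,v)}(a)\big)\cdot\Gamma_{v,w}(a,l)$ one layer forward through the DAG via $\sum_{y\in N_{in}(w)}\Gamma_{v,y}(a,l)\cdot\gamma_{(y,w)}(a)$, and never reasons about paths as sets at all. You instead work directly with the path-sum representation (Equations~\ref{eqn::gamma_x_u} and~\ref{eqn::delta_a_e}) and factor the sum over minimal $X$-to-$w$ paths containing $e$ via the prefix/suffix bijection with $P(X,u)\times P(v,w)$. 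Your route makes explicit exactly where the minimality convention and the acyclicity of $G(a)$ are doing work --- in particular the fact that the prefix and suffix of a path through $e$ are necessarily vertex-disjoint, which is what makes both directions of the bijection go through --- whereas the paper's induction hides all of this inside the recursion and is closer to how the algorithm actually computes and updates credits. One caveat worth noting, which applies equally to both proofs: if a $v$-to-$w$ path re-enters the target set $X$ (or if $v\in X$), the credit $\Gamma_{X,w}(a)$ truncates at the later $X$-vertex and the concatenation $p_1\cdot e\cdot p_2$ need not be minimal, so the stated identity implicitly assumes this does not happen; your bijection surfaces this assumption while the paper's induction silently relies on it through the base case $\Gamma_{X,z}(a)=1$ for $z\in X$. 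Your closing sanity check against the no-outgoing-edges special case is consistent with the paper's $l=0$ base case $\Gamma_{v,v}(a)=1$.
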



\begin{proof}
The proof is based on induction over the length of the paths from $v$ to $w$ (see the Appendix).
\end{proof} 

\begin{example}
Consider the example in Figure \ref{fig:dag_ex1}. Let the target set $X$ be $\{v\}$.
The contribution of the edge $(w,y)$ will be the following: 
$\big(\Gamma_{v,w}\cdot \gamma_{(w,y)}\big)\cdot (\Gamma_{y,u}+\Gamma_{y,x}+\Gamma_{y,t}+\Gamma_{y,s})$. Now, $\Gamma_{y,u}=0.3,\Gamma_{y,x}=0,\Gamma_{y,t}=0, \Gamma_{y,s}=0$ and $\Gamma_{v,w}=0.2$. So, the marginal contribution of the edge $(w,y)$ is $(0.2)\cdot 1 \cdot (0.3+0+0+0)= 0.06$.
\end{example}

We are now able to formalize the change in credit due to a single edge deletion over all the actions in the action set $\mathscr{A}$.

\begin{lemma}
\label{lemma:edge_contribution}
The total change in credit $\Delta(\{e\})$ due to the removal of edge e can be computed as:
\begin{equation*}
\begin{split}
\Delta(\{e\}) & = \sigma_{cd}(G,X)-\sigma_{cd}(G^m,X)\\
& = \sum_{a \in \mathscr{A}} \big(\Gamma_{X,u}(a). \gamma_{(u,v)}(a)\big) . \big(\sum_{w \in V} \frac{1}{ |\mathscr{A}_w|}\Gamma_{v,w}(a)\big)
\end{split}
\end{equation*}
where $G^m=(V,E\setminus \{e\})$.
\end{lemma}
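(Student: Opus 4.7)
The plan is to start from the definitions of $\sigma_{cd}$ and reduce the statement to a careful bookkeeping on top of Lemma \ref{lemma:credit_computation2}. By Equations \ref{eq:set_k} and \ref{eq:cd} we can write
\begin{equation*}
\sigma_{cd}(G,X)-\sigma_{cd}(G^m,X)=\sum_{w\in V}\frac{1}{|\mathscr{A}_w|}\sum_{a\in \mathscr{A}_w}\bigl(\Gamma_{X,w}(a)-\Gamma^m_{X,w}(a)\bigr),
\end{equation*}
where $\Gamma^m_{X,w}(a)$ denotes the credit computed in $G^m(a)$. The normalizing constants $|\mathscr{A}_w|$ are determined by the action log and therefore unaffected by edge removal, so they pull out cleanly. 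Since $\Gamma_{X,w}(a)=\Gamma^m_{X,w}(a)=0$ whenever $w\notin V(a)$ (equivalently $a\notin\mathscr{A}_w$), I would extend the inner sum to all $w\in V$ and then swap the order of summation to get $\sum_{a\in\mathscr{A}}\sum_{w\in V}\frac{1}{|\mathscr{A}_w|}(\Gamma_{X,w}(a)-\Gamma^m_{X,w}(a))$.

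The next step is to evaluate the per-vertex difference $\Gamma_{X,w}(a)-\Gamma^m_{X,w}(a)$ for a fixed action $a$. This is exactly the per-vertex identity that underlies the proof of Lemma \ref{lemma:credit_computation2}: every (minimal) path from $X$ to $w$ in $G(a)$ that is destroyed by removing $e=(u,v)$ factors uniquely into (i) a path from $X$ to $u$, (ii) the edge $(u,v)$, and (iii) a path from $v$ to $w$. Multiplying the credits along these three segments and summing over all such destroyed paths yields
\begin{equation*}
\Gamma_{X,w}(a)-\Gamma^m_{X,w}(a)=\Gamma_{X,u}(a)\cdot\gamma_{(u,v)}(a)\cdot\Gamma_{v,w}(a),
\end{equation*}
using Equation \ref{eqn::gamma_x_u}. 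Strictly speaking, this equality is the invariant maintained by the induction on path lengths from $v$ to $w$ already used in Lemma \ref{lemma:credit_computation2}, so I would cite that induction (or quickly re-state its base case $w=v$ and the inductive step in which the $\Gamma_{v,w}(a)$ factor grows via the recurrence in Equation \ref{eq:inf_v_u}) rather than rerun it in full.

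Plugging this factorization into the swapped double sum, the terms $\Gamma_{X,u}(a)$ and $\gamma_{(u,v)}(a)$ depend only on $a$ and can be pulled out of the inner sum over $w$, giving
\begin{equation*}
\sum_{a\in\mathscr{A}}\bigl(\Gamma_{X,u}(a)\cdot\gamma_{(u,v)}(a)\bigr)\cdot\sum_{w\in V}\frac{1}{|\mathscr{A}_w|}\Gamma_{v,w}(a),
\end{equation*}
which is the claimed formula. The main obstacle is not the algebra but the per-vertex identity above: Lemma \ref{lemma:credit_computation2} states only the aggregate change $\sum_w \Gamma_{v,w}(a)$, whereas here the normalizers $1/|\mathscr{A}_w|$ depend on $w$ and prevent us from applying Lemma \ref{lemma:credit_computation2} as a black box. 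I would therefore make it explicit that the induction in the appendix actually establishes the stronger per-vertex statement, so that this lemma follows by linearity plus the swap of summations described above.
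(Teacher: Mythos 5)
Your argument is correct and follows essentially the same route as the paper, which simply derives the lemma from Lemma \ref{lemma:credit_computation2} together with Equations \ref{eq:set_k} and \ref{eq:cd}. The per-vertex identity $\Gamma_{X,w}(a)-\Gamma^m_{X,w}(a)=\big(\Gamma_{X,u}(a)\cdot\gamma_{(u,v)}(a)\big)\cdot\Gamma_{v,w}(a)$ that you rightly flag as the step needed to accommodate the $w$-dependent normalizers $1/|\mathscr{A}_w|$ is exactly what the induction behind Lemma \ref{lemma:credit_computation2} establishes and what the paper records as Observation \ref{obs:edge_removal3}, so your expansion just makes explicit what the paper leaves implicit.
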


Lemma \ref{lemma:edge_contribution} follows from Lemma \ref{lemma:credit_computation2} and Equations  \ref{eq:set_k}, and \ref{eq:cd}. Next, we prove the submodularity property of the function $\Delta$.

\begin{thm}\label{thm:submodular} The function $\Delta$  is  monotone and submodular.
\end{thm}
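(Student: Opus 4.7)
The plan is to represent $\Delta$ as a non-negative linear combination of coverage-style indicator functions over source-to-target paths in the action DAGs, and then appeal to the standard fact that such combinations are monotone and submodular.

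First, using Equations~\ref{eqn::gamma_x_u}, \ref{eq:set_k}, and \ref{eq:cd}, I would rewrite the total credit as a weighted sum over minimal paths:
$$\sigma_{cd}(G,X) \;=\; \sum_{w \in V} \frac{1}{|\mathscr{A}_w|} \sum_{a \in \mathscr{A}_w} \sum_{p \in P_a(X,w)} \prod_{e \in p} \gamma_e(a),$$
where $P_a(X,w)$ denotes the set of minimal $X$-to-$w$ paths in $G(a)$. The same identity holds for $\sigma_{cd}(G^m,X)$ with $P_a^m(X,w)$ in its place.

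The delicate step---and the one I expect to be the main obstacle---is to show the clean identification
$$P_a^m(X,w) \;=\; \{\,p \in P_a(X,w) : p \cap B = \emptyset\,\}.$$
The $\supseteq$ direction is immediate: a minimal path in $G(a)$ that avoids $B$ is still present and minimal in $G^m(a)$. The $\subseteq$ direction requires ruling out the worry that deleting edges could ``promote'' a previously non-minimal path to minimality. Suppose $p \in P_a^m(X,w)$ but some $p' \subsetneq p$ is already an $X$-to-$w$ path in $G(a)$; since $p' \subseteq p$ as edge sets, $p'$ uses only edges of $p$, none of which lie in $B$, so $p'$ also survives in $G^m(a)$, contradicting minimality of $p$ there.

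Once this characterization is in hand, I would write
$$\Delta(B) \;=\; \sum_{w,\, a \in \mathscr{A}_w,\, p \in P_a(X,w)} c_p^{w,a}\cdot \mathbf{1}\!\left[\,p \cap B \neq \emptyset\,\right], \qquad c_p^{w,a} := \tfrac{1}{|\mathscr{A}_w|}\prod_{e \in p}\gamma_e(a) \ge 0.$$
Each indicator $B \mapsto \mathbf{1}[p \cap B \neq \emptyset]$ is plainly monotone. For submodularity, fix $A \subseteq B$ and $e \notin B$: the marginal gain at $B$ is $1$ only if $e \in p$ and $p \cap B = \emptyset$, in which case $p \cap A = \emptyset$ too and the marginal at $A$ is also $1$, giving the diminishing-returns inequality. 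Since a non-negative linear combination of monotone submodular functions is itself monotone submodular, $\Delta$ inherits both properties, completing the proof.
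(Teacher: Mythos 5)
Your proof is correct, but it takes a genuinely different route from the paper's. The paper establishes monotonicity directly (removing an edge can only destroy paths, never increase credit) and then proves submodularity by a local marginal-gain computation: after reducing to a single action and a single source by linearity, it compares the marginal gain of a new edge $e=(u,v)$ over nested removal sets $E_S\subseteq E_T$, written as $\Gamma'_{x,u}(G^S)\cdot\gamma_{(u,v)}\cdot\Gamma'_{v,w}(G^S)$ versus the analogous product in $G^T$, and concludes from the fact that the credits $\Gamma'_{x,u}$ and $\Gamma'_{v,w}$ are themselves monotone under edge removal. You instead exhibit $\Delta$ globally as a non-negative weighted coverage function over minimal paths, which yields both properties at once from the standard coverage-function fact. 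Your decomposition hinges on the identification $P_a^m(X,w)=\{\,p\in P_a(X,w): p\cap B=\emptyset\,\}$, and you correctly dispatch the only nontrivial direction: a path surviving the deletions cannot be newly promoted to minimality, because any proper sub-path witnessing its non-minimality in $G(a)$ uses only its edges and therefore also survives. Both arguments share the implicit modeling assumption, already built into Equation~\ref{eqn::delta_a_e}, that the per-edge credits $\gamma_e(a)$ are unaffected by the removal of other edges. Your route is arguably cleaner and reveals more structure ($\Delta$ is literally a weighted coverage function, hence monotone and submodular for free), whereas the paper's route has the advantage of reusing the marginal-gain formula of Lemma~\ref{lemma:credit_computation2} and Observations~\ref{obs:edge_removal2} and~\ref{obs:edge_removal3}, which the greedy algorithm needs anyway.
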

\begin{proof}
The function is monotonic for each action $a$, as the removal of an edge cannot increase the credit. 
As a consequence, $\Delta$ which is a sum of credits over all actions is also monotonic. 

To prove submodularity, we consider the deletion of two sets of edges, $E_S$ and $E_T$ where $E_S \subset E_T$, and show that $\Delta(E_S \cup \{e\}) - \Delta(E_S) \geq \Delta(E_T\cup \{e\})-\Delta(E_T)$ for any edge $e \in C$ such that $e \notin E_S$ and $e \notin E_T$. A non-negative linear combination of submodular functions is also submodular. Thus, it is sufficient to show the property for one action $a$, as $\Delta$ has the following form:  
\begin{equation*}
\begin{split}
\Delta(B)&= \sigma_{cd}(G,X)-\sigma_{cd}(G^m,X) \\
&=\frac{1}{|\mathscr{A}_u|}\sum_{a\in \mathscr{A}_u}\Gamma'_{X,u}(G,a)- \frac{1}{|\mathscr{A}_u|}\sum_{a\in \mathscr{A}_u}\Gamma'_{X,u}(G^m,a) \\ &=\frac{1}{|\mathscr{A}_u|}\sum_{a\in \mathscr{A}_u}(\Gamma'_{X,u}(G,a)-\Gamma'_{X,u}(G^m,a))
\end{split}
\end{equation*}
where $\Gamma'_{X,u}(G,a)$ denotes $\Gamma_{X,u}(a)$ in $G(a)$.

For the same reason, we assume a single node $x \in X$ ($\Gamma_{X,u}= \sum_{s\in X} \Gamma_{s,u}^{V-X+s}$). Edge sets $E_S$ and $E_T$ are removed from the graph and we evaluate $\Delta(\{e\})$ such that $e \notin E_S$ and $e \notin E_T$. Let the credits towards $x$ from node $w$ after removing $E_S$ and $E_T$ edges be $\Gamma'_{x,w}(G^S)$ and $\Gamma'_{x,w}(G^T)$ (omitting $a$ from $\Gamma'(.,a)$ for simplicity) respectively. Moreover, use the notation $u\overrightarrow{a}v$ if there is a path from $u$ to $v$ in $G(a)$. There are two possible cases. 

1) If  $w\overrightarrow{a}v$ does not hold, then removal of $e=(u,v)$ keeps $\Gamma'_{x,w}(G^S)$ and $\Gamma'_{x,w}(G^T)$ unchanged. Hence the marginal gains due to $e$ for both $E_S$ and $E_T$ are $0$. 

2)  If $w\overrightarrow{a}v$ holds, marginal gains for sets $E_S$ and $E_T$ are equal to $\Gamma'_{x,u}(G^S). \gamma_{(u,v)}. \Gamma'_{v,w}(G^S)$ and  $\Gamma'_{x,u}(G^T). \gamma_{(u,v)}. \Gamma'_{v,w}(G^T)$.

Thus, $\Gamma'_{x,u}(G^S). \gamma_{(u,v)}. \Gamma'_{v,w}(G^S) \geq \Gamma'_{x,u}(G^T). \gamma_{(u,v)}. \Gamma'_{v,w}(G^T)$ as $\Gamma'_{x,u}(G^S)\geq \Gamma'_{x,u}(G^T)$ and $\Gamma'_{v,w}(G^S)\geq \Gamma'_{v,w}(G^T)$. This shows that $\Delta$ is a submodular function.
\end{proof}

The next two sections describe how we apply the submodularity property in the design of efficient approximate algorithms for influence minimization (BIL and ILM).



\section{Budget Constrained Problem} \label{sec:budget_BIL}
According to Theorem \ref{thm:submodular}, BIL is a monotone submodular maximization problem under a budget constraint. As a consequence, a simple greedy algorithm produces a  constant factor approximation of $(1-1/e)$~\cite{nemhauser1978} for the problem. However, naively applying greedy algorithm might be expensive. It requires scanning the action log file, computing the credits and updating them multiple times after each edge removal. We introduce a more efficient version of this greedy algorithm based on properties of the credit distribution model.

\begin{algorithm}[t]
\caption{Greedy}
\begin{algorithmic}[1] 
 \REQUIRE $X$, $C$, $k$
\ENSURE A solution set $B$ of $k$ edges 
\STATE $B\leftarrow\emptyset$
\WHILE {  $|B|\leq k$ }
\FOR{$e \in C \setminus B$}
\STATE \textit{e.MC}$\leftarrow$ computeMC($e$)
\ENDFOR
\STATE $e^*\leftarrow \argmax_{e\in C \setminus B}\{e.MC\}$
\STATE $B\leftarrow B\cup \{e^*\}$ and $E\leftarrow E\setminus \{e^*\}$ 
\STATE updateUC($e,EP,UC,SC$)
\STATE updateSC($e,EP,UC,SC$)
\ENDWHILE
\end{algorithmic}
\label{algo:Greedy}
\end{algorithm}

\begin{algorithm}[t]
\caption{computeMC}
\begin{algorithmic}[1] 
 \REQUIRE $e=(u,v)$, $X$, $UC$, $SC$
\ENSURE $mc$
\STATE $mc=0$
\FOR{$a \in \mathscr{A}$ such that $SC[u][a]>0$ and $EP[u][v][a]>0$}
\STATE $mc_a=0$
\FOR{each user $w$ such that $UC[v][w][a]>0$}
\STATE $mc_a=mc_a+UC[v][w][a]/\mathscr{A}_w$
\ENDFOR
\STATE $mc=mc+(SC[u][a]\cdot EP[u][v][a])\cdot mc_a$
\ENDFOR
\end{algorithmic}
\label{algo:computeMC}
\end{algorithm}


The greedy algorithm removes the edge that minimizes the credit (or influence) of the target set, one at a time. After each edge removal, the credit $\Gamma_{u,v}$, i.e., the credit of node $u$ for influencing $v$, has to be updated. As the algorithm removes only one edge $e$, intuitively, it should not affect nodes in the entire network but only some in the neighborhood of $e$. Next, we formalize these observations and show how to apply them in the design of an efficient algorithm for BIL. 

\begin{observation} \label{observation:edge_removal1}
For a given action $a$ and DAG $G(a)$, the removal of $e=(u,v)$ changes $\Gamma_{z,w}$ iff $z\overrightarrow{a} u$ and $v\overrightarrow{a} w$. 
\end{observation}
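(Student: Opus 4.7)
The plan is to reduce the observation to a structural statement about which paths in $G(a)$ get destroyed when edge $e=(u,v)$ is removed, and then appeal to the path-sum representation of credits given by Equation~\ref{eqn::gamma_v_u}. Recall that
\begin{equation*}
\Gamma_{z,w}(a) \;=\; \sum_{p \in P(z,w)} \prod_{e' \in p} \gamma_{e'}(a),
\end{equation*}
so writing $P^{e}(z,w) \subseteq P(z,w)$ for the subset of paths that use edge $e$, the change in credit after removal of $e$ is exactly
$\sum_{p \in P^{e}(z,w)} \prod_{e' \in p} \gamma_{e'}(a).$
Thus $\Gamma_{z,w}(a)$ is affected iff $P^{e}(z,w) \neq \emptyset$ (and the corresponding product is nonzero, which is safe to assume since $\gamma_{e'}(a) > 0$ for edges of $G(a)$ by construction of the propagation graph).

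I would then prove both directions by splicing/unsplicing paths at $(u,v)$. For the ($\Leftarrow$) direction, assume $z\overrightarrow{a} u$ and $v\overrightarrow{a} w$ so there is a path $p_1$ from $z$ to $u$ and a path $p_2$ from $v$ to $w$ in $G(a)$; concatenating $p_1$, the edge $(u,v)$, and $p_2$ yields a path from $z$ to $w$ in $P^{e}(z,w)$ (it is a legitimate path because $G(a)$ is a DAG, so no vertex can repeat). Hence the change in $\Gamma_{z,w}(a)$ is strictly positive. For the ($\Rightarrow$) direction, if $\Gamma_{z,w}(a)$ changes then $P^{e}(z,w)$ is nonempty; any $p \in P^{e}(z,w)$ decomposes as a prefix from $z$ to $u$ followed by the edge $(u,v)$ followed by a suffix from $v$ to $w$, which immediately witnesses $z\overrightarrow{a} u$ and $v\overrightarrow{a} w$.

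I do not anticipate a serious obstacle: the only subtlety is the implicit assumption that every edge in $G(a)$ has $\gamma_{e'}(a) > 0$, which is needed so that deleting even a single path strictly changes the sum. If edges with zero credit were permitted, the statement would need to be weakened to ``$\Gamma_{z,w}$ may change''; under the standard construction of the action graph this issue does not arise, so the biconditional as stated goes through cleanly.
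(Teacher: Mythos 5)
Your proof is correct and follows essentially the same route as the paper: the paper's own justification is exactly the observation that $e$ affects $\Gamma_{z,w}$ iff $e$ lies on some $z$-to-$w$ path in $G(a)$, which holds iff $z\overrightarrow{a} u$ and $v\overrightarrow{a} w$. Your version is simply a more explicit rendering via the path-sum formula of Equation~\ref{eqn::gamma_v_u}, and your remark that the strict ``iff'' needs $\gamma_{e'}(a)>0$ on edges of the propagation graph is a legitimate (minor) point the paper leaves implicit.
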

Let us consider an arbitrary DAG $G(a)$ and node pair $(z,w)$. Deleting $e=(u,v)$ can only affect the credit $\Gamma_{z,w}$---i.e., the credit of node $z$ for influencing $w$---if $e$ is on an path from $z$ to $w$ in $G(a)$. The edge $e$ is on one of such paths if and only if $z\overrightarrow{a}  u$ and $v\overrightarrow{a}  w$. 

The following observations can be derived from Lemma \ref{lemma:edge_contribution}.

\begin{observation} \label{obs:edge_removal2}
For given action $a$ and DAG $G(a)$, the removal of $e=(u,v)$ reduces $\Gamma_{z,w}$ by $(\Gamma_{z,u}\cdot\gamma_{(u,v)})\cdot\Gamma_{v,w}$ iff $z\overrightarrow{a}  u$ and $v\overrightarrow{a} w$.
\end{observation}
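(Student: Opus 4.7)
My plan is to prove the observation by a direct path-decomposition argument, reducing it to the definitions already given in Equation \ref{eqn::gamma_v_u}. First I would recall that, analogously to the target-set expansion in Equation \ref{eqn::gamma_x_u}, a single-source credit admits the path-sum representation
\begin{equation*}
\Gamma_{z,w}(a) \;=\; \sum_{p \in P(z,w)} \prod_{e' \in p} \gamma_{e'}(a),
\end{equation*}
where $P(z,w)$ is the set of directed paths from $z$ to $w$ in the DAG $G(a)$. After deleting $e=(u,v)$, the new credit $\Gamma'_{z,w}(a)$ is obtained by the same formula, but summing only over those paths in $P(z,w)$ that do not traverse $e$. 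The reduction in credit is therefore exactly the sum of weights of paths from $z$ to $w$ that do pass through $e$.

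Next I would use the fact that $G(a)$ is a DAG to argue that every path $p \in P(z,w)$ passing through $e=(u,v)$ decomposes uniquely as the concatenation of (i) a path $p_1 \in P(z,u)$, (ii) the single edge $e$, and (iii) a path $p_2 \in P(v,w)$. Because the products $\prod_{e' \in p} \gamma_{e'}(a)$ factor accordingly as $\bigl(\prod_{e' \in p_1}\gamma_{e'}(a)\bigr)\cdot \gamma_{(u,v)}(a) \cdot \bigl(\prod_{e' \in p_2}\gamma_{e'}(a)\bigr)$, summing over such paths gives
\begin{equation*}
\sum_{p_1 \in P(z,u)}\!\!\prod_{e' \in p_1}\!\gamma_{e'}(a) \;\cdot\; \gamma_{(u,v)}(a) \;\cdot\!\! \sum_{p_2 \in P(v,w)}\!\prod_{e' \in p_2}\!\gamma_{e'}(a) \;=\; \Gamma_{z,u}(a)\cdot \gamma_{(u,v)}(a)\cdot \Gamma_{v,w}(a),
\end{equation*}
which is exactly the claimed reduction. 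This also connects back to Lemma \ref{lemma:credit_computation2}, since summing the above expression over all target sources and all nodes $w \in V$ recovers $\delta_a(\{e\})$.

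For the \emph{iff} part, I would observe that if $z\overrightarrow{a}u$ fails then $P(z,u)=\emptyset$ so $\Gamma_{z,u}(a)=0$, and similarly for $v\overrightarrow{a}w$. In either case no path in $P(z,w)$ can use $e$ (such a path would induce the missing reachability), so the reduction is zero and equals the claimed product. Conversely, when both reachability conditions hold, at least one path traverses $e$ with strictly positive weight, yielding a strictly positive reduction that matches the stated expression. The only subtle point, and probably the main obstacle to a clean write-up, is justifying the unique path decomposition at $e$ together with the interchange of sum and product; I would handle this by invoking the DAG property (so that the prefix up to $u$ and the suffix from $v$ never share vertices with each other in a way that would double-count) and the distributive law over finite sums.
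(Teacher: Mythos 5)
Your proof is correct, but it takes a genuinely different route from the paper's. The paper never proves Observation \ref{obs:edge_removal2} directly: it states that the observation is derived from Lemma \ref{lemma:edge_contribution}, which in turn rests on Lemma \ref{lemma:credit_computation2}, and that lemma is proved in the Appendix by induction on the length of paths from $v$, working with the recursive definition $\Gamma_{v,u}(a)=\sum_{w\in N_{in}(u,a)}\Gamma_{v,w}(a)\cdot\gamma_{(w,u)}(a)$ and the reachability sets $R^a(v,l)$. You instead argue directly from the closed-form path-sum representation (Equation \ref{eqn::gamma_v_u}): the loss in credit is the total weight of the $z$--$w$ paths through $e$, and each such path in a DAG factors uniquely into a prefix in $P(z,u)$, the edge $e$, and a suffix in $P(v,w)$, so the distributive law yields $\Gamma_{z,u}\cdot\gamma_{(u,v)}\cdot\Gamma_{v,w}$. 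Your remark that the DAG property is what prevents a prefix and a suffix from sharing a vertex (which would create the cycle $x\rightarrow\cdots\rightarrow u\rightarrow v\rightarrow\cdots\rightarrow x$) is exactly the point that makes the decomposition a bijection, and is worth spelling out. The benefit of your approach is that it establishes the per-pair statement directly and self-containedly, after which Lemma \ref{lemma:credit_computation2} follows by summing over $w$; the paper's induction goes the other way, establishing the aggregated quantity first and reading the per-pair version off from it. One small caveat: your claim that the reduction is \emph{strictly} positive when both reachability conditions hold implicitly assumes the $\gamma_{e'}(a)$ along some $z$--$w$ path through $e$ are all strictly positive; the equality of the reduction with the stated product holds regardless, which is all the observation actually needs.
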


\begin{observation} \label{obs:edge_removal3}
For given target set $X$, an action $a$ and DAG $G(a)$, the removal of $e=(u,v)$ reduces $\Gamma_{X,w}$ by $(\Gamma_{X,u}\cdot\gamma_{(u,v)})\cdot\Gamma_{v,w}$ iff $z\overrightarrow{a}  u$ and $v\overrightarrow{a} w$ where $z\in X$.
\end{observation}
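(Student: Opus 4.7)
The plan is to lift Observation 2 from a single source $z$ to a set $X$, using the path-expansion characterization of $\Gamma_{X,w}(a)$ given in Equation \ref{eqn::gamma_x_u}. First, I would write the change in credit as
\begin{equation*}
\Gamma_{X,w}(a) - \Gamma^{m}_{X,w}(a) \;=\; \sum_{\substack{p\in P(X,w)\\ e\in p}} \prod_{e'\in p}\gamma_{e'}(a),
\end{equation*}
where $\Gamma^{m}_{X,w}$ denotes credit in $G(a)\setminus\{e\}$. Thus the reduction is exactly the aggregate weight of minimal $X$-to-$w$ paths that traverse $e=(u,v)$.

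Next, I would decompose each such path. Since $G(a)$ is a DAG, any minimal path from $X$ to $w$ that uses $e$ factors uniquely into (i) a prefix from some $z\in X$ to $u$, (ii) the edge $e$, and (iii) a suffix from $v$ to $w$. The prefix can be taken from $P(X,u)$ since removing it would violate minimality of the full path; the suffix is a generic $v$-to-$w$ path in $G(a)$. Because the $\gamma$-products over disjoint path segments multiply, the sum factors as
\begin{equation*}
\Big(\sum_{p_1\in P(X,u)}\!\!\prod_{e'\in p_1}\gamma_{e'}(a)\Big)\cdot\gamma_{(u,v)}(a)\cdot\Big(\sum_{p_2\in P(v,w)}\!\!\prod_{e'\in p_2}\gamma_{e'}(a)\Big),
\end{equation*}
which by Equations \ref{eqn::gamma_v_u} and \ref{eqn::gamma_x_u} equals $\big(\Gamma_{X,u}(a)\cdot\gamma_{(u,v)}(a)\big)\cdot\Gamma_{v,w}(a)$. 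Equivalently, one can view the above as the set-valued analog of Observation 2: treating $X$ as a single ``super-source'' in the DAG gives the same argument with $\Gamma_{z,u}$ replaced by $\Gamma_{X,u}$.

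Finally, the ``iff'' part is immediate. If there is no $z\in X$ with $z\overrightarrow{a}u$, then $\Gamma_{X,u}(a)=0$ and no $X$-to-$w$ path uses $e$, so the reduction is $0$ and matches the formula; similarly if $v\not\overrightarrow{a}w$, then $\Gamma_{v,w}(a)=0$ and again both sides vanish. In the remaining case, the factorization above yields a strictly positive reduction equal to the claimed quantity.

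The main obstacle I anticipate is justifying the prefix/suffix decomposition under the minimal-path condition that defines $P(X,w)$: one must check that a prefix $p_1$ stays inside $P(X,u)$ and that the concatenation is still minimal when another $z'\in X$ does not lie on the $v$-to-$w$ suffix. The acyclicity of $G(a)$, together with minimality of paths in $P(X,\cdot)$, resolves this cleanly and keeps the factorization valid without any double counting.
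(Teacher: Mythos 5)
Your proposal is correct, but it reaches the observation by a genuinely different route than the paper. The paper does not argue over paths directly: it derives Observation \ref{obs:edge_removal3} (and Observation \ref{obs:edge_removal2}) as per-node, per-action restrictions of Lemma \ref{lemma:edge_contribution}, which in turn rests on Lemma \ref{lemma:credit_computation2}; that lemma is proved by induction on the length $l$ of the $v$-to-$w$ paths, propagating the quantity $\delta^{l,w}_a(\{e\})$ through the recursive definition $\Gamma_{v,w}(a)=\sum_{y\in N_{in}(w)}\Gamma_{v,y}(a)\cdot\gamma_{(y,w)}(a)$. You instead start from the closed-form path sums of Equations \ref{eqn::gamma_v_u} and \ref{eqn::gamma_x_u}, identify the reduction with the aggregate weight of minimal $X$-to-$w$ paths traversing $e$ (exactly the quantity isolated in Equation \ref{eqn::delta_a_e}), and factor that sum as prefix $\times$ edge $\times$ suffix. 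Your argument is more direct and makes the combinatorial content visible in one step, at the cost of having to justify the bijection between minimal paths through $e$ and $P(X,u)\times\{e\}\times P(v,w)$; the paper's induction hides that bookkeeping inside the recursion but obscures why the factorization holds. One point to be careful about, which you correctly flag but resolve only by assertion: a concatenation $p_1\cdot e\cdot p_2$ with $p_2$ passing through some $z'\in X$ is not minimal and is never counted in $\Gamma_{X,w}$, yet it is counted in $\Gamma_{X,u}\cdot\gamma_{(u,v)}\cdot\Gamma_{v,w}$; the stated formula therefore implicitly assumes the $v$-to-$w$ suffixes avoid $X$. This assumption is equally implicit in the paper's Lemma \ref{lemma:credit_computation2}, so it is not a gap relative to the paper, but if you want your factorization to be airtight you should state it. Acyclicity of $G(a)$ does cleanly give you that the concatenation is a simple path, and the "iff" direction is fine provided the relevant $\gamma$ values are strictly positive.
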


Next we describe our algorithms for the BIL problem.


Algorithm \ref{algo:Greedy} scans the actions log  $\mathscr{L}$ to collect information for comparing the effect of removing each candidate edge. This information is maintained in data structures EP, EC, and SC. In particular, $EP[u][v][a]$ denotes the edge credit ($\gamma_{(u,v)} (a)$) of $u$ for influencing $v$ when $(u,v)$ exists, $UC[u][v][a]$ is the credit ($\Gamma_{u,v} (a)$) given to $u$ for influencing $v$, and $SC[u][a]$ is the credit ($\Gamma_{X,u} (a)$) given to $X$ for influencing $u$, all for  an action $a$.

The contribution of each edge (see Lemma \ref{lemma:edge_contribution}), given the current solution set $B$, is computed using Algorithm \ref{algo:computeMC}. Methods \textit{updateUC} and \textit{updateSC} are based on observations \ref{obs:edge_removal2} and \ref{obs:edge_removal3}, respectively. While \textit{updateUC} updates UC upon an edge removal, \textit{updateSC} updates the credit of the target set in SC (see the Appendix for details).


The three most expensive steps of the greedy algorithm are steps $4$, $8$ and $9$. The corresponding methods \textit{computeMC}, \textit{updateUC}, and \textit{updateSC}, take $O(\sum_{a \in \mathscr{A}} |V(a)| )$, $O(\sum_{a \in \mathscr{A}} |V(a)|^2 )$, and  $O(\sum_{a \in \mathscr{A}} |V(a)| )$ time, respectively. Thus, the total running time of Greedy is $O(k\cdot |C| \cdot \sum_{a \in \mathscr{A}} |V(a)| + k\cdot \sum_{a \in \mathscr{A}} |V(a)|^2)$. Notice that the time does not depend on the number of nodes in the graph ($|V|$), but on the size of the action graphs, the budget and the size of the candidate edge set. We discuss further optimization techniques for Algorithm \ref{algo:Greedy} in the Appendix.


\section{Matroid Constrained Problem }
\label{sec:ilm_algorithms}

In the previous section, we have presented an efficient greedy algorithm for budgeted influence limitation. Here, we switch to the matroid constrained version of the problem (ILM), for which the described algorithm (Algorithm \ref{algo:Greedy}) might not provide a valid solution. Notwithstanding, as for the budgeted case, the submodularity of the influence minimization objective (see Section \ref{sec:edge_contribution}) plays an important role in enabling the efficient solution of ILM. 

Based on Observation \ref{obs:matroid} (Section \ref{subsec::ilm}), we apply existing theoretical results on submodular optimization subject to matroid constraints in the design of our algorithm. First, we propose a continuous relaxation that is the foundation of a continuous greedy algorithm for ILM. Next, we describe two techniques for rounding the relaxed solution. While the first rounding scheme also achieves an approximation factor of $1-1/e$, it is not scalable. Thus, we propose a faster randomized rounding scheme that we will show to work well in practice. Generalizations and hardness results based on the notion of curvature \cite{vondrak2010submodularity} are covered in Sections \ref{sec:generalizations} and \ref{sec:curvature}.

\subsection{Continuous Relaxation} \label{sec:continuous_relaxation}

Let $\vec{y}=(y_1,y_2,...y_c)$ be the vector with membership probabilities for each edge in the candidate set, $C$ ($|C|=c$). Moreover, let $B$ be a random subset of $C$ where the edge $e_i\in C$ is included in $B$ with probability $y_i$. From \cite{Vondrak08}, if $f$ is the continuous extension of $\Delta$, then:
\begin{equation}
f(\vec{y})= \textbf{E}_{B\sim \vec{y}}[\Delta(B)] = \sum_{B\subseteq C} \Delta(B) \prod_{e_i\in B}{y_i}\prod_{e_i\in C\setminus B}{(1-y_i)}
\end{equation}

Let $E_{in}(v)$ be the set of incoming edges to the node $v$. Our objective is to find a $\vec{y}$ that maximizes $f(\vec{y})$ with the following constraints:
\begin{equation}
\label{eq:prob_edge}
    y_i\in [0,1]
\end{equation}
\begin{equation}
\label{eq:constraint_node}
\sum_{e_i\in E_{in}(v)} y_i \leq b\quad \forall v \in V
\end{equation}

While Equation \ref{eq:prob_edge} maintains the fractional values as probabilities, Equation \ref{eq:constraint_node} enforces the maximum number of edges incident to each node to be bounded by $b$. Because the relaxation of $\Delta$ as $f$ is continuous, the optimal value for $f$ is an upper bound on $\Delta$ (the discrete version). Let $B^*$ and $Y^*$ be the optimal edge sets for $\Delta$ and $f$, respectively. Also, let $Z$ be a vector defined as follows: $z_i=1$ if $e_i\in B^*$ and $z_i=0$, otherwise. Then, $\Delta(B^*)=f(Z)$ and $Z$ maintains the constraints. As $f(Y^*)$ is maximum, $\Delta(B^*)=f(Z) \leq f(Y^*)$.

\begin{algorithm}[t]
\caption{Continuous Greedy (CG)}
\begin{algorithmic}[1] 
 \REQUIRE $X$, $C$, $b$
\ENSURE A vector $\vec{y}$ satisfying Eqs. \ref{eq:prob_edge} and \ref{eq:constraint_node}
\STATE Start $\vec{y}$ as a null vector, $t=0$
\WHILE {  $t\leq \tau $ }
\STATE Generate $s$ samples $B_1,B_2,...,B_s$ where $e_i$ belongs to $B_j$ ($\forall j\in [s]$) with probability $y_i$
\STATE Set weight of an edge, $e_i$ as $w_i=\frac{\sum_{j=1}^s{\Delta(B_j\cup\{e_i\})-\Delta(B_j)}}{s}$
\STATE Compute an edge set $E^Y$ maintaining the constraint (Eq. \ref{eq:constraint_node}) and maximizes  $\sum_{e_i\in E^Y} w_i$
\STATE For all $e_i\in E^Y$, set $y_i=y_i+1/\tau$
\STATE $t=t+1$
\ENDWHILE
\RETURN $\vec{y}$
\end{algorithmic}
\label{algo:CG}
\end{algorithm}


We show that the new objective function $f$ is smooth (i.e. it has a second derivative), monotone and submodular. Based on these properties, we can design a continuous greedy algorithm that produces a relaxed solution for ILM with a constant-factor approximation  \cite{Vondrak08}.

\begin{thm}\label{thm:continuous_submodular}
The objective function $f$ is a smooth monotone submodular function.
\end{thm}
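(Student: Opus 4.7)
The plan is to leverage the fact that $f$ is the multilinear extension of $\Delta$, and that we have already established in Theorem \ref{thm:submodular} that $\Delta$ itself is monotone and submodular. The strategy is standard for multilinear extensions: translate each of the three properties (smoothness, monotonicity, submodularity) into statements about partial derivatives of $f$, then reduce each such statement to a corresponding discrete property of $\Delta$ that we already have in hand.

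First, I would note that $f(\vec{y})$, written out as a sum over all $B \subseteq C$, is a multilinear polynomial in $y_1,\ldots,y_c$ of degree at most one in each variable, since each $y_i$ appears linearly (either as $y_i$ if $e_i \in B$ or $(1-y_i)$ otherwise). Hence $f$ is a polynomial, so it is $C^\infty$ and in particular possesses continuous second derivatives on $[0,1]^c$, which gives smoothness.

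Next, for monotonicity I would compute the first partial derivative. Because $f$ is multilinear, a direct calculation yields
\[
\frac{\partial f}{\partial y_i} \;=\; \mathbf{E}_{B \sim \vec{y}}\bigl[\Delta(B \cup \{e_i\}) - \Delta(B \setminus \{e_i\})\bigr],
\]
where the expectation is taken over random subsets $B$ that include each $e_j$ independently with probability $y_j$. Since Theorem \ref{thm:submodular} gives that $\Delta$ is monotone, every realization of the expression inside the expectation is non-negative, so $\partial f/\partial y_i \geq 0$ and $f$ is coordinate-wise non-decreasing on $[0,1]^c$.

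For submodularity of the continuous extension, the standard characterization is that all mixed second partial derivatives are non-positive. Again using multilinearity, for distinct $i,j$,
\[
\frac{\partial^2 f}{\partial y_i \, \partial y_j} \;=\; \mathbf{E}_{B \sim \vec{y}}\bigl[\Delta(B \cup \{e_i,e_j\}) - \Delta(B \cup \{e_i\}) - \Delta(B \cup \{e_j\}) + \Delta(B)\bigr],
\]
where $B$ is sampled over the remaining coordinates. The bracketed quantity is exactly the submodularity discrepancy for $\Delta$ on the pair $(e_i,e_j)$ relative to $B$, and Theorem \ref{thm:submodular} tells us it is $\leq 0$ for every $B$. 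Hence the expectation is $\leq 0$, giving the continuous submodularity condition. The main obstacle is purely bookkeeping in the derivative computations; the real content is entirely inherited from the monotonicity and submodularity of $\Delta$ proved earlier, so no new structural argument about the credit distribution model is needed here.
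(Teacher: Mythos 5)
Your proof takes essentially the same route as the paper's: view $f$ as the multilinear extension of $\Delta$, note that it is a polynomial (hence has second partials everywhere), and reduce $\partial f/\partial y_i \geq 0$ and $\partial^2 f/\partial y_i \partial y_j \leq 0$ to the monotonicity and submodularity of $\Delta$ established in Theorem~\ref{thm:submodular}. If anything, your sign conventions are the correct ones --- the paper's appendix states the second-order condition as $\partial^2 f/\partial y_i \partial y_j \geq 0$ and drops a sign on the last term of the expansion of the mixed second derivative, both of which appear to be typos rather than a different argument.
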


\begin{proof}
The proof exploits monotonicity and submodularity of the associated function $\Delta$ (see the Appendix).
\end{proof}


\textbf{Continuous Greedy (CG): } The continuous greedy algorithm (Algorithm \ref{algo:CG})  provides a solution set $\vec{y}$ such that $f(\vec{y})\geq(1-\frac{1}{e})f(Y^*)\geq (1-\frac{1}{e})\Delta(B^*)$ with high probability. The approximation guarantee exploits the facts that $\Delta$ is submodular (Theorem \ref{thm:continuous_submodular}) and ILM follows a matroid constraint (Observation \ref{obs:matroid}). CG is similar to the well-known Frank-Wolfe algorithm \cite{nocedal2006numerical}. It iteratively increases the coordinates (edge probabilities) towards the direction of the best possible solution with small step-sizes while staying within the feasible region. 
In \cite{Vondrak08}, Vondrak proves the following:
\begin{thm}\label{thm:vondrak_thm} 
The Continuous Greedy (Algorithm \ref{algo:CG}) returns a vector $\vec{y}$ that satisfies Equations \ref{eq:prob_edge} and \ref{eq:constraint_node} and such that $f(\vec{y})\geq (1-\frac{1}{e})\Delta(B^*)$ when $\tau=c^2$ and $s=c^5$.
\end{thm}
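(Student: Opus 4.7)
The plan is to follow Vondr\'ak's continuous greedy analysis \cite{Vondrak08}, adapted to our setting in which submodularity and the matroid property have already been established (Theorem \ref{thm:continuous_submodular} and Observation \ref{obs:matroid}). The proof has two parts: verifying feasibility of the returned $\vec{y}$, and proving the $(1-1/e)$ approximation bound. Feasibility is the easy part. At every iteration, step~5 selects an edge set $E^Y$ that satisfies the matroid independence constraint (\ref{eq:constraint_node}), so its indicator vector $\mathbf{1}_{E^Y}$ lies in the matroid polytope $P(M)$. After $\tau$ iterations, $\vec{y} = \frac{1}{\tau}\sum_{t=1}^{\tau}\mathbf{1}_{E^Y_t}$ is a convex combination of points in $P(M)$ and hence feasible; in particular each coordinate lies in $[0,1]$ and (\ref{eq:constraint_node}) holds.

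The main task is the approximation bound. Let $Y^*$ be any maximizer of $f$ over the matroid polytope, so $f(Y^*) \ge \Delta(B^*)$ as argued right before Algorithm \ref{algo:CG}. The key lemma I would prove is a one-step progress inequality: at iteration $t$, writing $\vec{y}^{(t)}$ for the current iterate and letting $w_i^*$ denote the true marginal $\mathbf{E}_{B\sim \vec{y}^{(t)}}[\Delta(B\cup\{e_i\})-\Delta(B)]$, monotonicity and submodularity of $f$ imply $\sum_{e_i \in E_{Y^*}} w_i^* \ge f(Y^* \vee \vec{y}^{(t)}) - f(\vec{y}^{(t)}) \ge f(Y^*) - f(\vec{y}^{(t)})$, where $E_{Y^*}$ is any independent support of $Y^*$. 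Since step~5 picks $E^Y$ maximizing $\sum_{e_i \in E^Y} w_i$ over independent sets, the estimated value at least matches the one for $E_{Y^*}$, so if the sampled weights were exact we would obtain the clean recursion $f(\vec{y}^{(t+1)}) - f(\vec{y}^{(t)}) \ge \frac{1}{\tau}\bigl(f(Y^*) - f(\vec{y}^{(t)})\bigr)$ (modulo a discretization term from moving in direction $\mathbf{1}_{E^Y}/\tau$ rather than infinitesimally). Unrolling this recursion from $\vec{y}^{(0)}=\vec{0}$ for $\tau$ steps gives $f(\vec{y}^{(\tau)}) \ge \bigl(1 - (1-1/\tau)^{\tau}\bigr) f(Y^*) \ge (1-1/e)\,\Delta(B^*)$.

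The hard part, and where the specific choices $\tau = c^2$ and $s = c^5$ are consumed, is controlling the two error terms that were swept under the rug above. The first is the discretization error: expanding $f(\vec{y}^{(t)} + \mathbf{1}_{E^Y}/\tau)$ via the second-order Taylor remainder and using smoothness of the multilinear extension, this error is $O(1/\tau^2)$ per step and $O(1/\tau) = O(1/c^2)$ over all iterations, negligible compared to $f(Y^*)\le c\cdot \max_i \Delta(\{e_i\})$. The second is the sampling error: each weight $w_i$ is the empirical average of $s = c^5$ bounded samples of a marginal contribution, and a Hoeffding/Chernoff bound together with a union bound over the $c$ edges and $\tau = c^2$ iterations shows that every $w_i$ is within an additive $O(1/c)$ of $w_i^*$ with high probability. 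Plugging both error bounds into the one-step inequality and re-solving the recursion preserves the $(1-1/e)$ factor up to a lower-order additive loss that vanishes as $c$ grows, yielding $f(\vec{y}) \ge (1-1/e)\Delta(B^*)$ with high probability. The principal technical obstacle is the joint handling of these two errors so that they remain dominated by the main progress term; the choice $s = c^5$ (rather than a smaller polynomial) is precisely what lets the union bound over all $\tau c$ weight estimates go through while keeping the per-step slack much smaller than $\frac{1}{\tau}f(Y^*)$.
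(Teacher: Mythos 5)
The paper does not prove this theorem at all: it is imported verbatim from Vondr\'ak \cite{Vondrak08}, with the text immediately preceding the statement reading ``In \cite{Vondrak08}, Vondrak proves the following.'' Your proposal is therefore not an alternative to the paper's proof but a reconstruction of Vondr\'ak's, and as a reconstruction it is essentially faithful and correct: feasibility of $\vec{y}$ as a convex combination of indicator vectors of independent sets, the one-step progress inequality $\sum_{e_i} w_i^* \geq \mathrm{OPT} - f(\vec{y}^{(t)})$ derived from monotonicity and submodularity, the unrolled recursion giving $1-(1-1/\tau)^{\tau} \geq 1-1/e$, and the role of $\tau=c^2$ and $s=c^5$ in taming the discretization and sampling errors are all the right ingredients in the right places. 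One imprecision worth fixing: you state the key inequality against ``any independent support of $Y^*$,'' but the support of a fractional maximizer of $f$ over the matroid polytope need not be an independent set, so that comparison requires decomposing $Y^*$ into vertices of the polytope. For the theorem as stated this detour is unnecessary --- compare directly against the integral optimum $B^*$, for which $\sum_{e_i\in B^*} w_i^* \geq f(\mathbf{1}_{B^*}\vee \vec{y}^{(t)}) - f(\vec{y}^{(t)}) \geq \Delta(B^*) - f(\vec{y}^{(t)})$ holds by monotonicity and submodularity, and step~5 of Algorithm \ref{algo:CG} selects an independent set whose (estimated) weight is at least that of $B^*$ since $B^*$ is itself feasible for the matroid. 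With that substitution the argument closes cleanly and matches the bound claimed in the theorem.
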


The values $\tau$ and $s$ correspond to the number of iterations and samples applied by the CG algorithm. 

The costliest operations of CG are steps 3, 4 and 5. Step 3 takes $O(c\cdot s)$ time, as it visits each edge in the candidate set $C$ ($|C|=c$). Step 4 computes the contribution of edges, having worst case time complexity $O(s\cdot c \cdot \sum_{a \in \mathscr{A}} |V(a)|)$. Step 5 greedily selects the best set of edges, according to the weights. Therefore, the total running time of the algorithm is $O(\tau (s\cdot c \cdot \sum_{a \in \mathscr{A}} |V(a)|+  c\log c))$. 



\subsection{Rounding} \label{sec:rounding}
Algorithm \ref{algo:CG} returns a vector $\vec{y}$ satisfying Equations \ref{eq:prob_edge} and \ref{eq:constraint_node} while producing $f(\vec{y})\geq (1-\frac{1}{e})\Delta(B^*)$. However, as the vector $\vec{y}$ contains values as probabilities (between $0$ and $1$), a rounding step over the vector $\vec{y}$ is still required for obtaining a deterministic set of edges. 

We describe two rounding schemes. The first is a computationally-intensive lossless rounding procedure for matroids known as \textit{swap rounding} \cite{chekuri2010dependent}. Next, we address the high time complexity issue, by proposing a simpler and faster randomized procedure. We show that, our independent rounding method produces feasible edges with low error and high probability. 

\subsubsection{Dependent Rounding \cite{chekuri2010dependent}:} The main idea of this technique is to represent the solution as a linear combination of maximal independent sets in the matroid. After obtaining the representation, the \textit{strong exchange property} \cite{chekuri2010dependent} of matroids is applied in a probabilistic way to generate the final solution. More details are given in \cite{chekuri2010dependent}. 

\subsubsection{Randomized Rounding \cite{raghavan1987randomized}: }  
We sort the edges according to their weights (probabilities) and round them while maintaining feasibility. The procedure is fast as it only makes a single pass over the candidate edges in $C$. In practice, we perform this $50$ times and choose the best solution among the rounded feasible sets. In Section \ref{sec:ilm_exp}, we show that this procedure generates good results.

In order to analyze the effect of this randomized procedure, we assume that it is unaware of the dependency between the edges. Let $\mathcal{B}$ be the edge set produced by rounding, i.e. $f(\vec{y})= \textbf{E}[\Delta(\mathcal{B})]$, and let $E_v \subset \mathcal{B}$ be the incoming edges incident on node $v$. The next theorem shows that the randomized procedure will produce a feasible set within error $\epsilon$ with (high) probability $1-\frac{1}{n}$, where $n=|V|$ is the number of nodes.

\begin{thm}\label{thm:randomized_rounding}
The following bound holds for the number of edges incoming to $v$ in the rounded set:
\begin{equation*}
Pr(|E_v|< (1+\epsilon)b)\geq 1- \frac{1}{n}
\end{equation*}
 where $\epsilon= \sqrt{\frac{6\log{n}}{b}}$.
\end{thm}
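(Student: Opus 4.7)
The plan is to apply a standard multiplicative Chernoff bound to the sum of indicator variables for the incoming edges at $v$, since the randomized rounding procedure is assumed (as stated in the paper preceding the theorem) to treat the edges as independent. First I would set up the random variables: for each edge $e_i \in E_{in}(v)$, let $X_i \in \{0,1\}$ indicate whether $e_i$ is retained by the rounding. Under the independence assumption, $X_i$ is a Bernoulli with parameter $y_i$, and $|E_v| = \sum_{e_i \in E_{in}(v)} X_i$. The feasibility constraint Equation~\ref{eq:constraint_node} satisfied by the output of Algorithm~\ref{algo:CG} gives
\begin{equation*}
\mu_v \;:=\; \mathbf{E}[|E_v|] \;=\; \sum_{e_i \in E_{in}(v)} y_i \;\leq\; b.
\end{equation*}

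Next I would invoke the upper-tail Chernoff bound for sums of independent Bernoullis: for any $\delta > 0$ and any upper bound $U \geq \mu_v$,
\begin{equation*}
\Pr\bigl(|E_v| \geq (1+\delta)U\bigr) \;\leq\; \exp\!\left(-\frac{\delta^2 U}{3}\right).
\end{equation*}
Applying this with $U = b$ and $\delta = \epsilon = \sqrt{6\log n / b}$ yields
\begin{equation*}
\Pr\bigl(|E_v| \geq (1+\epsilon)b\bigr) \;\leq\; \exp\!\left(-\frac{\epsilon^2 b}{3}\right) \;=\; \exp(-2\log n) \;=\; \frac{1}{n^2},
\end{equation*}
which already implies $\Pr(|E_v| < (1+\epsilon)b) \geq 1 - 1/n^2 \geq 1 - 1/n$. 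A union bound over the $n$ nodes then gives the same $(1+\epsilon)b$ guarantee simultaneously at every node with probability at least $1 - 1/n$, which is presumably the intended reading of the statement.

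The main obstacle is mostly bookkeeping: I must be careful to use the version of the Chernoff bound that tolerates the slack $\mu_v \leq b$ rather than assuming equality (equivalently, monotonicity of the tail lets me inflate $\mu_v$ up to $b$ inside the $(1+\delta)$ factor). A minor subtlety is that the ``independent rounding'' analyzed here is a simplification of the sort-and-round procedure actually used in practice; as the paper acknowledges just before the theorem statement, the analysis deliberately ignores the dependencies introduced by the feasibility-maintaining sweep, which is what makes the clean Chernoff argument go through.
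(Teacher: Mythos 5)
Your proof is correct and follows essentially the same route as the paper: bound $\mathbf{E}[|E_v|]\leq b$ via the feasibility constraint, apply the multiplicative Chernoff bound with the slack $\mu_v\leq b$, and finish with a union bound over the $n$ nodes to obtain the $1-1/n$ guarantee. Your explicit use of the Chernoff variant that allows an upper bound $U\geq\mu_v$ in both the threshold and the exponent is in fact a slightly more careful rendering of the step the paper performs implicitly.
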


\begin{proof}
Let $\mathcal{B}$ be the set of edges produced by the rounding procedure. An edge $e_i$ is included in $\mathcal{B}$ with probability $y_i$. As $\vec{y}$ is a feasible solution, $\sum_{e_i} y_i \leq b \quad \forall v \in V$ (Equation \ref{eq:constraint_node}) where $e_i$ is incident (incoming) to vertex $v$. Thus, $\textbf{E}(|E_v|)\leq b$. Applying the Chernoff's bound:

\begin{equation*}
Pr(|E_v| \geq (1+\epsilon)\textbf{E}(|E_v|)) < \exp{(-\frac{\textbf{E}(|E_v|)\epsilon^2}{3})}
\end{equation*}
Applying the union bound, $\forall v \in V$, we get: 
\begin{equation*}
Pr(|E_v| \geq (1+\epsilon)b) < n\cdot \exp{(-\frac{b\epsilon^2}{3})}
\end{equation*}

Substituting $\epsilon= \sqrt{\frac{6\log{n}}{b}}$, we get:
\begin{equation*}
Pr(|E_v| < (1+\epsilon)b) \geq 1- \frac{n}{n^2} = 1- \frac{1}{n}
\end{equation*}
This ends the proof.
\end{proof}

We emphasize two implications of this theorem: (1) The probability that the rounded solution is feasible depends on the error $\epsilon$ which is small whenever $b$ is large; (2) The rounding procedure has a probabilistic bi-criteria approximation, being lossless if the maximum number of edges to be removed per node is $b'=b(1+\epsilon)$.  

The proposed randomized rounding scheme is efficient, as it only performs one pass over the candidate edges $C$ in order to generate its output. In the Appendix, we compare the performance of the dependent and randomized rounding schemes.

\subsection{Generalizations} \label{sec:generalizations}

We briefly discuss other relevant scenarios where matroid constrained optimization can be applied in the context of influence limitation. 
Matroids can capture a large number of influence limitation settings, especially when edges in the solution can be naturally divided into partitions. Examples include the limitation of influence in non-overlapping communities \cite{bozorgi2017community}, disjoint campaigning \cite{lake1979new}, and problems where issues of fairness arise \cite{yadav2018bridging}. Moreover, influence boosting problems via attribute-level modification \cite{lin2017boosting} and edge addition \cite{Khalil2014} can also be modelled under matroid constraints. 
\subsection{Curvature and APX-hardness} \label{sec:curvature}
\begin{figure}[t]
\small
    \centering
    {\includegraphics[width=0.4\textwidth]{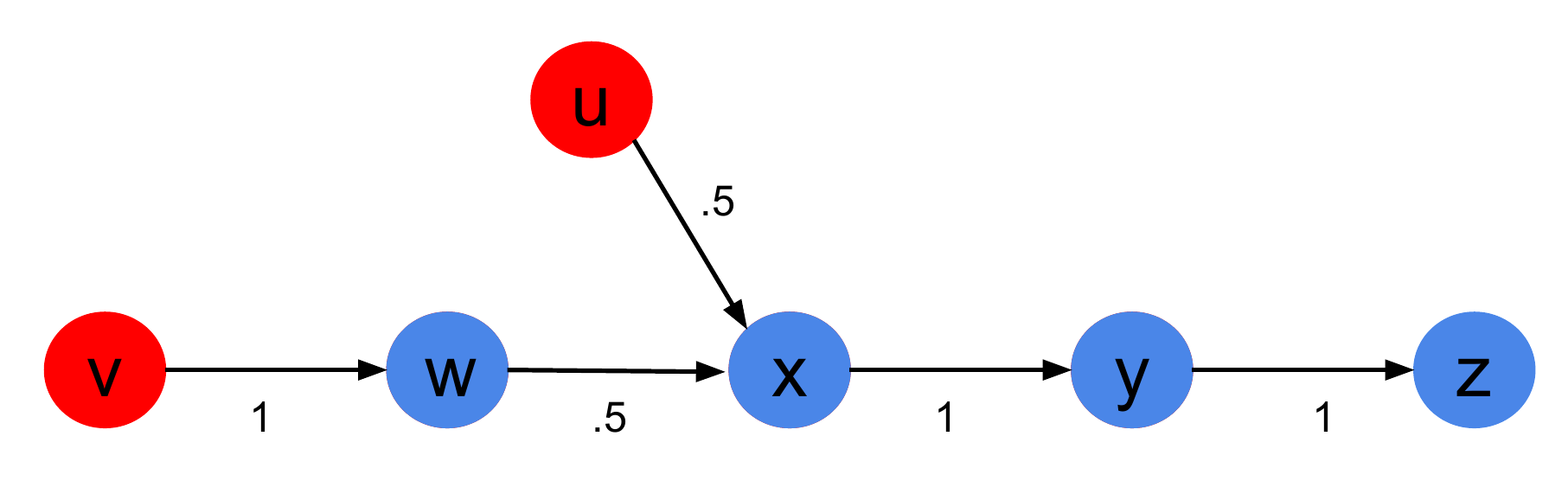}}
    \caption{ This illustrates a counter example in Theorem \ref{thm:apxhardness_ILM}. \label{fig:matroid_ex}}
\end{figure}

The ILM problem is NP-hard to approximate within a constant greater than $1-\frac{1}{e}$. We prove the same about BIL (budget constrained) in the Appendix. To show these results, we first describe a parameter named \textit{curvature} \cite{iyer2013curvature} that models the dependencies between elements (edges) in maximizing an objective function. 

In ILM, the objective is $max\{\Delta(B), B \subset C\}$ where $B$ is an independent set (Definition \ref{def:matroid}). Before proving APX-hardness, we first define the concept of \textit{total curvature} ($c_t$) \cite{vondrak2010submodularity}.

\begin{definition}
The total curvature of a monotone and submodular function  $\Delta$ is defined by:
\begin{equation*}
    c_t= 1-min_{S,e_i}\frac{\Delta(S\cup\{e_i\})-\Delta(S)}{\Delta(\emptyset\cup\{e_i\})-\Delta(\emptyset)}
\end{equation*}
\end{definition}

 The \textit{total curvature} measures how much the marginal gains decrease when an element is added to a set $S$. Intuitively, it captures the level of dependency between elements in a set $S$. For instance, if the marginal gains are independent ($c_t=0$) a simple greedy algorithm will be optimal. Let $S^*$ be the optimal solution set. The  \textit{curvature with respect to optimal} ($c_o$) \cite{vondrak2010submodularity} is defined as follows: 
 \begin{definition}
$\Delta$ has \textit{curvature with respect to optimal} $c_o\in[0,1]$ if  $c_o$ is the smallest value such that for every $T$:
\begin{equation*}
    \Delta(S^*\cup T) - \Delta(S^*)+ \sum_{j\in S^*\cap T} \big(\Delta(S^*\cup T\setminus\{e_i\}) -\Delta(S^*\cup T)\big) \geq (1-c_o)\Delta(T) 
\end{equation*}
\end{definition}

Vondrak \cite{vondrak2010submodularity} proves that there is no polynomial time algorithm that generates a better approximation than $\frac{1}{c_o}(1-e^{-c_0})$ for maximizing a monotone and submodular function with curvature $c_o$ under matroid constraints.

\begin{thm} \label{thm:apxhardness_ILM}
ILM is APX-hard and cannot be approximated within a factor greater than $(1-1/e)$.
\end{thm}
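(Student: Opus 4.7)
The plan is to leverage the inapproximability result of Vondrak referenced just above the theorem: for any monotone submodular function maximized under a matroid constraint, no polynomial-time algorithm achieves a ratio better than $\frac{1}{c_o}(1 - e^{-c_o})$, where $c_o$ is the curvature with respect to the optimum. Since this expression evaluates to $1 - 1/e$ exactly when $c_o = 1$, it suffices to exhibit an ILM instance whose objective $\Delta$ has $c_o = 1$. Monotonicity and submodularity of $\Delta$ are already established in Theorem \ref{thm:submodular}, and the partition-matroid structure follows from Observation \ref{obs:matroid}, so Vondrak's theorem is directly applicable once the curvature value is pinned down.

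To show $c_o = 1$, I would use the small graph sketched in Figure \ref{fig:matroid_ex} and specify the candidate set $C$, the target set $X$, the per-node bound $b$, and the optimum $S^*$ so that one can produce a set $T \subseteq C$ satisfying three properties: (i) $T \cap S^* = \emptyset$, which makes the correction sum over $S^* \cap T$ in the curvature inequality vanish; (ii) $\Delta(T) > 0$, so that individually the edges in $T$ do reduce the credit of $X$; and (iii) $\Delta(S^* \cup T) = \Delta(S^*)$, meaning that once $S^*$ is removed no further reduction is produced by also removing $T$. Property (iii) is natural in the CDM because a path contributes multiplicatively through its edges: as soon as $S^*$ cuts every $X$-to-$w$ path that traverses an edge of $T$, the additional removals in $T$ are redundant. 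With (i)--(iii) in place, the left-hand side of the curvature inequality equals $0$ while $\Delta(T) > 0$, which forces $1 - c_o \leq 0$; combined with the definitional bound $c_o \leq 1$, this gives $c_o = 1$.

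Substituting $c_o = 1$ into $\frac{1}{c_o}(1 - e^{-c_o})$ produces the claimed $(1 - 1/e)$ inapproximability, and because this is a constant strictly below $1$ it rules out a PTAS, establishing APX-hardness. The main obstacle is the concrete verification on the instance of Figure \ref{fig:matroid_ex}: one must simultaneously (a) argue that the proposed $S^*$ is genuinely optimal under the partition-matroid constraint and not merely some feasible solution, and (b) compute the CDM credits carefully enough to confirm that the marginal gain of adding $T$ on top of $S^*$ is exactly zero while $\Delta(T)$ alone is positive. Both steps are routine algebraic path-counting in a small DAG, but require careful bookkeeping because the credits depend on products along all minimal paths.
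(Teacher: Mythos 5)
Your proposal follows essentially the same route as the paper: the paper likewise invokes Vondrak's $\frac{1}{c_o}(1-e^{-c_o})$ bound and exhibits the Figure \ref{fig:matroid_ex} instance with $C=\{(w,x),(x,y),(y,z)\}$, $b=1$, $X=\{u,v\}$, $S^*=\{(w,x),(x,y)\}$ and $T=\{(y,z)\}$, checking $S^*\cap T=\emptyset$ and $\Delta(S^*\cup T)=\Delta(S^*)=2.5$ to force $c_o=1$. Your outline is, if anything, slightly more careful than the paper's in making explicit that $\Delta(T)>0$ is needed for the curvature inequality to yield $c_o=1$; the remaining work you flag is exactly the path-credit bookkeeping the paper carries out on that small instance.
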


\begin{proof}
ILM is a monotone and submodular optimization problem under a matroid constraint. We prove the inapproximability result by designing a problem instance where the \textit{curvature with respect to optimal} ($c_o$) is $1$. Consider the example in Figure \ref{fig:matroid_ex}, the candidate set $C=\{(w,x),(x,y),(y,z)\}$, $b=1$ and the target set $X=\{u,v\}$. In this setting, one of the optimal sets $S^*={(w,x),(x,y)}$. Assuming $T={(y,z)}$ will imply $S^*\cap T=\emptyset$. If $ \Delta(S^*\cup T) - \Delta(S^*)=0$, then $c_o$ has to be $1$. Note that, $ \Delta(S^*\cup T) = \Delta(S^*)= 2.5$, which leads to $c_o=1$. Therefore, ILM cannot be approximated within a factor greater than $\frac{1}{1}(1-e^{-1})$ and our claim is proved.
 \end{proof}

Both the BIL and ILM problems are APX-hard and cannot be approximated within a  constant greater than $1-\frac{1}{e}$. However, Algorithm \ref{algo:Greedy} (Greedy) provides tight approximation ($1-\frac{1}{e}$) for BIL and Algorithm \ref{algo:CG} (CG) produces the same for ILM with high probability.

\section{Experimental Results}
\label{sec:exp}

\begin{table}[t]
\centering
\begin{tabular}{| c|c |c | c | c |}
\hline
\textbf{Dataset Name}& \textbf{$|V|$} & \textbf{$|E|$} & $\#$Action &$\#$Tuple\\
\hline
\textbf{ca-AstroPh (CA)}& $18k$&$197k$ & $1k$ &$56k$\\
\hline
\textbf{email-EuAll (EE)}& $265k$&$420k$ & $-$ &$-$\\
\hline
\textbf{Youtube (CY)}& $1.1m$&$2.9m$ & $-$ &$-$\\
\hline
\textbf{Flixster-small (FXS)}& $15k$&$191k$ & $1.8k$ &$30k$\\
\hline
\textbf{Flickr-small (FCS)}& $15k$&$1.4m$ & $1.4k$ &$10k$\\
\hline
\textbf{Flixster (FX)}& $1m$&$28m$ & $49k$ &$8.2m$\\
\hline
\textbf{Flickr (FC)}& $1.3m$&$81m$ & $296k$ &$36m$\\
\hline
\end{tabular}
\caption{The table shows the description and statistics of the datasets. We generate synthetic actions via IC model for CA, EE and CY datasets. The number of tuples (and thus actions) are varied for different experiments.\label{table:data_description}}
 \end{table}

\begin{figure}[t]
    \centering
    \subfloat[CA (varying k)]{\includegraphics[width=0.23\textwidth]{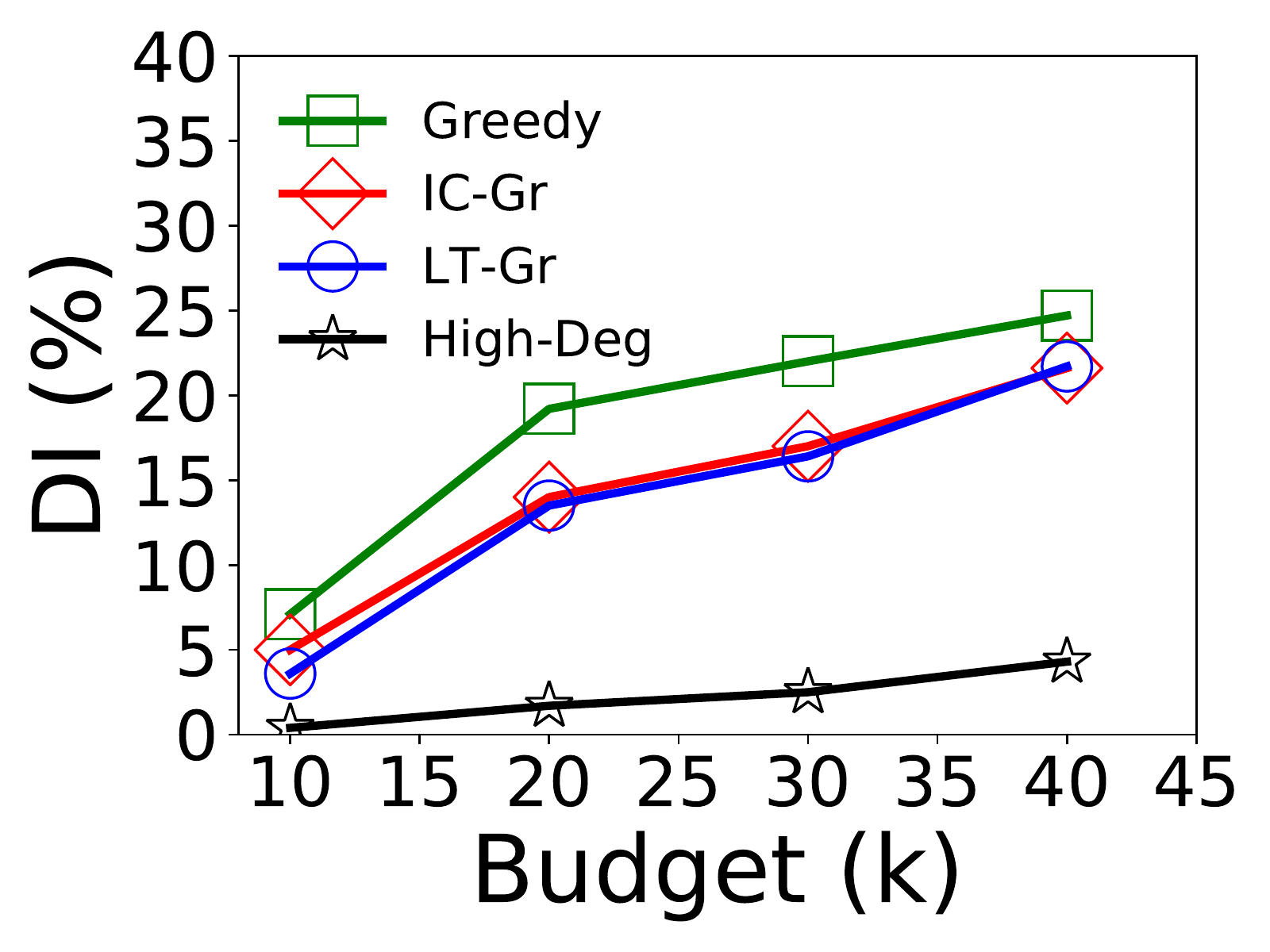}\label{fig:baseline_syn}}
    \subfloat[CA (varying |X|)]{\includegraphics[width=0.23\textwidth]{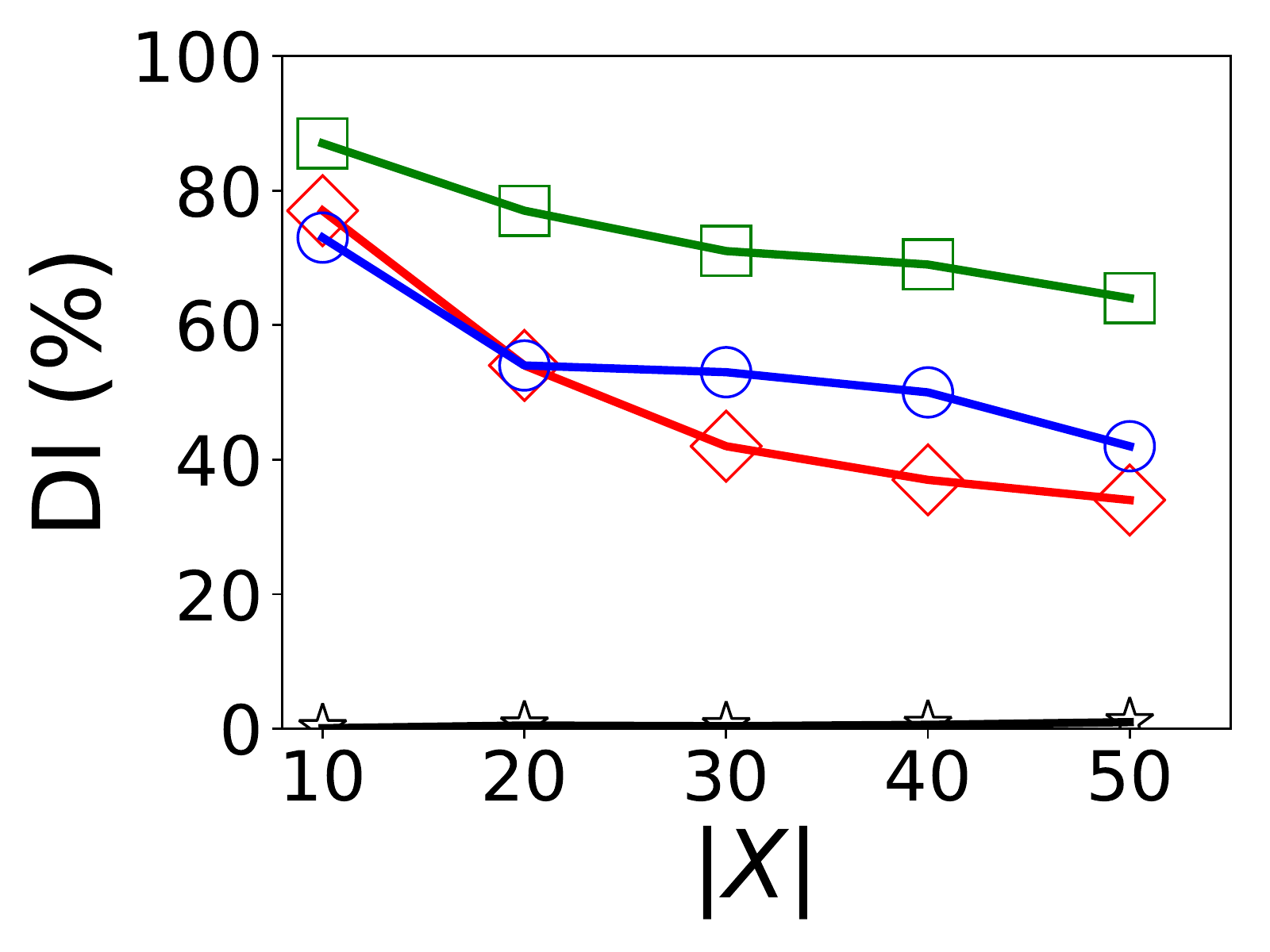}\label{fig:source_synthetic_sil}} \\
    \subfloat[FXS (varying k)]{\includegraphics[width=0.23\textwidth]{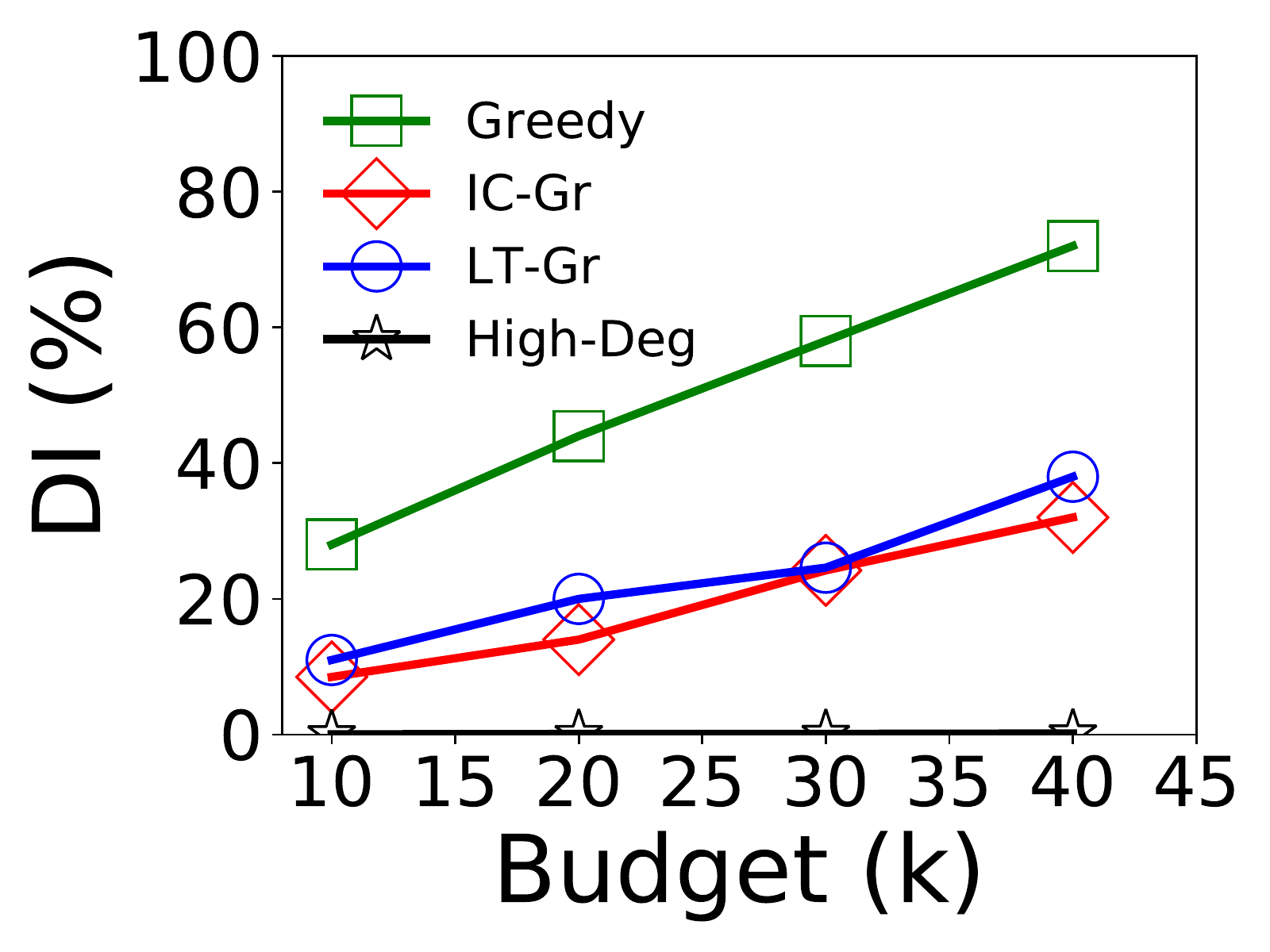}\label{fig:baseline_flix}} 
    \subfloat[FXS (varying |X|)]{\includegraphics[width=0.23\textwidth]{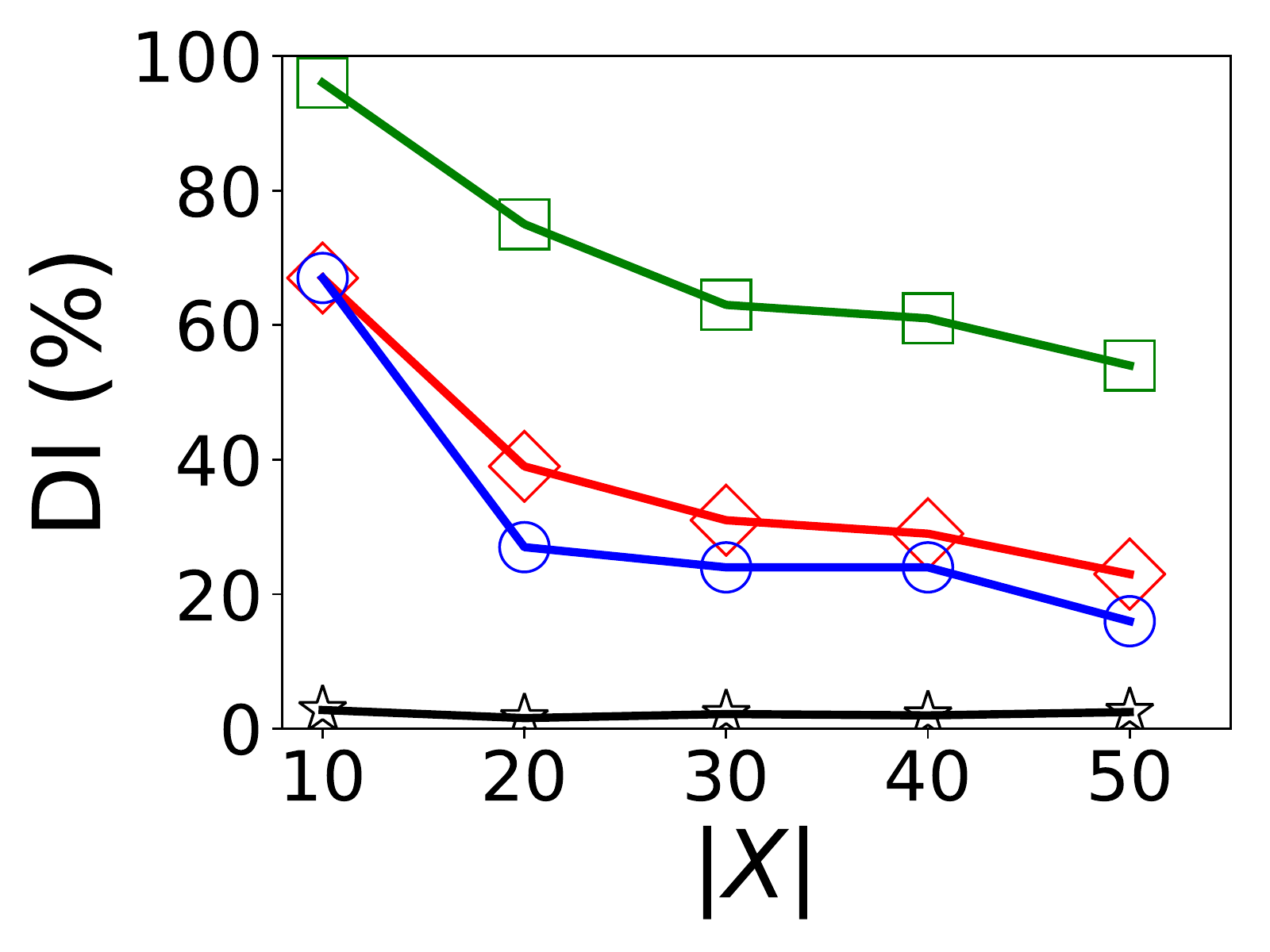}\label{fig:source_flix_sil}}\\
    \subfloat[FCS (varying k)]{\includegraphics[width=0.23\textwidth]{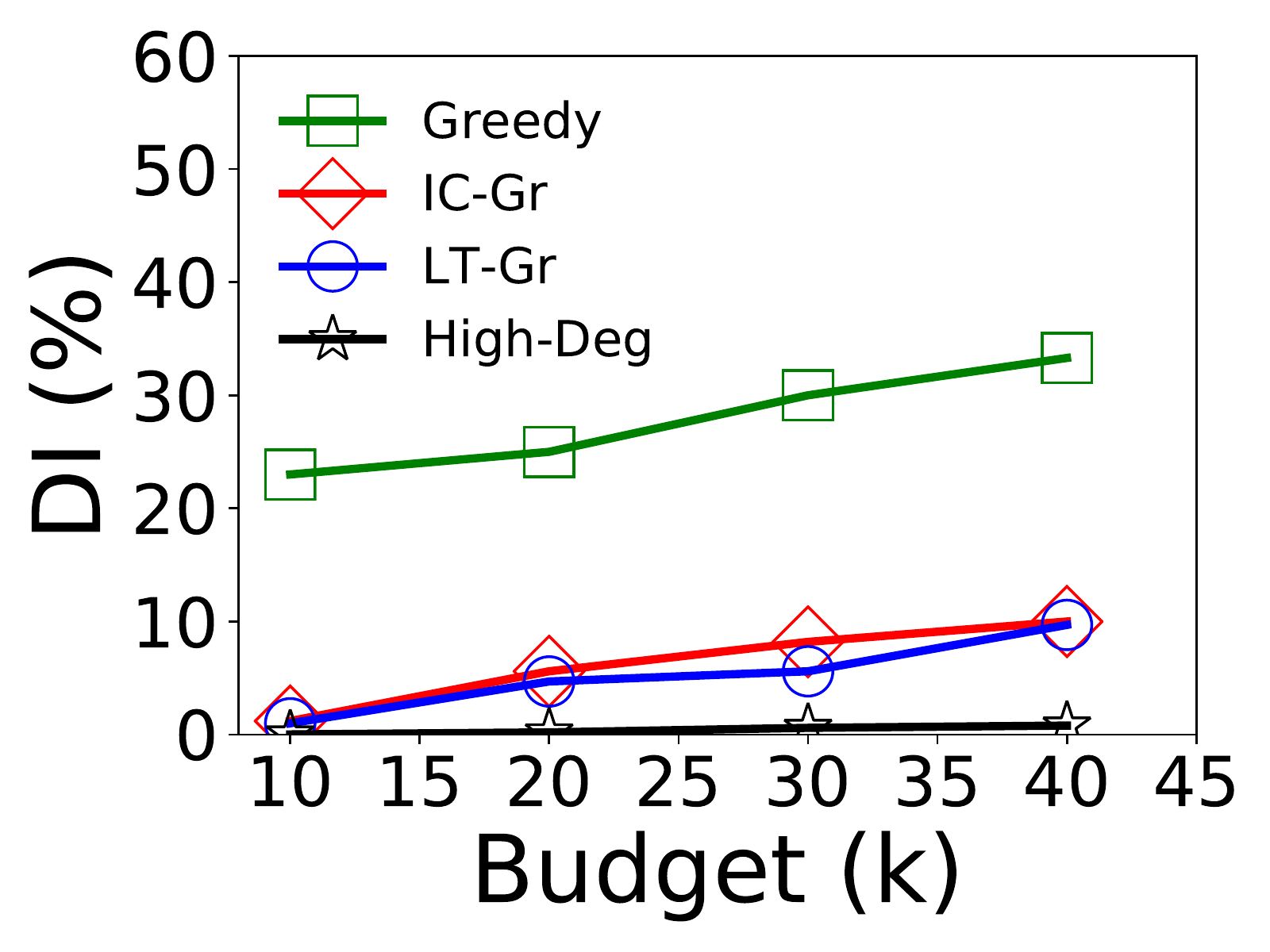}\label{fig:baseline_flick}}
    \subfloat[FCS (varying |X|)]{\includegraphics[width=0.23\textwidth]{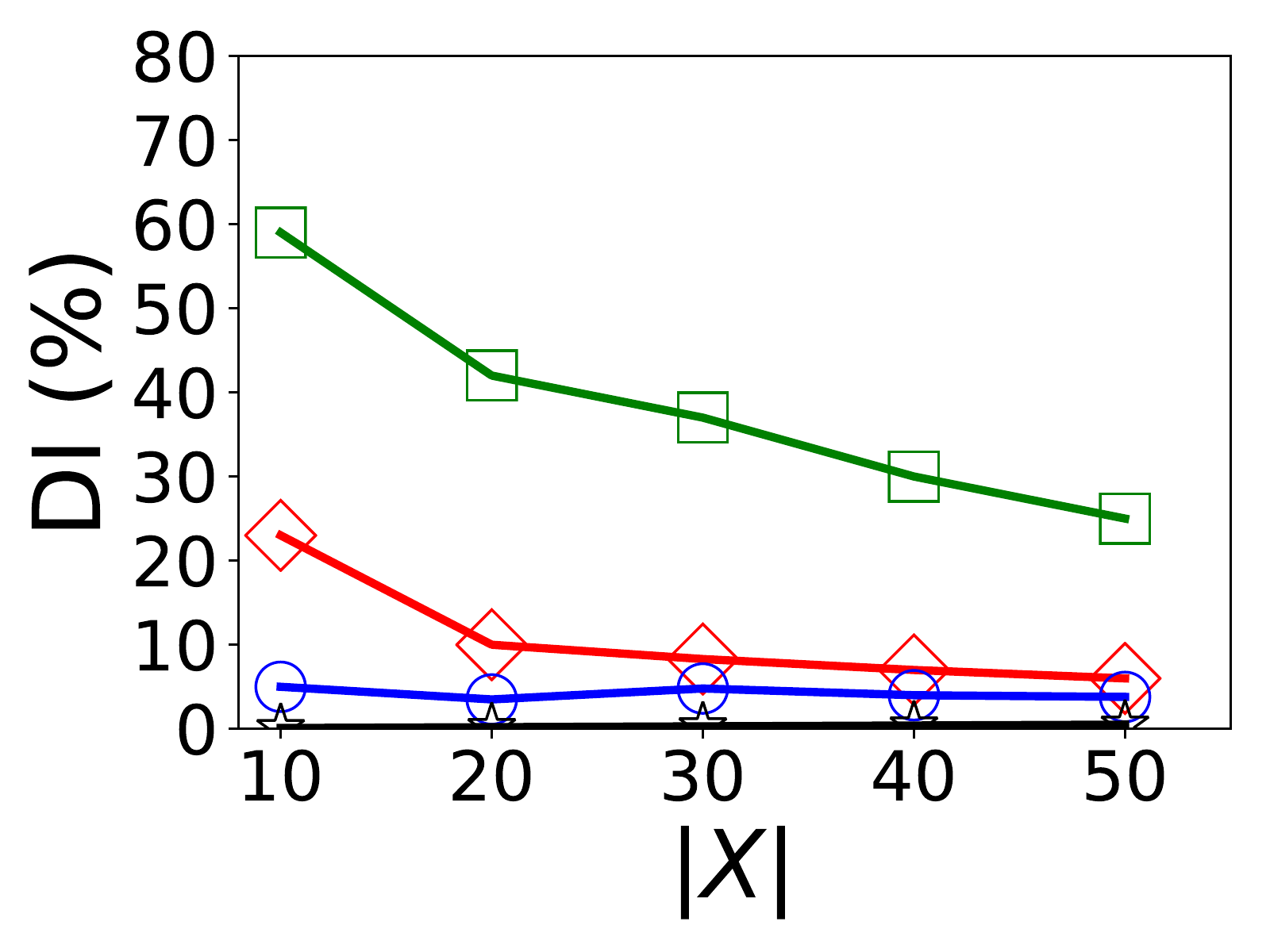}\label{fig:source_flick_sil}}
    \caption{[BIL] (a, c, e) Decrease in Influence (DI) produced by different algorithms. Greedy outperforms the baselines by up to $40\%$. (b, d, f) DI produced by different algorithms varying the size of the target set, $X$ with $k=30$. \label{fig:SIL_source_baseline}}
 
\end{figure}

We evaluate the quality and scalability of our algorithms using synthetic and real networks. Solutions were implemented in Java and experiments conducted on $3.30$GHz Intel cores with $30$ GB RAM. 

\noindent
\textbf{Datasets: } The datasets used in the experiments are the following:
1) \textbf{Flixster \cite{goyal2011}}: Flixster is an unweighted directed social
graph, along with the log of performed actions. The log has triples of $(u, a, t)$ where user $u$ has performed action $a$ at time $t$. Here, an action
for a user is rating a movie.
2) \textbf{Flickr \cite{mislove-2007-socialnetworks}}: This is a photo sharing platform. Here, an action would be joining an
interest group. 
3) \textbf{Synthetic}: We use the structure of real datasets that come from different genre (e.g., co-authorship, social). The networks are available online\footnote{https://snap.stanford.edu}.  We synthetically generate the actions and create associated tuples. Synthetic actions are generated assuming the Independent Cascade (IC) \cite{kempe2003maximizing} model. The ``ca-AstroPh" dataset is a Collaboration network of Arxiv Astro Physics. 
In the ``Youtube" social network, users form friendship with others and can create groups which other users can join.  Table \ref{table:data_description} shows the statistics of the datasets. We use the small extracted networks (from Flixster and Flickr) for the quality-related experiments as our baselines are not scalable. To show scalability of our methods, we extract networks of different sizes from the raw large Flixster and Flickr data. For all the networks, we learn the influence probabilities via the widely used method proposed by Goyal et al. \cite{goyal2010learning}.

\noindent 


\textbf{Performance Metric:} The quality of a solution set $B$ (a set of edges) is the percentage of decrease in the influence of the target set $X$. Thus, the \textit{Decrease in Influence (DI)} in percentage is:
\begin{equation}
\label{eq:quality_metric}
DI(B)= \frac{(\sigma_{cd}(G,X)-\sigma_{cd}(G^m,X))}{\sigma_{cd}(G,X)} \times 100
\end{equation}

\textbf{Other Settings: } The set of target nodes $X$ is randomly selected from the set of top $150$ nodes with highest number of actions. We build the candidate set $C$ with those edges that appear at least once in any action graph. The number of Monte Carlo simulations for IC and LT-based baselines is at least $1000$ if not specified otherwise.\\ 

\subsection{Experiments: BIL}

 \begin{table}[t]
\centering

\begin{tabular}{| c | c | c | c |}
\hline
&\multicolumn{3}{c|}{\textbf{FXS:} $\#$ (tuples, actions)$\times 10^3$}  \\
\hline

\textbf{Budget}& ($30,1.7$) & ($50,4.8$)& ($75,6.9$)\\
\hline
$k=50$ & $58$& $61$&  $68$ \\
\hline
$k=75$ &   $73$&$83$& $85$\\
\hline
$k=100$ &  $85$&$88$ & $91$\\
\hline
&\multicolumn{3}{c|}{\textbf{FCS:} $\#$ (tuples, actions)$\times 10^3$}  \\
\hline
&($20,2.6$) & ($30,3.8$) &($50,5.8$)\\
\hline
$k=50$ & $208$ &$383$ &$1187$\\
\hline
$k=75$ &  $269$ &$579$ &$1891$\\
\hline
$k=100$ &  $356$ &$780$ &$2551$\\
\hline
\end{tabular}

\caption{\textbf{[BIL] }Running Times (Scalability) of Greedy varying number of tuples. The times are in seconds. The number of tuples and actions are in thousands. \label{table:scalability_tuple}}

 \end{table}
 
 \textbf{Baselines:} We consider three baselines in these experiments: 1) \textbf{IC-Gr \cite{kimura2008minimizing}: } Finds the top $k$ edges based on the greedy algorithm proposed in \cite{kimura2008minimizing}, which minimizes influence via edge deletion under the IC model. 2) \textbf{LT-Gr \cite{Khalil2014}: } Finds the top $k$ edges based on the greedy algorithm proposed in \cite{Khalil2014}. Here, the authors minimize the influence of a set of nodes according to the LT model via edge deletion.  Note that we also apply optimization techniques proposed in \cite{Khalil2014} for both of these baselines. 3) \textbf{High-Degree:} This baseline selects edges between the target nodes $X$ and the top-$k$ high degree nodes. We have also applied the selection of top edges uniformly at random and using the Friends of a Friend (FoF) algorithm. The results are not significantly different (within $1\%$) from High-Deg. Thus, we use High-Deg as the representative baseline for them. 
 
 \noindent
 \subsubsection{Quality (vs Baselines)}
 \label{sec: qual_sil}
 
 We compare our Greedy algorithm (Algorithm \ref{algo:Greedy}) against the baseline methods on three datasets: CA, FXS, and FCS. The target set size is set as $30$.
Figures \ref{fig:baseline_syn}, \ref{fig:baseline_flix}, \ref{fig:baseline_flick} show the results, where the measure for quality is DI($\%$) (Eq. \ref{eq:quality_metric}). Greedy takes a few seconds to run and significantly outperforms the baselines (by up to $40\%$). The running time of Greedy is much lower as it avoids expensive Monte-Carlo simulations. For CA, the action graphs are generated through IC model. Therefore, the baseline IC-Gr produces better results on CA than other two datasets.  

\begin{figure}[t]
    \centering
    \subfloat[Quality on CA]{\includegraphics[width=0.22\textwidth]{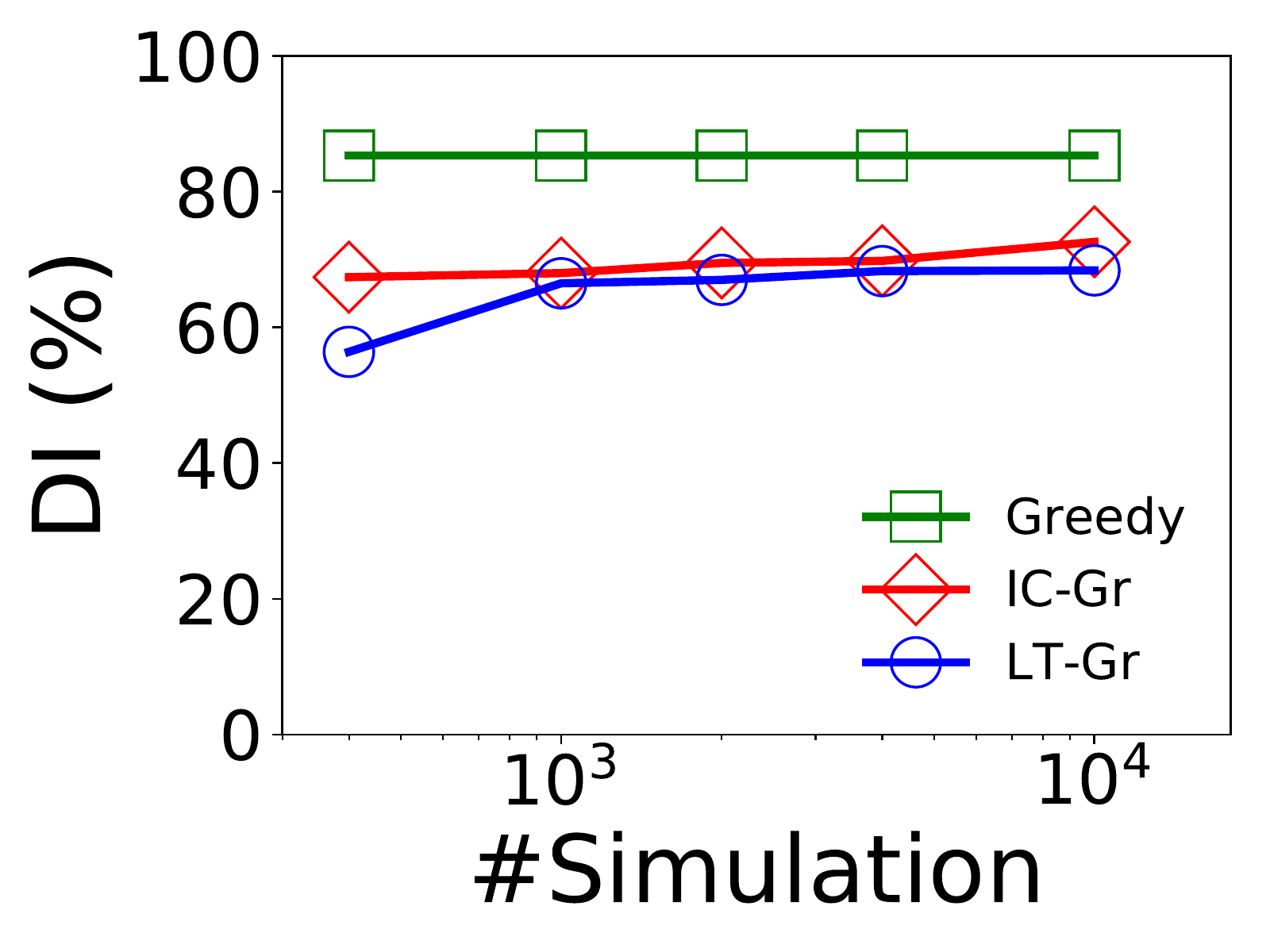}\label{fig:sim_qual_syn}}
    \subfloat[Quality on FXS]{\includegraphics[width=0.22\textwidth]{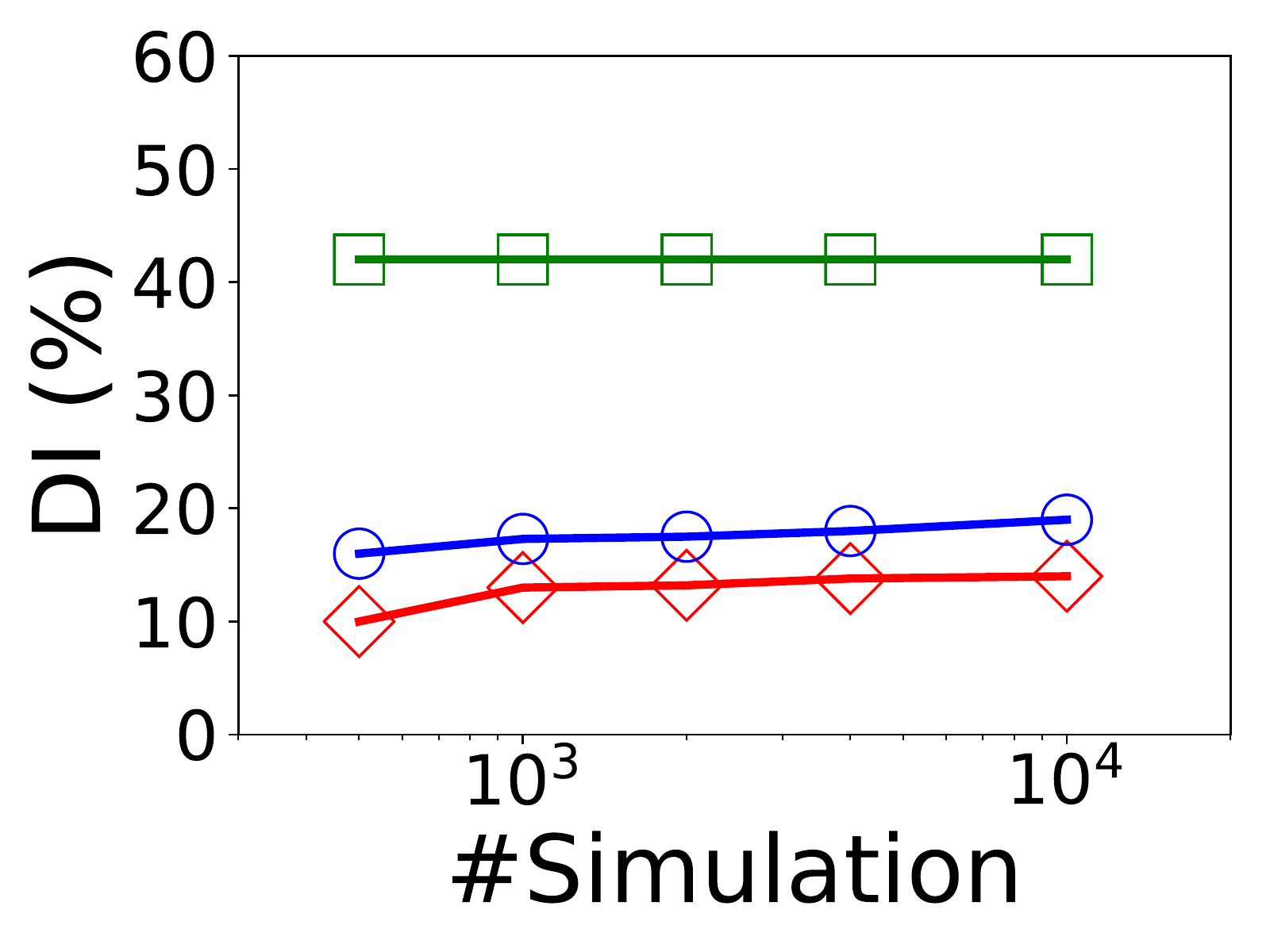}\label{fig:sim_qual_flix}}\\
    \subfloat[Time on CA]{\includegraphics[width=0.22\textwidth]{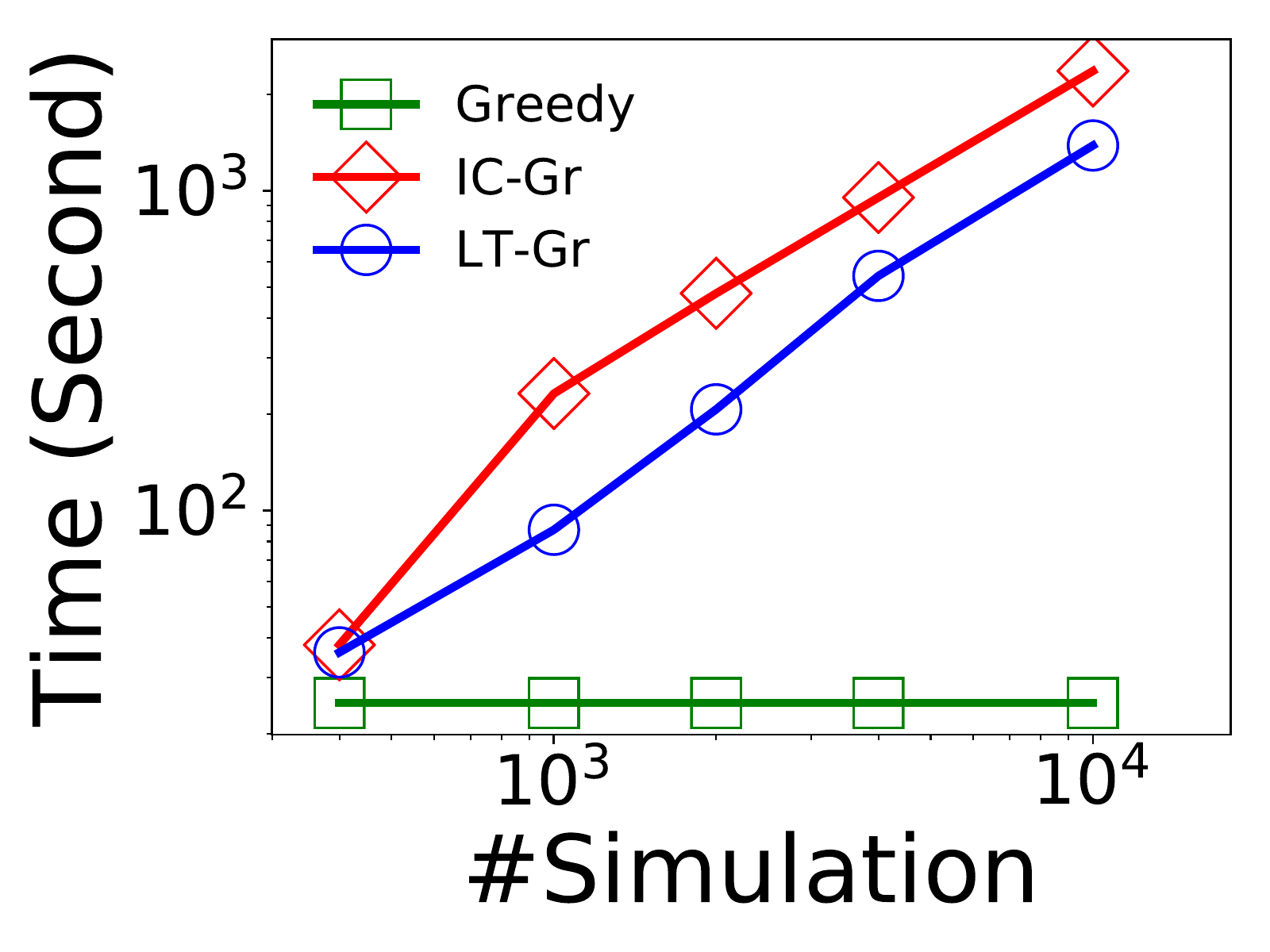}\label{fig:sim_time_syn}}
    \subfloat[Time on FXS]{\includegraphics[width=0.22\textwidth]{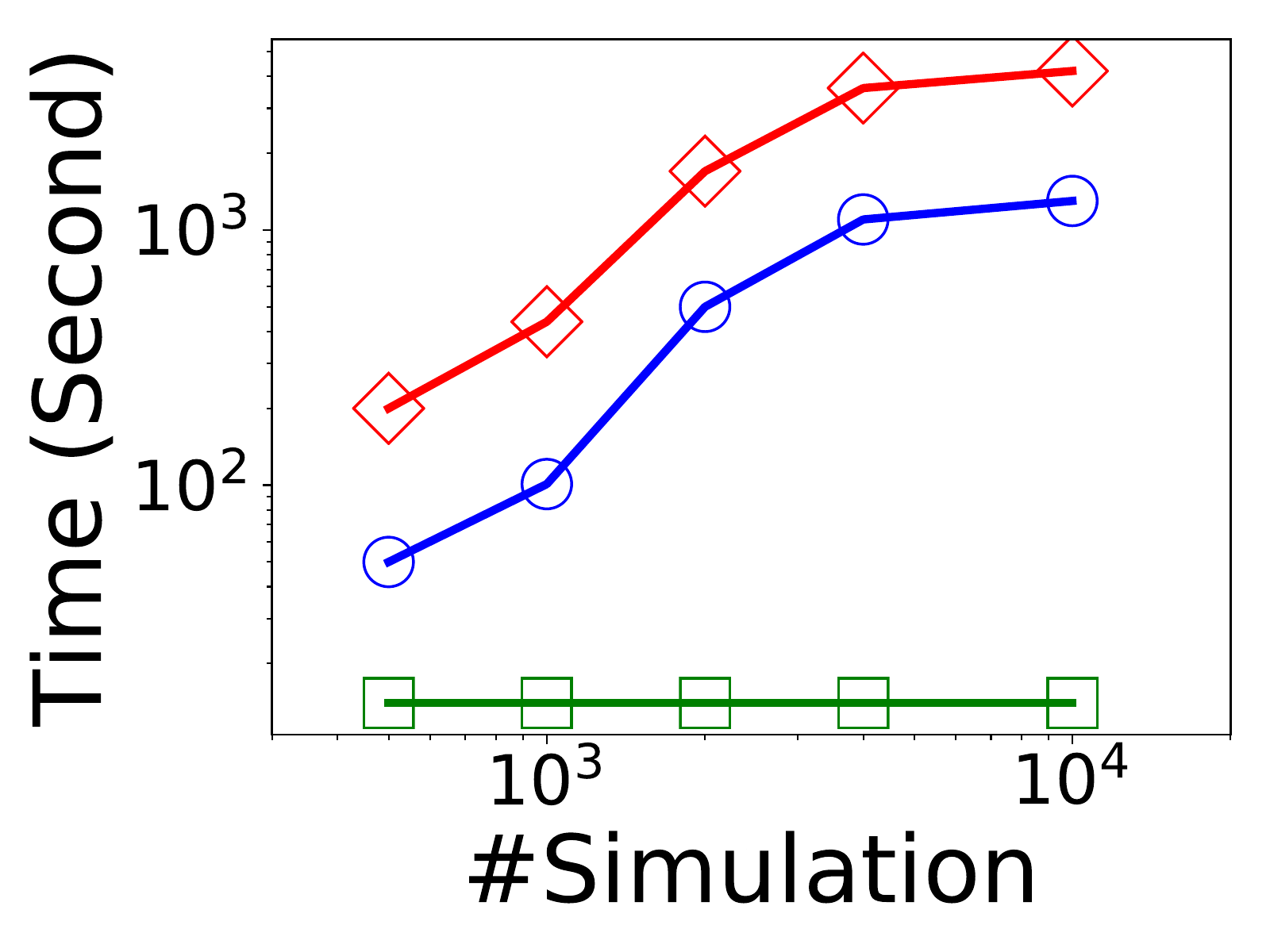}\label{fig:sim_time_flix}}
    \caption{[BIL] Comparison of our greedy algorithm and simulation based baselines varying number of simulations: (a-b) Quality on CA, FXS and (c-d) Running times on CA, FXS. \label{fig:sim_param}}
\end{figure}

\noindent
 \subsubsection{Scalability of Greedy}
 \label{sec: scale_sil}
We show the scalability of our Greedy algorithm (Algorithm \ref{algo:Greedy}) by increasing the number of tuples (thus the number of actions) as well as the size of the graph. Table \ref{table:scalability_tuple} shows the results on two datasets, FXS and FCS. As FCS is a graph with higher density than FXS, the number of tuples has higher effect on the running time in FCS. Note that we consider all the edges that appear in one of the actions in our candidate set of edges. A larger candidate set results in longer running time. However our algorithm only takes around $2$ and $43$ minutes to run for $75k$ and $50k$ tuples in FXS and FCS, respectively.

Table \ref{table:scalability_graph_BIL} shows the results varying the graph size. The running times are dominated by the size of both the graphs and the candidate sets. Greedy takes approximately 16 minutes on CY with 1m nodes and 6k candidate edges, whereas, it takes 67 minutes on FX with 200k nodes and 51k candidate edges.
 
 \begin{table}[t]
\centering

 \begin{tabular}{| c | c | c | c |c|c|}
\hline
\textbf{Dataset}& $|V|$ & Actions & Tuples & $|C|$ & Time (sec)\\
\hline
EE & 265k & 5k &  326k & 4.1k & 637 \\
\hline
CY &  1.1m & 5k &  313k & 6.3k & 950 \\
\hline
FX &  200k & 2.6k &  200k & 51k & 4020 \\
\hline
\end{tabular}
 
 \caption{\textbf{[BIL]} Running Times (Scalability) of Greedy varying graph size for $|X|=30$ and $k=30$.  \label{table:scalability_graph_BIL}}

 \end{table}

 \subsubsection{Parameter variations} \label{sec:param_BIL}
 We also analyze the impact of varying the parameters. We explain the effect of varying budget, number of tuples, and size of the graph over the performance of the algorithms in Sections \ref{sec: qual_sil} and \ref{sec: scale_sil}. Here we assess the impact of the number of target nodes (size of the target set, $|X|$). We also vary the number of simulations for LT-Gr and IC-Gr.  
 
First we vary the size of the target set $X$. Figures \ref{fig:source_synthetic_sil}, \ref{fig:source_flix_sil} and \ref{fig:source_flick_sil} show the results for CA, FXS and FCS, respectively. We fix the budget $k=30$ for these experiments. Greedy provides better $DI$ across all the target sizes and the datasets. With the increase in target set size, DI decreases for the top three algorithms. A larger target size would have a higher influence to reduce. Thus, with the same number of edges removed, the DI would decrease for larger target set. Also, DI is lower for FCS as it is much denser than CA and FXS.

We also evaluate how LT-Gr and IC-Gr are affected by the number of simulations. 
 We fix the target set size, $|X|=30$ and the budget, $k=20$. Figure \ref{fig:sim_param} shows the results. Our algorithm produces better results even when the baselines perform $10k$ simulations. By comparing figures \ref{fig:sim_qual_syn} and \ref{fig:sim_qual_flix}, it is evident that the baseline IC-Gr performs better than LT-Gr in CA as the synthetic actions are generated via IC model. So, intuitively, IC based greedy algorithm, IC-Gr should perform better than LT-Gr. Figures \ref{fig:sim_time_syn} and \ref{fig:sim_time_flix} also show that our method is $1-4$ orders faster than the simulation based baselines.

\subsection{Experiments: ILM}
\label{sec:ilm_exp}

\begin{figure}[t]
    \centering
    \subfloat[CA (b=1) ]{\includegraphics[width=0.21\textwidth]{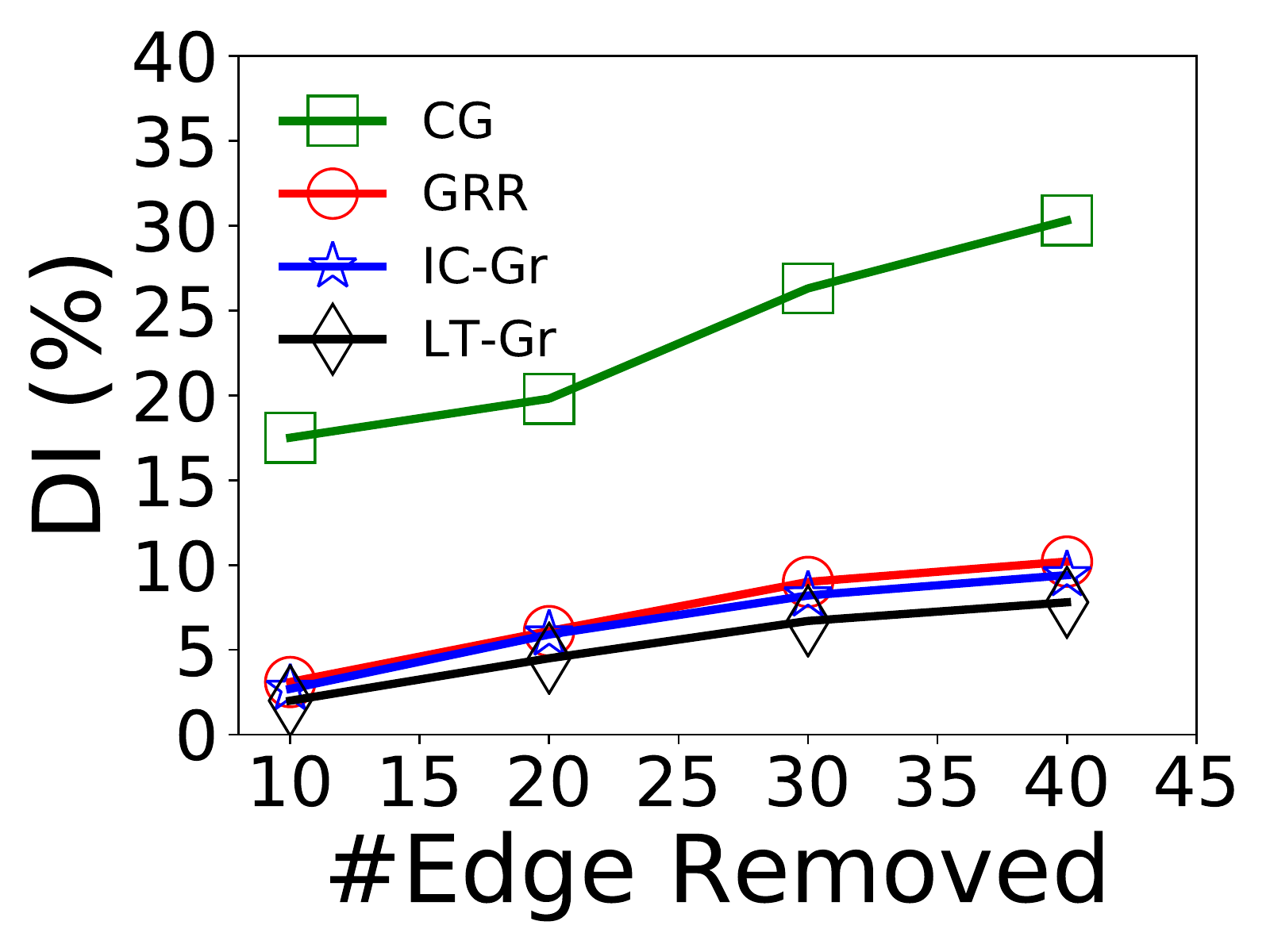}\label{fig:baseline_syn_ILM1}}
    \subfloat[CA (b=2)]{\includegraphics[width=0.21\textwidth]{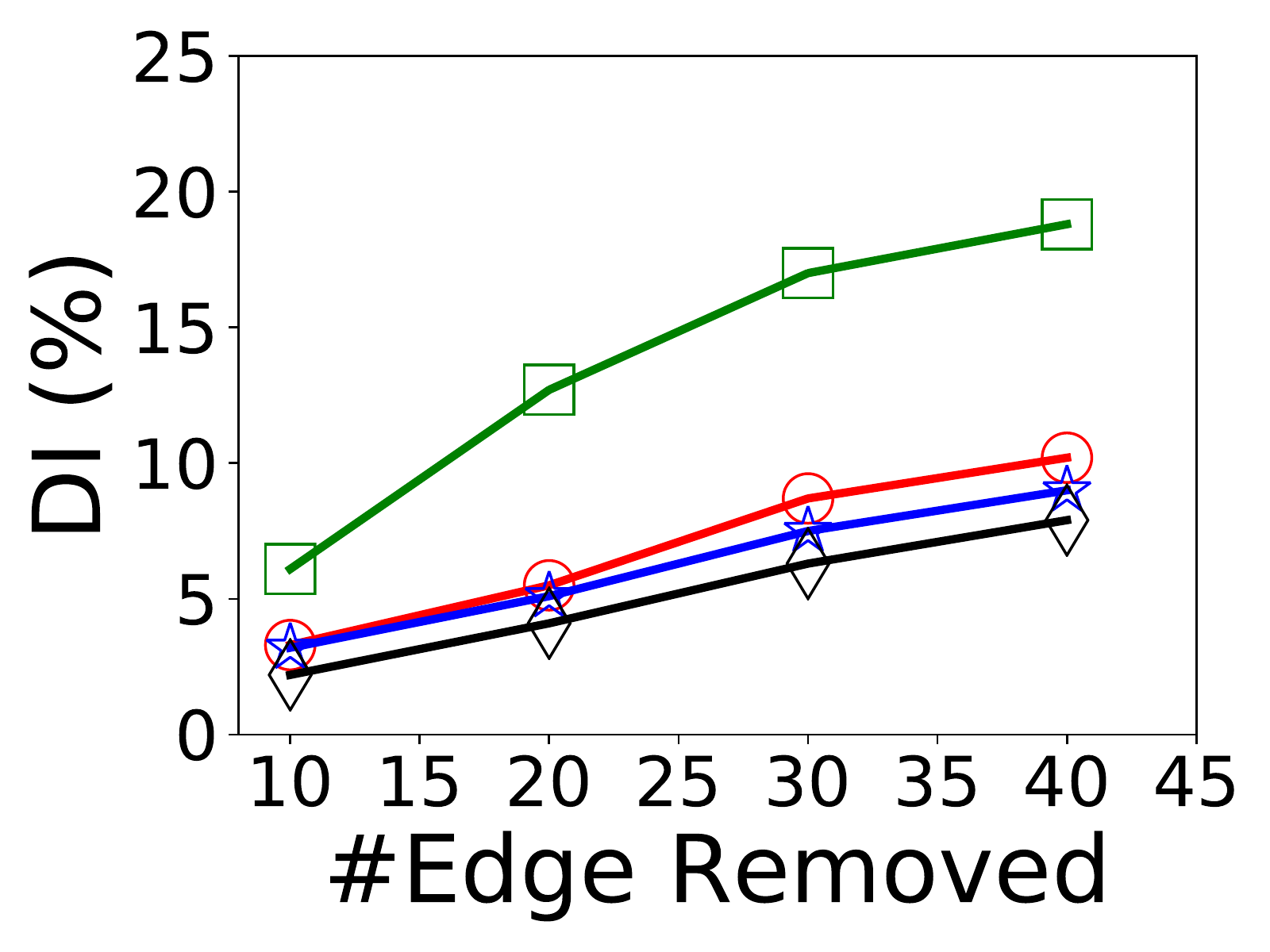}\label{fig:baseline_syn_ILM2}} \\
    
    
    \subfloat[FCS (b=1)]{\includegraphics[width=0.21\textwidth]{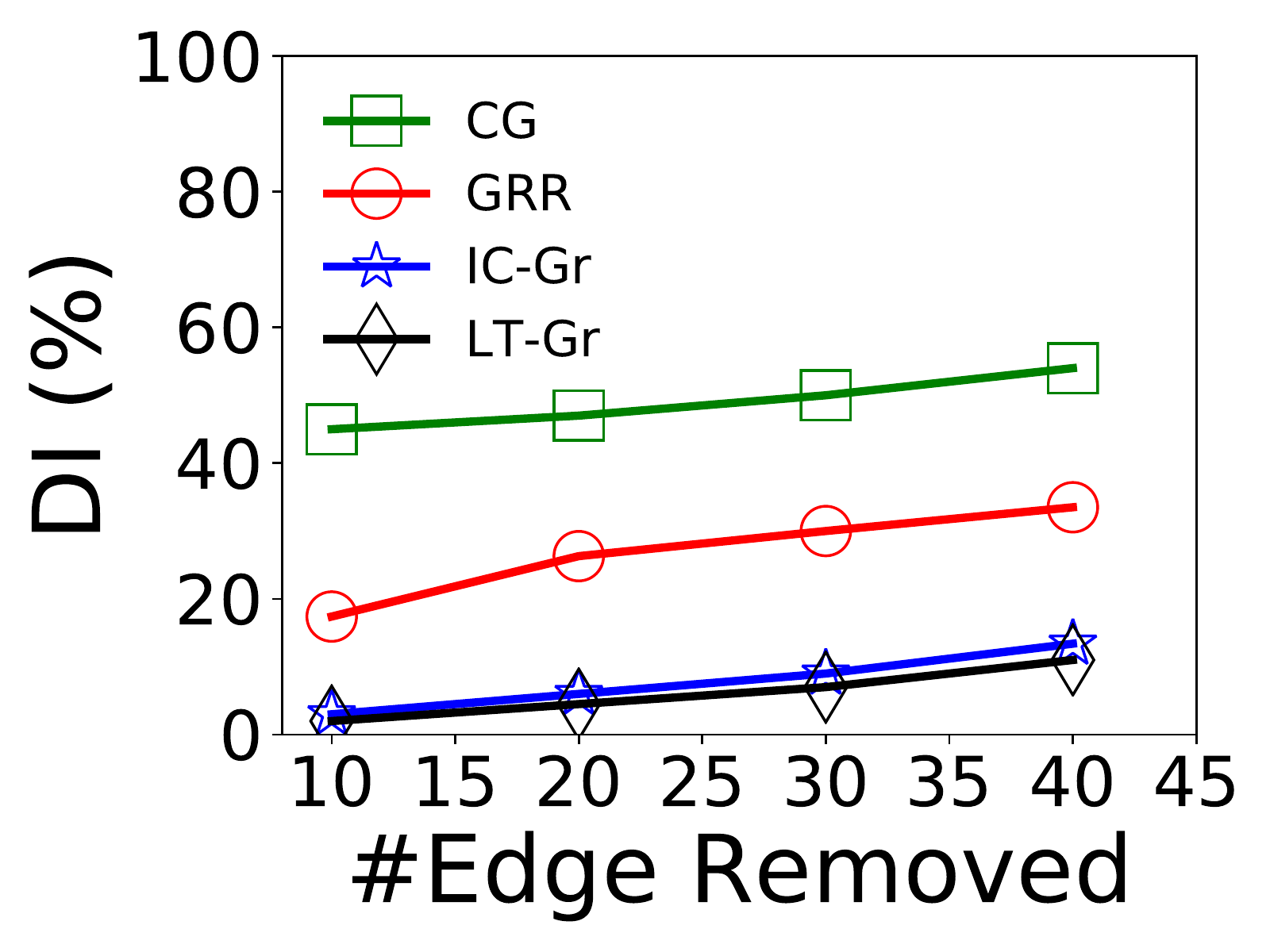}\label{fig:baseline_flick_ILM1}}
    \subfloat[FCS (b=2)]{\includegraphics[width=0.21\textwidth]{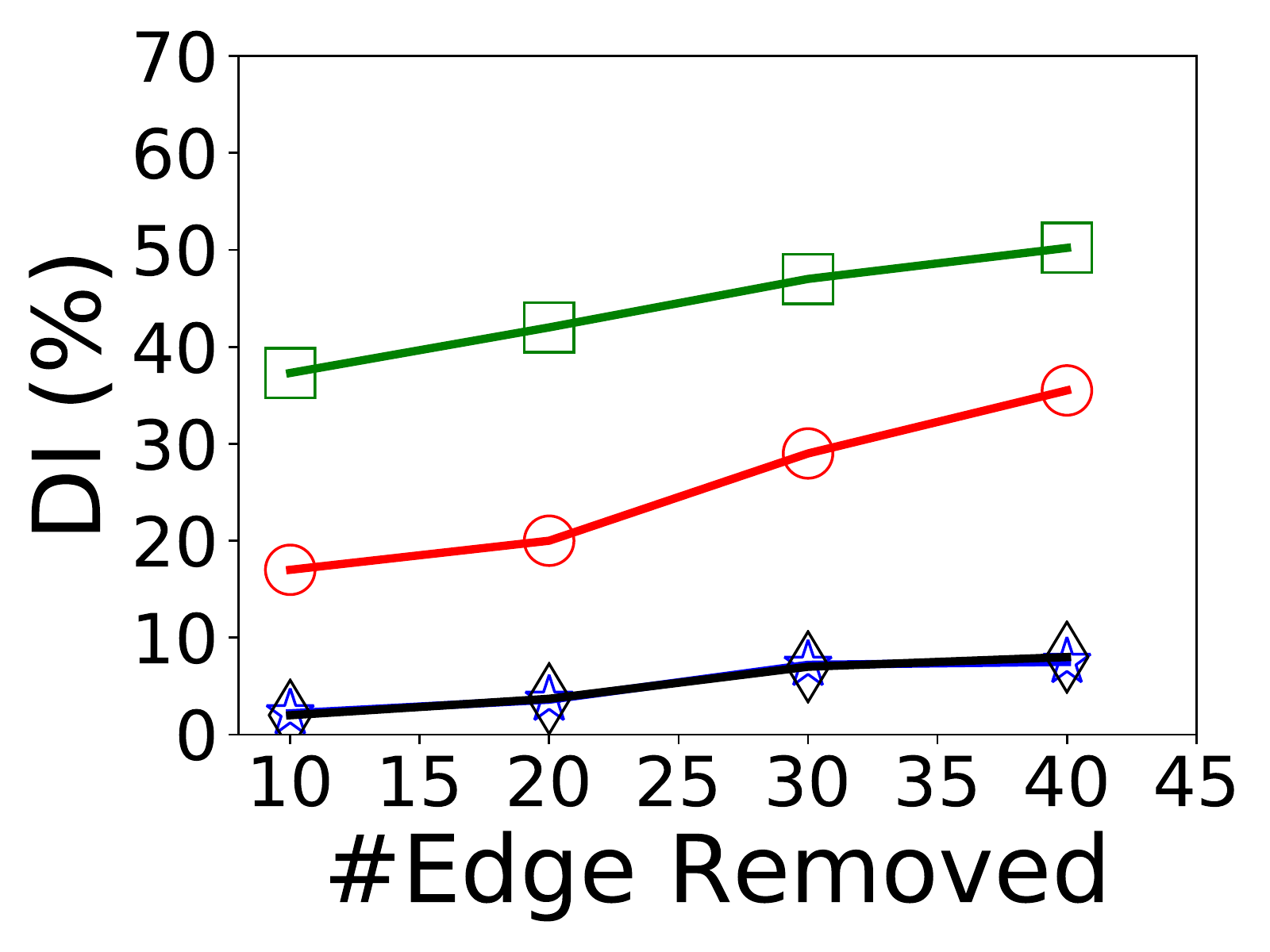}\label{fig:baseline_flick_ILM2}}
    \caption{ \textbf{[ILM] } Decrease in Influence for $b=1$ and $b=2$ produced by different algorithms on (a-b) CA and (c-d) FCS. Our algorithm, CG outperforms the baselines by up to $20\%$. \label{fig:vs_baselines_ILM1_ILM2}}
\end{figure}

\textbf{Baselines and other settings:}  To compare with our \textit{Continuous Greedy (CG)} algorithm we consider three baselines in these experiments: (1) \textbf{Greedy with Restriction (GRR): } Finds the feasible edges based on the greedy algorithm proposed for BIL. The greedy algorithm chooses the best ``feasible" edge that respects the constraint of maximum ($b$) edges removed. (2-3) We also apply \textbf{IC-Gr} and \textbf{LT-Gr} with the edge removal constraint for each node. The number of samples and iterations used in CG are $s=20$ and $\tau = 100$, respectively. After obtaining the solution vector from CG, we run randomized rounding for $50$ times and choose the best solution.

 \subsubsection{Quality (vs Baselines)} We compare the Continuous Greedy (CG) algorithm against the baseline methods on FCS and CA (FXS is omitted due to space constraints). The target set size is set as $30$. We experiment with $b=1$ and $b=2$. Figure \ref{fig:vs_baselines_ILM1_ILM2} shows the results. CG significantly outperforms the baselines by up to $20\%$. GRR does not produce good results as it has to select the feasible edge that does not violate the maximum edge removal constraint $b$. While maintaining feasibility, GRR cannot select the current true best edge. 

 \begin{table}[t]
\centering

\begin{tabular}{| c | c | c | c | }
\hline
&\multicolumn{3}{c|}{\textbf{FXS:} $\#$ (tuples, actions)$\times 10^3$}  \\
\hline

\textbf{$\#$Edge Removed}& ($30,1.7$) & ($50,4.8$)& ($75,6.9$)\\
\hline
$20$ & $7.5$& $20.7$&  $69.4$ \\
\hline
$40$ &   $7.1$ &$16.8$& $69.4$\\
\hline
$60$ &  $7.4$&$16.7$ & $69.5$\\
\hline
&\multicolumn{3}{c|}{\textbf{FCS:} $\#$ (tuples, actions)$\times 10^3$} \\
\hline
& ($20,2.6$) & ($30,3.8$) &($50,5.8$)\\
\hline
$20$ & $28.1$ &$64.7$ &$180$\\
\hline
$40$ &  $29.1$ &$64.6$ &$181$\\
\hline
$60$ &  $29.2$ &$63.1$ &$167$\\
\hline
\end{tabular}

\caption{\textbf{[ILM] }Running Times (Scalability) of CG varying number of tuples for $|X|=20$ and $b=2$. The running times are in minutes. The number of tuples and actions are in thousands. \label{table:scalability_tuple_ILM}}

 \end{table}
 
 \begin{figure}[t]
    \centering
    \subfloat[Quality on CA]{\includegraphics[width=0.22\textwidth]{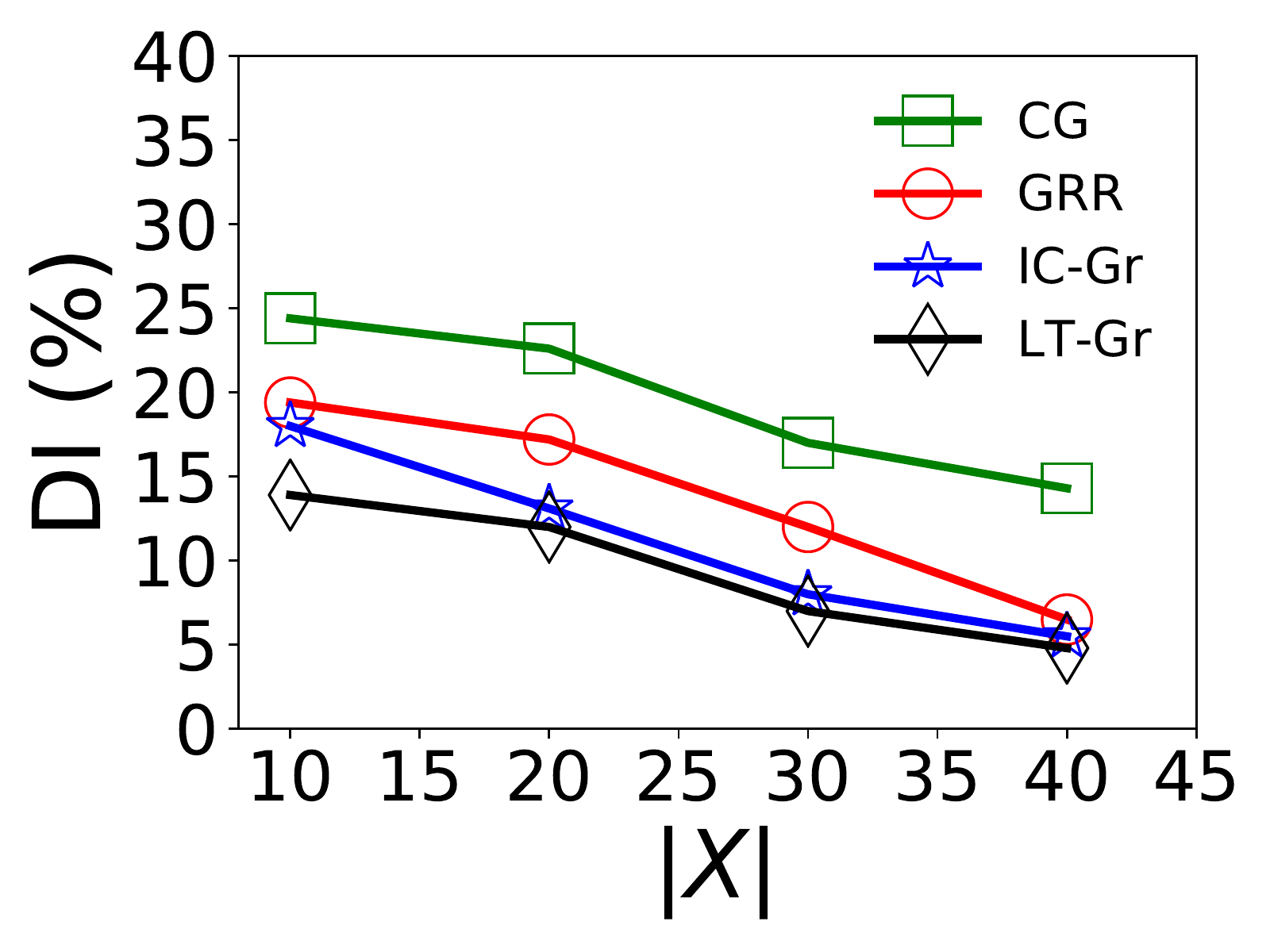}\label{fig:source_synthetic_ILM}}
    \subfloat[Quality on FXS]{\includegraphics[width=0.22\textwidth]{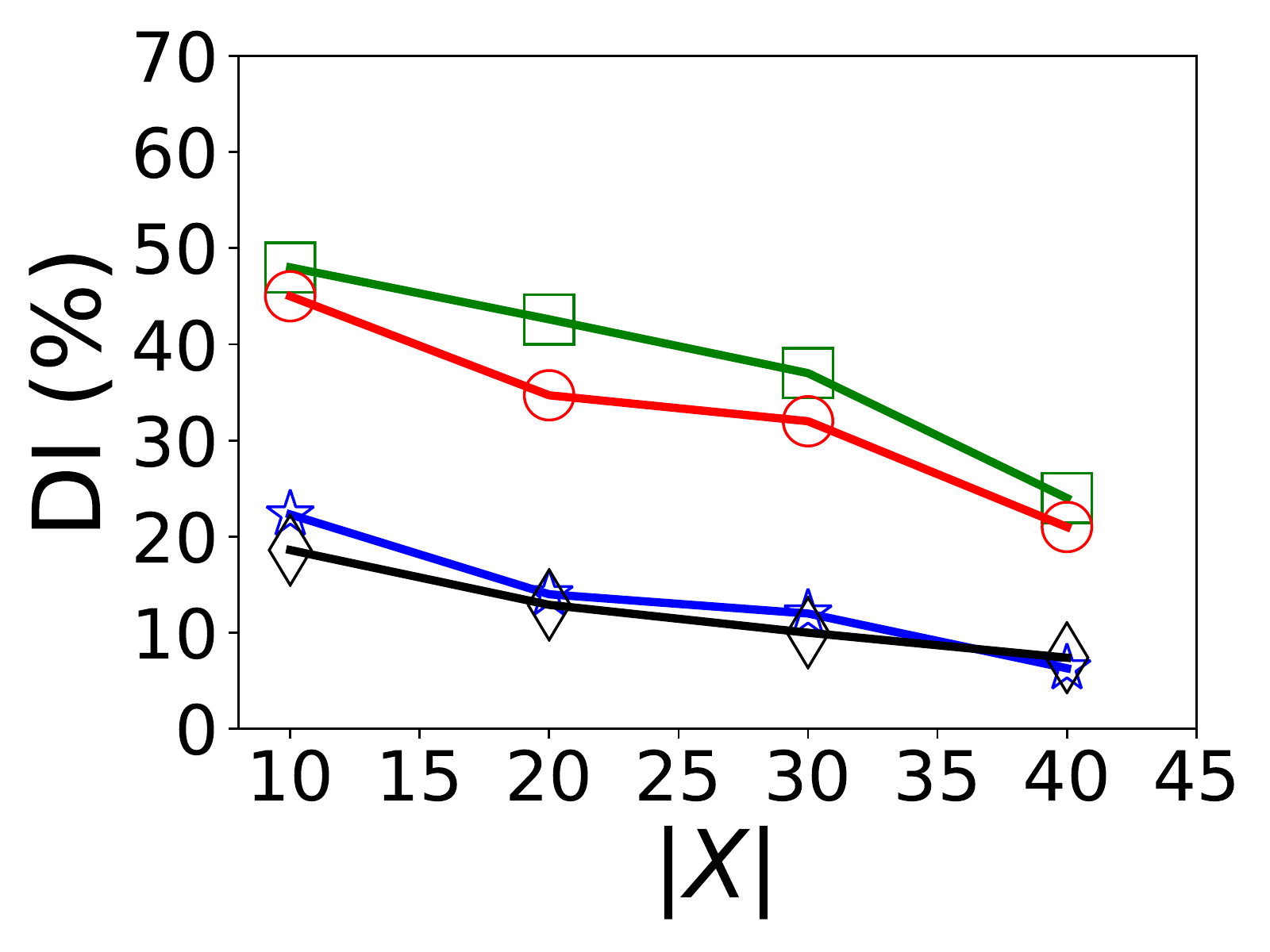}\label{fig:source_flix_ILM}}
    \caption{\textbf{[ILM]} Decrease in Influence (DI) produced by different algorithms varying the size of the target set, $X$ when $b=2$. \label{fig:ILM_source}}
\end{figure}
 
 
 \begin{table}[t]
\centering

\begin{tabular}{| c | c | c | c | c|c|c|}
\hline
&\multicolumn{6}{c|}{\textbf{FCS:} $\#$ (tuples, actions)$\times 10^3$}  \\
\hline

\textbf{$\#$Edge Removed}& \multicolumn{2}{c|}{ ($20,2.6$)} & \multicolumn{2}{c|}{($30,3.8$)}& \multicolumn{2}{c|} {($50,5.8$)}\\
\hline
 & \textbf{CG}&\textbf{GRR} & \textbf{CG} & \textbf{GRR} & \textbf{CG} &  \textbf{GRR} \\
\hline
$20$ & $33$& $22$ &  $50$ & $41$ & $35$ &  $20$ \\
\hline
$40$ &   $42$ &$32$& $53$ &   $44$ &$45$& $35$\\
\hline
$60$ &  $44$&$35$ & $61$ &  $54$&$54$ & $40$\\
\hline
\end{tabular}

\caption{\textbf{[ILM] } Decrease in Influence ($\%$) in FCS by Continuous Greedy (CG) vs GRR varying the number of tuples. The number of tuples and actions are in thousands. \label{table:qual_scalability_tuple_ILM_FCS}}

 \end{table}

 \noindent
 \subsubsection{Scalability of Continuous Greedy.}
 CG (Algorithm \ref{algo:CG}) is generally slower than GRR. We evaluate the running time of CG while increasing the number of tuples (thus, the number of actions). Table \ref{table:scalability_tuple_ILM} shows the results on two datasets, FXS and FCS. Because of higher density and thus larger candidate set, CG takes longer in FCS. Furthermore, the increment in budget does not affect the running time for CG. These observations validate the running time analysis for CG (Section \ref{sec:continuous_relaxation}). We have also shown the quality in DI ($\%$) produced by CG and GRR (other baselines are not scalable). Table \ref{table:qual_scalability_tuple_ILM_FCS} shows the results on FCS data (the results for FXS are in the Appendix. CG outperforms GRR by up to $15\%$. Other scalability results varying graph size are in the Appendix.

\subsubsection{Parameter Variation}
Finally, we analyze the impact of the variation of the parameter $X$ (i.e., the size of the target set) over CG. We have considered the effect of varying budget (along with $b$) and number of tuples earlier. 
The size of the target set $X$ is varied and we observe its effect in Figure \ref{fig:ILM_source}. We set $b=2$, and remove $20$ edges for these experiments. CG provides better $DI$ consistently across target sizes and datasets (the results using FCS have similar trend and are omitted here). With the increase of target set size, DI generally decreases for all the algorithms. A larger target size would have a higher influence to be reduced. Thus, with the same number of edges removed, the DI would decrease for a larger target set.

\section{Previous Work}
\textbf{Boosting and controlling propagation:} 
The influence boosting or limitation problems via network modifications are orthogonal to the classical influence maximization task \cite{kempe2003maximizing}. In these modification problems, the objective is to optimize (maximize or minimize) the content spread via structural or attribute-level change in the network. Previous work has also addressed the influence limitation problem in the SIR model \cite{Tong2012GML,Gao2011,schneider2011suppressing}. The objective is to optimize specific network properties in order to boost or contain the content/virus spread. For instance, Tong et al. proposed methods to add (delete) edges to maximize (minimize) the eigenvalue of the adjacency matrix. 

The influence spread optimization problem has been studied under the IC model via network design \cite{kimura2008minimizing, bogunovic2012robust,sheldon2012maximizing, chaoji2012recommendations,lin2017boosting} and injecting an opposite campaign \cite{budak2011limiting,nguyen2012containment}. We mainly focus on the network design problem here. Bogunovic \cite{bogunovic2012robust} addressed the minimization problem via node deletion. On the other hand, Sheldon et al. \cite{sheldon2012maximizing} studied the node addition problem and proposed expensive algorithms based on mixed integer programming. Kimura et al. \cite{kimura2008minimizing} proposed greedy algorithms for the same. While Chaoji et al. \cite{chaoji2012recommendations} studied the problem of boosting the content spread via edge addition, Lin et al. \cite{lin2017boosting} investigated the same via influencing initially uninfluenced users. 

Boosting and controlling the influence via edge addition and deletion,  respectively, were also studied under the Linear Threshold (LT) model by Khalil et al. \cite{Khalil2014}. They showed the supermodular property for the objective functions and then applied known approximation guarantees. The influence minimization problem was also studied under a few variants of LT model. \cite{kuhlman2013blocking,he2012influence,chen2013information}. 
In summary, the approaches for optimizing influence (propagation) are mostly based on the well-known diffusion models such as SIR, LT and IC. However, our work addresses the influence minimization problem based on available cascade information.

\textbf{Optimization over matroids:} Matroids have been quite popular for modelling combinatorial problems \cite{nemhauser1978, chekuri2004maximum}. Nemhauser \cite{nemhauser1978} introduced a few optimization problems under matroids. Vondrak \cite{Vondrak08} addressed matroid optimization with a continuous greedy technique for submodular functions. Calinescu et al. \cite{calinescu2011maximizing} and Chekuri et al. \cite{chekuri2010dependent} proposed rounding techniques for continuous relaxation of submodular functions under matroids. 

\textbf{Other network modification problems:} 
We also provide a few details about previous work on other network modification (design) problems. A set of design problems were introduced in \cite{paik1995}. Lin et al.~\cite{lin2015} addressed a shortest path optimization problem via improving edge weights on undirected graphs. Meyerson et al.~\cite{meyerson2009} proposed approximation algorithms for single-source and all-pair shortest paths minimization. Faster algorithms for some of these problems were also presented in~\cite{papagelis2011,parotisidis2015selecting}. Demaine et al.~\cite{demaine2010} minimized the diameter of a network by adding shortcut edges. Optimization of different node centralities by adding edges were studied in~\cite{crescenzi2015,ishakian2012framework,medya2018group}.


\section{Conclusions}

In this paper, we studied the influence minimization problem via edge deletion. Different from previous work, our formulation is data-driven, taking into account available propagation traces in the selection of edges. 
We have framed our problem under two different types of constraints---budget and matroid constraint. These variations were found to be APX-hard and cannot be approximated within a factor greater than $(1- \frac{1}{e})$. For the budget constrained version, we have developed an efficient greedy algorithm that achieves a good approximation guarantee by exploiting the monotonicity and submodularity of the objective function. The matroid constrained version was solved via continuous relaxation and a continuous greedy technique, achieving a probabilistic approximation guarantee. The experiments showed the effectiveness of our solutions, which outperform the baseline approaches, using both real and synthetic datasets.  


\section*{Appendix}

\subsection{Proof of Theorem \ref{thm:hardness}}

\begin{proof}
We prove the hardness result by reducing the known \textit{Influence Maximization} (IM) problem \cite{goyal2011} under CDM to BIL. Consider a problem instance $I_{IM}$ \cite{goyal2011}, where graph $G=(V,E)$, $|V|=n, |E|=m$ and integer $k$ are given.
We create a corresponding BIL problem instance ($I_{BIL}$) as follows. The directed social graph is $G'= (V',E')$ where $V'= V\cup \{x\}$, $x$ is an additional node. Let $C=\{(x,v) | v \in V\}$. In $I_{BIL}$, $E'= E \cup C$. We assume that the edges in $C$ are present for every action in IM. $C$ is also candidate set of edges. Let us assume the set $S$ (of size $k$) has the maximum influence ($\sigma^*$).  Now, it is easy to see that the maximum reduction of the influence of node $x$ in BIL can be obtained if and only if the edges ($k$ edges) between $x$ and $S$ are removed.
\end{proof}

\subsection{Proof of Lemma \ref{lemma:credit_computation2}}
\begin{proof}
If $w$ is not reachable from $v$, the proof becomes trivial. For $v\overrightarrow{a} w$  we use induction on length $l$. Let the set of reachable nodes via a path length of $l$ from $v$ in $G(a)$ be $R^a(v,l)$. We denote $N_{out}(u,a)=\{v|(u,v) \in E(a)\}$ and the decrease in credit contribution via the removal of the edge $e$ by any arbitrary node $w$ in $R^a(v,l)$ as $\delta^{l,w}_{a} (\{e\})$ and by all nodes in $R^a(v,l)$ as $\delta^{l}_{a} (\{e\})$. \\
\textbf{Base case:} when $l=0$, 
$\sum_{w \in V} \Gamma_{v,w}(a,0)= \Gamma_{v,v}=1$. So, the statement is true for $l=0$. \\
\textbf{Induction step:} Assume that the statement is true when restricted to path lengths $\ l$, for any arbitrary node $w$ where $w \in R^a(v,l)$, i.e.,  
     $\delta^{l,w}_{a} (\{e\}) = \big(\Gamma_{X,u}(a). \gamma_{(u,v)}(a)\big).  \Gamma_{v,w}(a,l) $

Notice that, $\delta^{l}_{a}(\{e\})= \big(\Gamma_{X,u}(a). \gamma_{(u,v)}(a)\big). \sum_{ w \in R^a(v,l)} \Gamma_{v,w}(a)=  \sum_{ w \in R^a(v,l)} \delta^{l,w}_{a} (\{e\})$. We will prove that the statement remains true for paths of length $l+1$ for nodes $w\in R^a(v,l+1)$. \\
Now in RHS,
\begin{equation*}
\begin{split}
  \sum_{w \in R^a(v,l+1)}\big(\Gamma_{X,u}(a). \gamma_{(u,v)}(a)\big).  \Gamma_{v,w}(a,l+1) \\
  = \sum_{w \in R^a(v,l+1)} \big(\Gamma_{X,u}(a). \gamma_{(u,v)}(a)\big). \sum_{y \in N_{in}(w)} \Gamma_{v,y}(a,l).\gamma_{(y,w)}(a)\\
  = \sum_{y \in R^a(v,l)}\big(\Gamma_{X,u}(a). \gamma_{(u,v)}(a)\big). \Gamma_{v,y}(a,l). \sum_{w \in N_{out}(y)} \gamma_{(y,w)}(a) \\
  =\sum_{y \in R^a(v,l)} \delta^{l,y}_{a} (\{e\}). \sum_{w \in N_{out}(y)} \gamma_{(y,w)}(a)\\
  = \sum_{w \in R^a(v,l+1)} \delta^{l+1,w}_{a} (\{e\})
    \end{split}
\end{equation*}
 \end{proof}

 \begin{algorithm}[t]
\caption{updateUC}
\begin{algorithmic}[1] 
 \REQUIRE $e=(u,v)$, $EP$, $UC$, $SC$
\FOR{$a \in \mathscr{A}$}
\STATE $\gamma\leftarrow EP[u][v][a] $
\FOR{each user $z$ such that $UC[z][u][a] >0$}
\FOR{each user $w$ such that $UC[v][w][a] >0$}
\STATE $UC[z][w][a]=UC[z][w][a]-(UC[z][u][a]\cdot\gamma)\cdot UC[v][w][a]$
\ENDFOR
\ENDFOR
\ENDFOR
\end{algorithmic}
\label{algo:updateUC}
\end{algorithm}

\begin{algorithm}[t]
\caption{updateSC}
\begin{algorithmic}[1] 
 \REQUIRE $e=(u,v)$, $EP$, $UC$, $SC$
\FOR{$a \in \mathscr{A}$ such that $SC[u][a]>0$ and $EP[u][v][a]>0$}
\STATE $\gamma\leftarrow EP[u][v][a] $
\FOR{each user $w$ such that $UC[v][w][a]>0$}
\STATE $SC[w][a]=SC[w][a]-(SC[u][a]\cdot \gamma)\cdot UC[v][w][a]$
\ENDFOR
\ENDFOR
\end{algorithmic}
\label{algo:updateSC}
\end{algorithm}

 \subsection{Algorithms \ref{algo:updateUC} and \ref{algo:updateSC} (updateUC and updateSC):}
 Method \textit{updateUC} (Algorithm~\ref{algo:updateUC}) identifies the credits (of the users) that has been changed upon an edge removal and does so by updating the data structure UC following the Observation \ref{obs:edge_removal2}. Method \textit{updateSC} do the same for the credits of target set of nodes (set $X$) by updating the data structure SC following the Observation \ref{obs:edge_removal3}.

\noindent 
 \subsection{Optimization of Greedy in BIL }
 We propose an intuitive and simple optimization technique to further improve the efficiency of Greedy. The question about optimization is the following: do all the edges in the candidate set ($C$) of edges need to be evaluated? To answer this, we introduce a concept of \textit{edge dominance}. The idea is very intuitive and simple. If an edge $e'=(w,x)$ is reachable from the target set through only a particular edge $e^*=(u,v)$ in all the DAGs, then we call $e'$ as the dominated and the edge $e^*$ as the dominating edge. In other words, there is no such path from a node in $X$ to $w$ without going through $e^*$ in $G(a)$ for all $a \in \mathscr{A}$. Note that the if the dominating edge $e^*$ is removed from the graph, the marginal contribution towards reducing influence of target set $X$ by removing $e'$ becomes $0$. The next lemma depicts the dominance of an edge.

\begin{lemma}
If $e'=(w,x)$ and $e* = (u,v)$ are present in $G(a)$, and $\Gamma_{X,w}=\Gamma_{X,u}\cdot \gamma_{(u,v)} \cdot \Gamma{v,w}$ for all $a \in \mathscr{A}$ then $e'$ is dominated by $e^*$.
\end{lemma}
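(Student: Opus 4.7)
The plan is to apply Lemma \ref{lemma:edge_contribution}, which expresses the marginal contribution of removing an edge as an action-wise sum of terms of the form $(\Gamma_{X,\cdot}(a)\cdot \gamma(a))\cdot(\text{downstream credit})$. The key observation is that the hypothesis $\Gamma_{X,w}(a)=\Gamma_{X,u}(a)\cdot \gamma_{(u,v)}(a)\cdot \Gamma_{v,w}(a)$ is a path-counting identity: by Equation \ref{eqn::gamma_x_u}, $\Gamma_{X,w}(a)$ is the sum of the products $\prod_{e\in p}\gamma_e(a)$ over all minimal paths from $X$ to $w$ in $G(a)$, and the right-hand side is exactly the contribution aggregated over paths that pass through $e^*=(u,v)$ (i.e., paths of the form $X\leadsto u \to v \leadsto w$). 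Since all these per-path terms are non-negative, equality forces every path from $X$ to $w$ in $G(a)$ to traverse $e^*$.

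Next, I would analyze what happens after the removal of $e^*$. Let $G^m(a)$ be the propagation graph with $e^*$ deleted. Every path from $X$ to $w$ in $G(a)$ passed through $e^*$, so there is no such path in $G^m(a)$; applying Equation \ref{eqn::gamma_x_u} once more gives $\Gamma_{X,w}(a)=0$ in $G^m(a)$ for every $a\in\mathscr{A}$. This reasoning must hold for every action, which is why the hypothesis is quantified over all $a\in\mathscr{A}$.

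Finally, I would compute the marginal contribution of removing $e'=(w,x)$ in the graph $G^m$ (where $e^*$ has already been removed). By Lemma \ref{lemma:edge_contribution},
\begin{equation*}
\Delta_{G^m}(\{e'\})=\sum_{a\in\mathscr{A}}\bigl(\Gamma_{X,w}^{G^m}(a)\cdot \gamma_{(w,x)}(a)\bigr)\cdot \Bigl(\sum_{y\in V}\tfrac{1}{|\mathscr{A}_y|}\Gamma_{x,y}^{G^m}(a)\Bigr).
\end{equation*}
Since $\Gamma_{X,w}^{G^m}(a)=0$ for every $a$, each term in the outer sum vanishes, hence $\Delta_{G^m}(\{e'\})=0$. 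This is exactly the claim that $e'$ is dominated by $e^*$.

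The main obstacle, though fairly minor, is the argument in the second step that the hypothesis forces every $X$-to-$w$ path in every $G(a)$ to pass through $e^*$. One must be careful with the notion of ``minimal paths'' used in Equation \ref{eqn::gamma_x_u} and note that all path weights $\prod_{e\in p}\gamma_e(a)$ are non-negative, so the sum over paths avoiding $e^*$ must be identically zero, action by action. Once this is established, the rest reduces to a direct application of Lemma \ref{lemma:edge_contribution}.
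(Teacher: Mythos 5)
The paper states this lemma without any proof---it sits in the informal ``Optimization of Greedy'' discussion of the Appendix---so there is no official argument to compare against. Your strategy (read the hypothesis as a path-counting identity via Equation \ref{eqn::gamma_x_u}, conclude that every $X$-to-$w$ path traverses $e^*$, then kill the marginal contribution of $e'$ via Lemma \ref{lemma:edge_contribution}) is clearly the argument the authors intend, and your second and third steps are correct as written.

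However, the step you set aside as a ``fairly minor obstacle'' is where the proof actually breaks. The claim that $\Gamma_{X,u}(a)\cdot\gamma_{(u,v)}(a)\cdot\Gamma_{v,w}(a)$ equals the contribution of exactly the minimal $X$-to-$w$ paths through $e^*$ can fail when $|X|>1$: the factor $\Gamma_{v,w}(a)$ sums over \emph{all} $v$-to-$w$ paths (Equation \ref{eqn::gamma_v_u}), including ones that pass through a node of $X$, and the corresponding concatenations are non-minimal and therefore excluded from $\Gamma_{X,w}(a)$ in Equation \ref{eqn::gamma_x_u}. The product can thus strictly overcount, and equality with $\Gamma_{X,w}(a)$ no longer forces the sum over $e^*$-avoiding paths to vanish. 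Concretely, take $X=\{x_1,x_2\}$ and the single chain $x_1\to u\to v\to x_2\to w\to x$ with all direct credits equal to $1$: then $\Gamma_{X,u}=\gamma_{(u,v)}=\Gamma_{v,w}=\Gamma_{X,w}=1$, so the hypothesis holds, yet the minimal path $x_2\to w$ avoids $e^*=(u,v)$, and after deleting $e^*$ the removal of $e'=(w,x)$ still has positive marginal gain. Your argument is sound when $X$ is a singleton or, more generally, when no $v$-to-$w$ path in any $G(a)$ passes through a node of $X$; to close the gap you must either add such a hypothesis or restrict the factor $\Gamma_{v,w}(a)$ in the dominance condition to $X$-avoiding paths, rather than treat the issue as a side remark.
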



\subsection{Proof of Theorem \ref{thm:continuous_submodular}}
\begin{proof}
Let $\mathcal{Y}=(y_1,y_2,...y_c)$ be the vector with membership probabilities for each edge in $C$ ($c=|C|$). Let the set $B$ be a random subset of $C$ where the edge $e_i\in C$ is included in set $B$ with probability $y_i$. If $f$ is the continuous extension of $\Delta$, then,$f(\mathcal{Y})= \textbf{E}_{B\sim \mathcal{Y}}[\Delta(B)] = \sum_{B\subseteq C} \Delta(B) \prod_{e_i\in B}{y_i}\prod_{e_i\in C\setminus B}{(1-y_i)}.$
To prove the function $f: [0,1]^C\rightarrow \mathbb{R}$ is a smooth monotone submodular function, we need to prove the followings:\\
i) $f$ has second partial derivatives everywhere. \\
ii) Monotonicity: For each $e_i\in C$, $\frac{\partial f}{\partial y_i}\geq 0 $. \\
iii) Submodularity: For each $e_i, e_j\in C$, $\frac{\partial^2 f}{\partial y_i \partial y_j}\geq 0$. \\
We derive a closed form similar in \cite{Vondrak08} for the second derivative and thus it always exists. 

For each $e_i\in C$, $\frac{\partial f}{\partial y_i}= \textbf{E}[\Delta(B)|e_i\in B]-\textbf{E}[\Delta(B)|e_i\notin B]$. As $\Delta$ is monotone, $\textbf{E}[\Delta(B)|e_i\in B]-\textbf{E}[\Delta(B)|e_i\notin B]\geq 0$ and thus, $f$ is also monotone.

For each $e_i, e_j\in C, i \neq j$, $\frac{\partial^2 f}{\partial y_i \partial y_j}= \textbf{E}[\Delta(B)|e_i, e_j\in B]-\textbf{E}[\Delta(B)|e_i \in B, e_j \notin B]-\textbf{E}[\Delta(B)|e_i\notin B, e_j\in B]-\textbf{E}[\Delta(B)|e_i, e_j \notin B]$. As $\Delta$ is submodular, $\frac{\partial^2 f}{\partial y_i \partial y_j}\geq 0$ from the above expression. Thus, $f$ is submodular. Note that if $i = j, \frac{\partial^2 f}{\partial y_i \partial y_j}= 0$. In other words, the relaxation $f$ is called multi-linear because it is linear in every co-ordinate ($y_i$).

\end{proof}


\subsection{APX-hardness of BIL} 
\begin{thm} \label{thm:apxhardness_BIL}
BIL is APX-hard and cannot be approximated within a factor greater than $(1-1/e)$.
\end{thm}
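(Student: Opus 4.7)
The plan is to mirror the proof strategy of Theorem \ref{thm:apxhardness_ILM}, adapting it to the budget-constrained setting. Since BIL maximizes the monotone submodular function $\Delta$ (Theorem \ref{thm:submodular}) subject to a cardinality constraint $|B|\leq k$, and a uniform matroid of rank $k$ is in particular a matroid, the curvature-based lower bound invoked by Vondr\'ak for monotone submodular maximization over matroids applies directly: no polynomial-time algorithm can exceed the approximation ratio $\tfrac{1}{c_o}(1-e^{-c_o})$. It therefore suffices to exhibit a BIL instance whose \emph{curvature with respect to optimal} $c_o$ equals $1$, since the specialization $c_o=1$ yields the threshold $1-1/e$.

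For the instance, I would reuse the graph of Figure \ref{fig:matroid_ex} with target set $X=\{u,v\}$ and candidate set $C=\{(w,x),(x,y),(y,z)\}$, but now impose the budget $k=2$ instead of the partition-matroid cap. As in the ILM argument, $S^{*}=\{(w,x),(x,y)\}$ is an optimal feasible solution with $\Delta(S^{*})=2.5$. Taking $T=\{(y,z)\}$, which is trivially feasible for BIL whenever $k\geq 1$, we have $S^{*}\cap T=\emptyset$, so the summation in the definition of $c_o$ vanishes. Because every credit path from $X$ that uses $(y,z)$ must first traverse $(w,x)$ and $(x,y)$, cutting $S^{*}$ already blocks all such flow, giving $\Delta(S^{*}\cup T)=\Delta(S^{*})$ and hence a zero left-hand side in the defining inequality. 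However, $\Delta(T)>0$, since removing $(y,z)$ in isolation strictly reduces the credit of $X$ on the unmodified graph. The inequality $0\geq (1-c_o)\Delta(T)$ then forces $c_o=1$, and substituting into $\tfrac{1}{c_o}(1-e^{-c_o})$ produces the claimed inapproximability ratio $1-1/e$, establishing APX-hardness.

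The only real subtlety is making sure the instance is genuinely a BIL instance rather than an ILM one: this requires a plain budget $k$ with no per-node cap, which is automatic. The other delicate point is guaranteeing that $\Delta(T)>0$ and that $S^{*}\cup T$ yields no further decrease beyond $S^{*}$; both follow from choosing the edge credits $\gamma_e(a)$ as in the worked example of Theorem \ref{thm:apxhardness_ILM}, so that $(y,z)$ lies on at least one active credit path of some action graph while the path is already severed by $S^{*}$. Modulo these bookkeeping checks, the proof is a direct transfer of the matroid-case argument to the uniform-matroid (cardinality) case, and the $1-1/e$ threshold follows.
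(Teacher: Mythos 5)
Your proof is essentially correct, but it takes a genuinely different route from the paper's. The paper does not argue about BIL's curvature directly: it introduces an auxiliary problem ILM-O (at most $b$ \emph{outgoing} edges removed per node), asserts that ILM-O inherits the matroid structure and the curvature-$1$ instance of Theorem~\ref{thm:apxhardness_ILM}, and then gives an $L$-reduction from ILM-O to BIL with constants $c_1=c_2=1$ (a single target node $x$, $b=k$), which transfers the $(1-1/e)$ threshold. You instead observe that the budget constraint $|B|\le k$ is itself a uniform matroid, so Vondr\'ak's curvature bound can be invoked on BIL directly, and you re-run the curvature-with-respect-to-optimal computation on the Figure~\ref{fig:matroid_ex} gadget with $k=2$. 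Your route is shorter and avoids having to define ILM-O and verify the $L$-reduction conditions; the paper's route keeps the curvature machinery confined to the matroid setting where the cited theorem is literally stated, at the cost of an extra problem and a reduction whose correctness is only sketched. For the specific value $c_o=1$ your specialization is safe, since the $(1-1/e)$ barrier for cardinality-constrained monotone submodular maximization is classical. Two bookkeeping points you correctly flag but should actually verify on the gadget: that $\{(w,x),(x,y)\}$ remains optimal among all $2$-subsets of $C$ under the plain budget (the ILM optimum is only claimed for the per-node cap $b=1$, which admits a different feasible family), and that $\Delta(\{(y,z)\})>0$, without which the inequality $0\ge(1-c_o)\Delta(T)$ puts no constraint on $c_o$. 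Finally, note that both your argument and the paper's share the same underlying leap: exhibiting one instance with $c_o=1$ shows the curvature bound is tight on this instance class but does not by itself embed Vondr\'ak's hard family into BIL/ILM instances; since the paper's own proof of Theorem~\ref{thm:apxhardness_ILM} accepts this step, your proof is at the same level of rigor.
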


\begin{proof}
 We first reduce BIL from a similar problem as ILM that has matroid constraints with \textit{curvature with respect to optimal} as $1$. 
First we define a problem, ILM-O where maximum $b$ outgoing edges can be deleted form a node (unlike in ILM where the limit was on incoming edges). However, ILM-O is NP-hard, follows matroid constraints and has curvature $1$ (the proofs are straightforward and similar as in ILM) and thus cannot be approximated within a factor greater than $(1-\frac{1}{e})$ (similarly as Theorem \ref{thm:apxhardness_ILM}).  We give an $L$-reduction \cite{williamson2011design} from the ILM-O problem. The following two equations are satisfied in our reduction:
 \begin{equation*}
 \begin{split}
 OPT(I_{BIL}) \leq c_1\cdot OPT(I_{ILM-O}) \\
OPT(I_{ILM-O})-s(T^S) \leq c_2\cdot (OPT(I_{BIL})-s(T^B))
\end{split}
 \end{equation*}
where $I_{ILM-O}$ and $I_{BIL}$ are problem instances, and $OPT(Y)$ is the optimal value for instance $Y$. $s(T^S)$ and $s(T^B)$ denote any solution of the ILM-O and BIL instances, respectively. If the conditions hold and BIL has an $\alpha$ approximation, then ILM-O has an $(1-c_1c_2(1-\alpha))$ approximation. It is NP-hard to approximate ILM-O within a factor greater than $(1-\frac{1}{e})$. Now, $(1-c_1c_2(1-\alpha))\leq (1-\frac{1}{e})$, or, $\alpha \leq (1-\frac{1}{c_1 c_2e})$. So, if the conditions are satisfied, it is NP-hard to approximate BIL within a factor greater than $(1-\frac{1}{c_1 c_2e})$. 

Consider a problem instance $I_{ILM-O}$, where graph $G=(V,E)$, $|V|=n, |E|=m$ and integer $b$ and the target set $X=\{x\}$ are given. This problem becomes a BIL instance when $b=k$ where $k$ is the budget (in BIL). If the solution of $I_{ILM-O}$ is $s(T^S)$ then the influence of node $x$ will decrease by $s(T^S)$. Note that $s(T^B)= s(T^S)$ from the construction. Thus, both the conditions are satisfied when $c_1=1$ and $c_2=1$. So, BIL is NP-hard to approximate within a factor grater than $(1-\frac{1}{e})$.
\end{proof}

\subsection{Experimental Results for ILM}

 
 \begin{table}[h]
\centering

\begin{tabular}{| c | c | c | c | c|c|c|}
\hline
&\multicolumn{6}{c|}{\textbf{FXS:} $\#$ (tuples, actions)$\times 10^3$}  \\
\hline
\textbf{$\#$Edge Removed}& \multicolumn{2}{c|}{ ($30,1.7$)} & \multicolumn{2}{c|}{($50,4.8$)}& \multicolumn{2}{c|} {($75,6.9$)}\\
\hline
 & \textbf{CG}&\textbf{GRR} & \textbf{CG} & \textbf{GRR} & \textbf{CG} &  \textbf{GRR} \\
\hline
$20$ & $50$  & $44$ &  $48$ & $42$ & $51$ &  $44$ \\
\hline
$40$ &   $51$ & $47$& $53$ &   $45$ &$60$& $56$\\
\hline
$60$ &  $60$ &$ 54$ & $61$ &  $55$& $63$ & $57$\\
\hline
\end{tabular}

\caption{\textbf{[ILM] } Decrease in Influence ($\%$) in FXS by Continuous Greedy (CG) vs GRR varying the number of tuples. The number of tuples and actions are in thousands. \label{table:qual_scalability_tuple_ILM_FXS}}

 \end{table}

 \begin{table}[h]
\centering
\small
 
 \begin{tabular}{| c|c |c|c| c|c|}
\hline
\textbf{Dataset}& $|V|$  & $|C|$ & CG (Time) & CG (DI) & GRR (DI) \\
\hline
CY &  1.1m &   6.3k & 1858 & 55.1 & 47.2 \\
\hline
FX &  200k &  51k & 5690 & 46.2 & 37.3 \\
\hline
\end{tabular}
 
 \caption{\textbf{[ILM] } Running Time (Scalability) in seconds of CG and Decrease in Influence (percentage) by CG and GRR varying graph size for $|X|=20$, $b=2$ and the number of edges removed is $20$.  \label{table:scalability_graph_ILM}}

 \end{table}

 \textbf{Quality varying tuples:}
  Table \ref{table:qual_scalability_tuple_ILM_FCS} shows the results on FXS data. CG outperforms GRR by up to $8\%$. The results for FCS are in the main paper.
CG consistently produces better results than GRR.  
  
 \noindent
\textbf{Scalability varying graph size:} Table \ref{table:scalability_graph_ILM} shows the results varying the graph size. The running times are dominated by the size of the graphs and the candidate sets (the numbers of actions and tuples are same as in Table \ref{table:scalability_graph_BIL}). CG takes approximately 31 minutes on CY with 1m nodes and 6k candidate edges, where as it takes approximately 1.5 hours on FX with 200k nodes and 51k candidate edges. CG also outperforms GRR by up to $9\%$.

\textbf{Randomized Rounding vs Swap Rounding:}
We compare the results in terms of DI\% and running times taken by the rounding schemes (Randomized rounding (RR) and Swap rounding (SR) in Section \ref{sec:rounding}) on the solution set with edge probabilities generated by CG. 
Note that, RR is faster than SR as it is only a single-pass algorithm over the candidate set of edges. However, unlike SR, RR is not a loss-less scheme. Table \ref{table:qual_time_SR_RR_FXS} shows the results on FXS data where $b=1$. In practice, RR produces results of similar quality as in SR while being much faster.

\begin{table}[h]
\centering

\begin{tabular}{| c | c | c | c | c|c|c|}
\hline

\textbf{$\#$Edge Removed}& \multicolumn{2}{c|}{ Time} & \multicolumn{2}{c|}{DI\%}\\
\hline
 & \textbf{RR} & \textbf{SR} & \textbf{RR} & \textbf{SR} \\
\hline
$10$ & $746$ & $1300$  &  $31.7$ &  $33.2$  \\
\hline
$20$  &   $760$ & $1347$  &   $50.5$& $51.2$ \\
\hline
$30$  &  $769$  &$ 1288$ &   $58.6$ &$58.6$\\
\hline
$40$ &  $754$  &$ 1380$ &    $65$  &  $66.3$  \\
\hline
\end{tabular}

\caption{\textbf{[ILM] } Time (seconds) and Decrease in Influence (DI$\%$) in FXS by Continuous Greedy with Randomized Rounding (RR) vs Continuous Greedy with Swap Rounding (SR). The described time here is the total time taken by CG and the correspoding rounding procedure.  \label{table:qual_time_SR_RR_FXS}}
 \end{table}

\clearpage
\bibliographystyle{ACM-Reference-Format}
\bibliography{WWW19}


\begin{thebibliography}{44}


\ifx \showCODEN    \undefined \def \showCODEN     #1{\unskip}     \fi
\ifx \showDOI      \undefined \def \showDOI       #1{#1}\fi
\ifx \showISBNx    \undefined \def \showISBNx     #1{\unskip}     \fi
\ifx \showISBNxiii \undefined \def \showISBNxiii  #1{\unskip}     \fi
\ifx \showISSN     \undefined \def \showISSN      #1{\unskip}     \fi
\ifx \showLCCN     \undefined \def \showLCCN      #1{\unskip}     \fi
\ifx \shownote     \undefined \def \shownote      #1{#1}          \fi
\ifx \showarticletitle \undefined \def \showarticletitle #1{#1}   \fi
\ifx \showURL      \undefined \def \showURL       {\relax}        \fi
\providecommand\bibfield[2]{#2}
\providecommand\bibinfo[2]{#2}
\providecommand\natexlab[1]{#1}
\providecommand\showeprint[2][]{arXiv:#2}

\bibitem[\protect\citeauthoryear{Aziz, Bouveret, Caragiannis, Giagkousi, and
  Lang}{Aziz et~al\mbox{.}}{2018}]%
        {aziz2018knowledge}
\bibfield{author}{\bibinfo{person}{Haris Aziz}, \bibinfo{person}{Sylvain
  Bouveret}, \bibinfo{person}{Ioannis Caragiannis}, \bibinfo{person}{Ira
  Giagkousi}, {and} \bibinfo{person}{J{\'e}r{\^o}me Lang}.}
  \bibinfo{year}{2018}\natexlab{}.
\newblock \showarticletitle{Knowledge, Fairness, and Social Constraints.}. In
  \bibinfo{booktitle}{\emph{AAAI}}.
\newblock


\bibitem[\protect\citeauthoryear{Bogunovic}{Bogunovic}{2012}]%
        {bogunovic2012robust}
\bibfield{author}{\bibinfo{person}{Ilija Bogunovic}.}
  \bibinfo{year}{2012}\natexlab{}.
\newblock \emph{\bibinfo{title}{Robust protection of networks against cascading
  phenomena}}.
\newblock \bibinfo{thesistype}{Ph.D. Dissertation}. \bibinfo{school}{Master
  Thesis ETH Z{\"u}rich, 2012}.
\newblock


\bibitem[\protect\citeauthoryear{Bozorgi, Samet, Kwisthout, and
  Wareham}{Bozorgi et~al\mbox{.}}{2017}]%
        {bozorgi2017community}
\bibfield{author}{\bibinfo{person}{Arastoo Bozorgi}, \bibinfo{person}{Saeed
  Samet}, \bibinfo{person}{Johan Kwisthout}, {and} \bibinfo{person}{Todd
  Wareham}.} \bibinfo{year}{2017}\natexlab{}.
\newblock \showarticletitle{Community-based influence maximization in social
  networks under a competitive linear threshold model}.
\newblock \bibinfo{journal}{\emph{Knowledge-Based Systems}}
  \bibinfo{volume}{134} (\bibinfo{year}{2017}), \bibinfo{pages}{149--158}.
\newblock


\bibitem[\protect\citeauthoryear{Budak, Agrawal, and El~Abbadi}{Budak
  et~al\mbox{.}}{2011}]%
        {budak2011limiting}
\bibfield{author}{\bibinfo{person}{Ceren Budak}, \bibinfo{person}{Divyakant
  Agrawal}, {and} \bibinfo{person}{Amr El~Abbadi}.}
  \bibinfo{year}{2011}\natexlab{}.
\newblock \showarticletitle{Limiting the spread of misinformation in social
  networks}. In \bibinfo{booktitle}{\emph{Proceedings of the 20th international
  conference on World wide web}}. ACM, \bibinfo{pages}{665--674}.
\newblock


\bibitem[\protect\citeauthoryear{Calinescu, Chekuri, P{\'a}l, and
  Vondr{\'a}k}{Calinescu et~al\mbox{.}}{2011}]%
        {calinescu2011maximizing}
\bibfield{author}{\bibinfo{person}{Gruia Calinescu}, \bibinfo{person}{Chandra
  Chekuri}, \bibinfo{person}{Martin P{\'a}l}, {and} \bibinfo{person}{Jan
  Vondr{\'a}k}.} \bibinfo{year}{2011}\natexlab{}.
\newblock \showarticletitle{Maximizing a monotone submodular function subject
  to a matroid constraint}.
\newblock \bibinfo{journal}{\emph{SIAM J. Comput.}} \bibinfo{volume}{40},
  \bibinfo{number}{6} (\bibinfo{year}{2011}), \bibinfo{pages}{1740--1766}.
\newblock


\bibitem[\protect\citeauthoryear{Chaoji, Ranu, Rastogi, and Bhatt}{Chaoji
  et~al\mbox{.}}{2012}]%
        {chaoji2012recommendations}
\bibfield{author}{\bibinfo{person}{Vineet Chaoji}, \bibinfo{person}{Sayan
  Ranu}, \bibinfo{person}{Rajeev Rastogi}, {and} \bibinfo{person}{Rushi
  Bhatt}.} \bibinfo{year}{2012}\natexlab{}.
\newblock \showarticletitle{Recommendations to boost content spread in social
  networks}. In \bibinfo{booktitle}{\emph{International conference on World
  Wide Web (WWW)}}. ACM, \bibinfo{pages}{529--538}.
\newblock


\bibitem[\protect\citeauthoryear{Chekuri and Kumar}{Chekuri and Kumar}{2004}]%
        {chekuri2004maximum}
\bibfield{author}{\bibinfo{person}{Chandra Chekuri} {and} \bibinfo{person}{Amit
  Kumar}.} \bibinfo{year}{2004}\natexlab{}.
\newblock \showarticletitle{Maximum coverage problem with group budget
  constraints and applications}.
\newblock In \bibinfo{booktitle}{\emph{Approximation, Randomization, and
  Combinatorial Optimization. Algorithms and Techniques}}.
  \bibinfo{publisher}{Springer}, \bibinfo{pages}{72--83}.
\newblock


\bibitem[\protect\citeauthoryear{Chekuri, Vondrak, and Zenklusen}{Chekuri
  et~al\mbox{.}}{2010}]%
        {chekuri2010dependent}
\bibfield{author}{\bibinfo{person}{Chandra Chekuri}, \bibinfo{person}{Jan
  Vondrak}, {and} \bibinfo{person}{Rico Zenklusen}.}
  \bibinfo{year}{2010}\natexlab{}.
\newblock \showarticletitle{Dependent randomized rounding via exchange
  properties of combinatorial structures}. In
  \bibinfo{booktitle}{\emph{Foundations of Computer Science (FOCS)}}. IEEE,
  \bibinfo{pages}{575--584}.
\newblock


\bibitem[\protect\citeauthoryear{Chen, Lakshmanan, and Castillo}{Chen
  et~al\mbox{.}}{2013}]%
        {chen2013information}
\bibfield{author}{\bibinfo{person}{Wei Chen}, \bibinfo{person}{Laks~VS
  Lakshmanan}, {and} \bibinfo{person}{Carlos Castillo}.}
  \bibinfo{year}{2013}\natexlab{}.
\newblock \showarticletitle{Information and influence propagation in social
  networks}.
\newblock \bibinfo{journal}{\emph{Synthesis Lectures on Data Management}}
  \bibinfo{volume}{5}, \bibinfo{number}{4} (\bibinfo{year}{2013}),
  \bibinfo{pages}{1--177}.
\newblock


\bibitem[\protect\citeauthoryear{Crescenzi, D’Angelo, Severini, and
  Velaj}{Crescenzi et~al\mbox{.}}{2015}]%
        {crescenzi2015}
\bibfield{author}{\bibinfo{person}{Pierluigi Crescenzi},
  \bibinfo{person}{Gianlorenzo D’Angelo}, \bibinfo{person}{Lorenzo Severini},
  {and} \bibinfo{person}{Yllka Velaj}.} \bibinfo{year}{2015}\natexlab{}.
\newblock \showarticletitle{Greedily improving our own centrality in a
  network}. In \bibinfo{booktitle}{\emph{International Symposium on
  Experimental Algorithms}}. Springer, \bibinfo{pages}{43--55}.
\newblock


\bibitem[\protect\citeauthoryear{Demaine and Zadimoghaddam}{Demaine and
  Zadimoghaddam}{2010}]%
        {demaine2010}
\bibfield{author}{\bibinfo{person}{E.~D. Demaine} {and} \bibinfo{person}{M.
  Zadimoghaddam}.} \bibinfo{year}{2010}\natexlab{}.
\newblock \showarticletitle{Minimizing the diameter of a network using shortcut
  edges}.
\newblock \bibinfo{journal}{\emph{in SWAT, ser.Lecture Notes in Computer
  Science, H. Kaplan,Ed.}} (\bibinfo{year}{2010}), \bibinfo{pages}{420--431}.
\newblock


\bibitem[\protect\citeauthoryear{Gao, Liu, and Zhong}{Gao
  et~al\mbox{.}}{2011}]%
        {Gao2011}
\bibfield{author}{\bibinfo{person}{Chao Gao}, \bibinfo{person}{Jiming Liu},
  {and} \bibinfo{person}{Ning Zhong}.} \bibinfo{year}{2011}\natexlab{}.
\newblock \showarticletitle{Network immunization and virus propagation in email
  networks: experimental evaluation and analysis}.
\newblock \bibinfo{journal}{\emph{Knowledge and Information Systems}}
  \bibinfo{volume}{27}, \bibinfo{number}{2} (\bibinfo{year}{2011}),
  \bibinfo{pages}{253--279}.
\newblock


\bibitem[\protect\citeauthoryear{Goyal, Bonchi, and Lakshmanan}{Goyal
  et~al\mbox{.}}{2010}]%
        {goyal2010learning}
\bibfield{author}{\bibinfo{person}{Amit Goyal}, \bibinfo{person}{Francesco
  Bonchi}, {and} \bibinfo{person}{Laks~VS Lakshmanan}.}
  \bibinfo{year}{2010}\natexlab{}.
\newblock \showarticletitle{Learning influence probabilities in social
  networks}. In \bibinfo{booktitle}{\emph{International conference on Web
  search and data mining (WSDM)}}. ACM, \bibinfo{pages}{241--250}.
\newblock


\bibitem[\protect\citeauthoryear{Goyal, Bonchi, and Lakshmanan}{Goyal
  et~al\mbox{.}}{2011}]%
        {goyal2011}
\bibfield{author}{\bibinfo{person}{Amit Goyal}, \bibinfo{person}{Francesco
  Bonchi}, {and} \bibinfo{person}{Laks~VS Lakshmanan}.}
  \bibinfo{year}{2011}\natexlab{}.
\newblock \showarticletitle{A data-based approach to social influence
  maximization}.
\newblock \bibinfo{journal}{\emph{Proceedings of the VLDB Endowment}}
  \bibinfo{volume}{5}, \bibinfo{number}{1} (\bibinfo{year}{2011}),
  \bibinfo{pages}{73--84}.
\newblock


\bibitem[\protect\citeauthoryear{He, Song, Chen, and Jiang}{He
  et~al\mbox{.}}{2012}]%
        {he2012influence}
\bibfield{author}{\bibinfo{person}{Xinran He}, \bibinfo{person}{Guojie Song},
  \bibinfo{person}{Wei Chen}, {and} \bibinfo{person}{Qingye Jiang}.}
  \bibinfo{year}{2012}\natexlab{}.
\newblock \showarticletitle{Influence blocking maximization in social networks
  under the competitive linear threshold model}. In
  \bibinfo{booktitle}{\emph{SIAM International Conference on Data Mining
  (SDM)}}. SIAM, \bibinfo{pages}{463--474}.
\newblock


\bibitem[\protect\citeauthoryear{Ishakian, Erdos, Terzi, and
  Bestavros}{Ishakian et~al\mbox{.}}{2012}]%
        {ishakian2012framework}
\bibfield{author}{\bibinfo{person}{Vatche Ishakian}, \bibinfo{person}{D{\'o}ra
  Erdos}, \bibinfo{person}{Evimaria Terzi}, {and} \bibinfo{person}{Azer
  Bestavros}.} \bibinfo{year}{2012}\natexlab{}.
\newblock \showarticletitle{A Framework for the Evaluation and Management of
  Network Centrality}. In \bibinfo{booktitle}{\emph{SIAM International
  Conference on Data Mining (SDM)}}. SIAM, \bibinfo{pages}{427--438}.
\newblock


\bibitem[\protect\citeauthoryear{Iyer, Jegelka, and Bilmes}{Iyer
  et~al\mbox{.}}{2013}]%
        {iyer2013curvature}
\bibfield{author}{\bibinfo{person}{Rishabh~K Iyer}, \bibinfo{person}{Stefanie
  Jegelka}, {and} \bibinfo{person}{Jeff~A Bilmes}.}
  \bibinfo{year}{2013}\natexlab{}.
\newblock \showarticletitle{Curvature and optimal algorithms for learning and
  minimizing submodular functions}. In \bibinfo{booktitle}{\emph{Advances in
  Neural Information Processing Systems}}. \bibinfo{pages}{2742--2750}.
\newblock


\bibitem[\protect\citeauthoryear{Kempe, Kleinberg, and Tardos}{Kempe
  et~al\mbox{.}}{2003}]%
        {kempe2003maximizing}
\bibfield{author}{\bibinfo{person}{David Kempe}, \bibinfo{person}{Jon
  Kleinberg}, {and} \bibinfo{person}{{\'E}va Tardos}.}
  \bibinfo{year}{2003}\natexlab{}.
\newblock \showarticletitle{Maximizing the spread of influence through a social
  network}. In \bibinfo{booktitle}{\emph{Proceedings of the ninth ACM SIGKDD
  international conference on Knowledge discovery and data mining}}. ACM,
  \bibinfo{pages}{137--146}.
\newblock


\bibitem[\protect\citeauthoryear{Khalil, Dilkina, and Song}{Khalil
  et~al\mbox{.}}{2014}]%
        {Khalil2014}
\bibfield{author}{\bibinfo{person}{Elias~Boutros Khalil},
  \bibinfo{person}{Bistra Dilkina}, {and} \bibinfo{person}{Le Song}.}
  \bibinfo{year}{2014}\natexlab{}.
\newblock \showarticletitle{Scalable Diffusion-aware Optimization of Network
  Topology}. In \bibinfo{booktitle}{\emph{Proceedings of the 20th ACM SIGKDD
  international conference on Knowledge discovery and data mining}}.
  \bibinfo{pages}{1226--1235}.
\newblock


\bibitem[\protect\citeauthoryear{Kimura, Saito, and Motoda}{Kimura
  et~al\mbox{.}}{2008}]%
        {kimura2008minimizing}
\bibfield{author}{\bibinfo{person}{Masahiro Kimura}, \bibinfo{person}{Kazumi
  Saito}, {and} \bibinfo{person}{Hiroshi Motoda}.}
  \bibinfo{year}{2008}\natexlab{}.
\newblock \showarticletitle{Minimizing the Spread of Contamination by Blocking
  Links in a Network.}. In \bibinfo{booktitle}{\emph{AAAI}},
  Vol.~\bibinfo{volume}{8}. \bibinfo{pages}{1175--1180}.
\newblock


\bibitem[\protect\citeauthoryear{Krause and Golovin}{Krause and Golovin}{[n.
  d.]}]%
        {krause2014submodular}
\bibfield{author}{\bibinfo{person}{Andreas Krause} {and}
  \bibinfo{person}{Daniel Golovin}.} \bibinfo{year}{[n. d.]}\natexlab{}.
\newblock \bibinfo{title}{Submodular function maximization.}
\newblock
\newblock


\bibitem[\protect\citeauthoryear{Kuhlman, Tuli, Swarup, Marathe, and
  Ravi}{Kuhlman et~al\mbox{.}}{2013}]%
        {kuhlman2013blocking}
\bibfield{author}{\bibinfo{person}{Chris~J Kuhlman}, \bibinfo{person}{Gaurav
  Tuli}, \bibinfo{person}{Samarth Swarup}, \bibinfo{person}{Madhav~V Marathe},
  {and} \bibinfo{person}{SS Ravi}.} \bibinfo{year}{2013}\natexlab{}.
\newblock \showarticletitle{Blocking simple and complex contagion by edge
  removal}. In \bibinfo{booktitle}{\emph{International Conference on Data
  Mining (ICDM)}}. IEEE, \bibinfo{pages}{399--408}.
\newblock


\bibitem[\protect\citeauthoryear{Lake}{Lake}{1979}]%
        {lake1979new}
\bibfield{author}{\bibinfo{person}{Mark Lake}.}
  \bibinfo{year}{1979}\natexlab{}.
\newblock \showarticletitle{A new campaign resource allocation model}.
\newblock In \bibinfo{booktitle}{\emph{Applied Game Theory}}.
  \bibinfo{publisher}{Springer}, \bibinfo{pages}{118--132}.
\newblock


\bibitem[\protect\citeauthoryear{Leskovec, Krause, Guestrin, Faloutsos,
  VanBriesen, and Glance}{Leskovec et~al\mbox{.}}{2007}]%
        {leskovec2007cost}
\bibfield{author}{\bibinfo{person}{Jure Leskovec}, \bibinfo{person}{Andreas
  Krause}, \bibinfo{person}{Carlos Guestrin}, \bibinfo{person}{Christos
  Faloutsos}, \bibinfo{person}{Jeanne VanBriesen}, {and}
  \bibinfo{person}{Natalie Glance}.} \bibinfo{year}{2007}\natexlab{}.
\newblock \showarticletitle{Cost-effective outbreak detection in networks}. In
  \bibinfo{booktitle}{\emph{Proceedings of the 13th ACM SIGKDD international
  conference on Knowledge discovery and data mining}}. ACM,
  \bibinfo{pages}{420--429}.
\newblock


\bibitem[\protect\citeauthoryear{Lin, Chen, and Lui}{Lin et~al\mbox{.}}{2017}]%
        {lin2017boosting}
\bibfield{author}{\bibinfo{person}{Yishi Lin}, \bibinfo{person}{Wei Chen},
  {and} \bibinfo{person}{John~CS Lui}.} \bibinfo{year}{2017}\natexlab{}.
\newblock \showarticletitle{Boosting information spread: An algorithmic
  approach}. In \bibinfo{booktitle}{\emph{International Conference on Data
  Engineering (ICDE)}}. IEEE, \bibinfo{pages}{883--894}.
\newblock


\bibitem[\protect\citeauthoryear{Lin and Mouratidis}{Lin and
  Mouratidis}{2015}]%
        {lin2015}
\bibfield{author}{\bibinfo{person}{Yimin Lin} {and} \bibinfo{person}{Kyriakos
  Mouratidis}.} \bibinfo{year}{2015}\natexlab{}.
\newblock \showarticletitle{Best upgrade plans for single and multiple
  source-destination pairs}.
\newblock \bibinfo{journal}{\emph{GeoInformatica}} \bibinfo{volume}{19},
  \bibinfo{number}{2} (\bibinfo{year}{2015}), \bibinfo{pages}{365--404}.
\newblock


\bibitem[\protect\citeauthoryear{Medya, Silva, Singh, Basu, and Swami}{Medya
  et~al\mbox{.}}{2018}]%
        {medya2018group}
\bibfield{author}{\bibinfo{person}{Sourav Medya}, \bibinfo{person}{Arlei
  Silva}, \bibinfo{person}{Ambuj Singh}, \bibinfo{person}{Prithwish Basu},
  {and} \bibinfo{person}{Ananthram Swami}.} \bibinfo{year}{2018}\natexlab{}.
\newblock \showarticletitle{Group centrality maximization via network design}.
  In \bibinfo{booktitle}{\emph{SIAM International Conference on Data Mining
  (SDM)}}.
\newblock


\bibitem[\protect\citeauthoryear{Meyerson and Tagiku}{Meyerson and
  Tagiku}{2009}]%
        {meyerson2009}
\bibfield{author}{\bibinfo{person}{Adam Meyerson} {and} \bibinfo{person}{Brian
  Tagiku}.} \bibinfo{year}{2009}\natexlab{}.
\newblock \showarticletitle{Minimizing average shortest path distances via
  shortcut edge addition}.
\newblock In \bibinfo{booktitle}{\emph{Approximation, Randomization, and
  Combinatorial Optimization. Algorithms and Techniques (APPROX-RANDOM)}}.
  \bibinfo{publisher}{Springer}, \bibinfo{pages}{272--285}.
\newblock


\bibitem[\protect\citeauthoryear{Mislove, Marcon, Gummadi, Druschel, and
  Bhattacharjee}{Mislove et~al\mbox{.}}{2007}]%
        {mislove-2007-socialnetworks}
\bibfield{author}{\bibinfo{person}{Alan Mislove}, \bibinfo{person}{Massimiliano
  Marcon}, \bibinfo{person}{Krishna~P. Gummadi}, \bibinfo{person}{Peter
  Druschel}, {and} \bibinfo{person}{Bobby Bhattacharjee}.}
  \bibinfo{year}{2007}\natexlab{}.
\newblock \showarticletitle{{Measurement and Analysis of Online Social
  Networks}}. In \bibinfo{booktitle}{\emph{Proceedings of the 5th ACM/Usenix
  Internet Measurement Conference (IMC'07)}}.
\newblock


\bibitem[\protect\citeauthoryear{Nemhauser, Wolsey, and Fisher}{Nemhauser
  et~al\mbox{.}}{1978}]%
        {nemhauser1978}
\bibfield{author}{\bibinfo{person}{G.~L. Nemhauser}, \bibinfo{person}{L.~A.
  Wolsey}, {and} \bibinfo{person}{M.~L. Fisher}.}
  \bibinfo{year}{1978}\natexlab{}.
\newblock \showarticletitle{Best Algorithms for Approximating the Maximum of a
  Submodular Set Function}.
\newblock \bibinfo{journal}{\emph{Math. Oper. Res.}} (\bibinfo{year}{1978}),
  \bibinfo{pages}{177--188}.
\newblock


\bibitem[\protect\citeauthoryear{Nguyen, Yan, Thai, and Eidenbenz}{Nguyen
  et~al\mbox{.}}{2012}]%
        {nguyen2012containment}
\bibfield{author}{\bibinfo{person}{Nam~P Nguyen}, \bibinfo{person}{Guanhua
  Yan}, \bibinfo{person}{My~T Thai}, {and} \bibinfo{person}{Stephan
  Eidenbenz}.} \bibinfo{year}{2012}\natexlab{}.
\newblock \showarticletitle{Containment of misinformation spread in online
  social networks}. In \bibinfo{booktitle}{\emph{Proceedings of the 4th Annual
  ACM Web Science Conference}}. ACM, \bibinfo{pages}{213--222}.
\newblock


\bibitem[\protect\citeauthoryear{Nocedal and Wright}{Nocedal and
  Wright}{2006}]%
        {nocedal2006numerical}
\bibfield{author}{\bibinfo{person}{Jorge Nocedal} {and}
  \bibinfo{person}{Stephen~J Wright}.} \bibinfo{year}{2006}\natexlab{}.
\newblock \bibinfo{title}{Numerical optimization (Second edition)}.
\newblock
\newblock


\bibitem[\protect\citeauthoryear{Paik and Sahni}{Paik and Sahni}{1995}]%
        {paik1995}
\bibfield{author}{\bibinfo{person}{D. Paik} {and} \bibinfo{person}{S. Sahni}.}
  \bibinfo{year}{1995}\natexlab{}.
\newblock \showarticletitle{Network upgrading problems}.
\newblock \bibinfo{journal}{\emph{Networks}} (\bibinfo{year}{1995}),
  \bibinfo{pages}{45--58}.
\newblock


\bibitem[\protect\citeauthoryear{Papagelis, Bonchi, and Gionis}{Papagelis
  et~al\mbox{.}}{2011}]%
        {papagelis2011}
\bibfield{author}{\bibinfo{person}{Manos Papagelis}, \bibinfo{person}{Francesco
  Bonchi}, {and} \bibinfo{person}{Aristides Gionis}.}
  \bibinfo{year}{2011}\natexlab{}.
\newblock \showarticletitle{Suggesting Ghost Edges for a Smaller World}. In
  \bibinfo{booktitle}{\emph{International conference on Information and
  knowledge management (CIKM)}}. \bibinfo{pages}{2305--2308}.
\newblock


\bibitem[\protect\citeauthoryear{Parotisidis, Pitoura, and
  Tsaparas}{Parotisidis et~al\mbox{.}}{2015}]%
        {parotisidis2015selecting}
\bibfield{author}{\bibinfo{person}{N Parotisidis}, \bibinfo{person}{Evaggelia
  Pitoura}, {and} \bibinfo{person}{Panayiotis Tsaparas}.}
  \bibinfo{year}{2015}\natexlab{}.
\newblock \showarticletitle{Selecting shortcuts for a smaller world}. In
  \bibinfo{booktitle}{\emph{SIAM International Conference on Data Mining
  (SDM)}}. SIAM, \bibinfo{pages}{28--36}.
\newblock


\bibitem[\protect\citeauthoryear{Raghavan and Tompson}{Raghavan and
  Tompson}{1987}]%
        {raghavan1987randomized}
\bibfield{author}{\bibinfo{person}{Prabhakar Raghavan} {and}
  \bibinfo{person}{Clark~D Tompson}.} \bibinfo{year}{1987}\natexlab{}.
\newblock \showarticletitle{Randomized rounding: a technique for provably good
  algorithms and algorithmic proofs}.
\newblock \bibinfo{journal}{\emph{Combinatorica}} \bibinfo{volume}{7},
  \bibinfo{number}{4} (\bibinfo{year}{1987}), \bibinfo{pages}{365--374}.
\newblock


\bibitem[\protect\citeauthoryear{Schneider, Mihaljev, Havlin, and
  Herrmann}{Schneider et~al\mbox{.}}{2011}]%
        {schneider2011suppressing}
\bibfield{author}{\bibinfo{person}{Christian~M Schneider},
  \bibinfo{person}{Tamara Mihaljev}, \bibinfo{person}{Shlomo Havlin}, {and}
  \bibinfo{person}{Hans~J Herrmann}.} \bibinfo{year}{2011}\natexlab{}.
\newblock \showarticletitle{Suppressing epidemics with a limited amount of
  immunization units}.
\newblock \bibinfo{journal}{\emph{Physical Review E}} \bibinfo{volume}{84},
  \bibinfo{number}{6} (\bibinfo{year}{2011}), \bibinfo{pages}{061911}.
\newblock


\bibitem[\protect\citeauthoryear{Sheldon, Dilkina, Elmachtoub, Finseth,
  Sabharwal, Conrad, Gomes, Shmoys, Allen, Amundsen, et~al\mbox{.}}{Sheldon
  et~al\mbox{.}}{2012}]%
        {sheldon2012maximizing}
\bibfield{author}{\bibinfo{person}{Daniel Sheldon}, \bibinfo{person}{Bistra
  Dilkina}, \bibinfo{person}{Adam~N Elmachtoub}, \bibinfo{person}{Ryan
  Finseth}, \bibinfo{person}{Ashish Sabharwal}, \bibinfo{person}{Jon Conrad},
  \bibinfo{person}{Carla~P Gomes}, \bibinfo{person}{David Shmoys},
  \bibinfo{person}{William Allen}, \bibinfo{person}{Ole Amundsen},
  {et~al\mbox{.}}} \bibinfo{year}{2012}\natexlab{}.
\newblock \showarticletitle{Maximizing the spread of cascades using network
  design}.
\newblock \bibinfo{journal}{\emph{arXiv preprint arXiv:1203.3514}}.
\newblock


\bibitem[\protect\citeauthoryear{Tong, Prakash, Eliassi-Rad, Faloutsos, and
  Faloutsos}{Tong et~al\mbox{.}}{2012}]%
        {Tong2012GML}
\bibfield{author}{\bibinfo{person}{Hanghang Tong}, \bibinfo{person}{B.~Aditya
  Prakash}, \bibinfo{person}{Tina Eliassi-Rad}, \bibinfo{person}{Michalis
  Faloutsos}, {and} \bibinfo{person}{Christos Faloutsos}.}
  \bibinfo{year}{2012}\natexlab{}.
\newblock \showarticletitle{Gelling, and Melting, Large Graphs by Edge
  Manipulation}. In \bibinfo{booktitle}{\emph{International conference on
  Information and knowledge management (CIKM)}}. ACM,
  \bibinfo{pages}{245--254}.
\newblock


\bibitem[\protect\citeauthoryear{Tschiatschek, Singla, Gomez~Rodriguez,
  Merchant, and Krause}{Tschiatschek et~al\mbox{.}}{2018}]%
        {Tschiatschek2018}
\bibfield{author}{\bibinfo{person}{Sebastian Tschiatschek},
  \bibinfo{person}{Adish Singla}, \bibinfo{person}{Manuel Gomez~Rodriguez},
  \bibinfo{person}{Arpit Merchant}, {and} \bibinfo{person}{Andreas Krause}.}
  \bibinfo{year}{2018}\natexlab{}.
\newblock \showarticletitle{Fake News Detection in Social Networks via Crowd
  Signals}. In \bibinfo{booktitle}{\emph{Companion Proceedings of the The Web
  Conference}} \emph{(\bibinfo{series}{WWW})}.
  \bibinfo{publisher}{International World Wide Web Conferences Steering
  Committee}, \bibinfo{pages}{517--524}.
\newblock


\bibitem[\protect\citeauthoryear{Vondr{\'{a}}k}{Vondr{\'{a}}k}{2008}]%
        {Vondrak08}
\bibfield{author}{\bibinfo{person}{Jan Vondr{\'{a}}k}.}
  \bibinfo{year}{2008}\natexlab{}.
\newblock \showarticletitle{Optimal approximation for the submodular welfare
  problem in the value oracle model}. In \bibinfo{booktitle}{\emph{Proceedings
  of the 40th Annual {ACM} Symposium on Theory of Computing (STOC)}}.
  \bibinfo{pages}{67--74}.
\newblock


\bibitem[\protect\citeauthoryear{Vondr{\'a}k}{Vondr{\'a}k}{2010}]%
        {vondrak2010submodularity}
\bibfield{author}{\bibinfo{person}{Jan Vondr{\'a}k}.}
  \bibinfo{year}{2010}\natexlab{}.
\newblock \showarticletitle{Submodularity and curvature: The optimal
  algorithm}.
\newblock  (\bibinfo{year}{2010}).
\newblock


\bibitem[\protect\citeauthoryear{Williamson and Shmoys}{Williamson and
  Shmoys}{2011}]%
        {williamson2011design}
\bibfield{author}{\bibinfo{person}{David~P Williamson} {and}
  \bibinfo{person}{David~B Shmoys}.} \bibinfo{year}{2011}\natexlab{}.
\newblock \bibinfo{booktitle}{\emph{The design of approximation algorithms}}.
\newblock \bibinfo{publisher}{Cambridge}.
\newblock


\bibitem[\protect\citeauthoryear{Yadav, Wilder, Rice, Petering, Craddock,
  Yoshioka-Maxwell, Hemler, Onasch-Vera, Tambe, and Woo}{Yadav
  et~al\mbox{.}}{2018}]%
        {yadav2018bridging}
\bibfield{author}{\bibinfo{person}{Amulya Yadav}, \bibinfo{person}{Bryan
  Wilder}, \bibinfo{person}{Eric Rice}, \bibinfo{person}{Robin Petering},
  \bibinfo{person}{Jaih Craddock}, \bibinfo{person}{Amanda Yoshioka-Maxwell},
  \bibinfo{person}{Mary Hemler}, \bibinfo{person}{Laura Onasch-Vera},
  \bibinfo{person}{Milind Tambe}, {and} \bibinfo{person}{Darlene Woo}.}
  \bibinfo{year}{2018}\natexlab{}.
\newblock \showarticletitle{Bridging the Gap Between Theory and Practice in
  Influence Maximization: Raising Awareness about HIV among Homeless Youth.}.
  In \bibinfo{booktitle}{\emph{IJCAI}}. \bibinfo{pages}{5399--5403}.
\newblock


\end{thebibliography}

\end{document}